\renewcommand{\theequation}{\thesection.\arabic{equation}}
\def\thefigure{\arabic{figure}}
\def\thetable{\arabic{table}}
\newtheorem{corollary}{Corollary}
\newtheorem{example}{Example}
\newtheorem{lemma}{Lemma}
\newtheorem{remark}{Remark}
\newtheorem{theorem}{Theorem}
\begin{document}


\title{\Large\bf Expected Weighted D-optimal Designs for Experiments with Mixed Factors}
\author{Siting Lin$^1$, Yifei Huang$^2$, and Jie Yang$^1$\\ 
	$^1$University of Illinois at Chicago and $^2$Astellas Pharma Global Development, Inc.
}

\maketitle


\begin{abstract}
Optimal designs can help experimenters obtain more accurate parameter estimates with reduced experimental time and cost. In this paper, we characterize the Expected Weighted (EW) D-optimal designs as robust designs against unknown parameter values for experiments under a general parametric model with discrete and continuous factors. When a pilot study is available, we recommend sample-based EW D-optimal designs for subsequent experiments. Otherwise, we recommend EW D-optimal designs under a prior distribution for model parameters. We propose an EW ForLion algorithm for finding EW D-optimal designs with mixed factors, and justify that the designs found by our algorithm are EW D-optimal. To facilitate potential users in practice, we also develop a rounding algorithm that converts an approximate design with mixed factors to exact designs with prespecified grid points and the total number of experimental units. By applying our algorithms for real experiments under multinomial logistic models or generalized linear models, we show that our designs are highly efficient with respect to locally D-optimal designs and more robust against parameter value misspecifications.
\end{abstract}

{\it Key words and phrases:}
EW ForLion algorithm, Exact design, Mixed factors, Generalized linear model, Multinomial logistic model, Robust experimental design

\section{Introduction}
\label{sec:intro}

In this paper, we consider robust designs against unknown parameter values for experiments with discrete and continuous factors. A motivating example is the paper feeder experiment described in \cite{joseph2004}. The goal was to ensure precise feeding of one sheet of paper each time to a copier. Two common failure modes are frequently observed, namely  {\tt misfeed} when the feeder fails to feed any sheet of paper, and  {\tt multifeed} when the feeder picks up more than one sheets of paper at a time. The experiment involves eight discrete control factors (see Table~2 in \cite{joseph2004} or Section~\ref{sec:Model_Selection_Paper_Feeder_Experiment} in the Supplementary Material) and one continuous factor, the stack force. The responses fall into three mutually exclusive categories, namely  {\tt misfeed} (typically with low stack force), {\tt normal}, and {\tt multifeed} (typically with high stack force). The results of the original experiment were listed in  Table~3 of \cite{joseph2004}. The original design was conducted via a control array modified from $OA(18, 2^1\times 3^7)$ (see Table~5 in \cite{joseph2004}) for the eight discrete factors and an essentially uniform allocation on a predetermined set of discrete levels of the continuous factor.
If a subsequent experiment will be conducted to study the factor effects, how can we do better?

In the original analysis by \cite{joseph2004}, two generalized linear models \citep{pmcc1989, dobson2018} with probit link were used to model {\tt misfeed} and {\tt multifeed} separately. However, since {\tt misfeed} and {\tt multifeed} do not occur simultaneously and are both possible outcomes of the experiment, it would be more appropriate to adopt one multinomial model rather than two separate binomial models, so that the factor effects can be estimated more precisely due to the combined data and also be comparable for different types of failures. In this study, we adopt the multinomial logistic models \citep{pmcc1995, atkinson1999, bu2020} for analyzing the paper feeder experiment, which include baseline-category (also known as multiclass logistic), cumulative, adjacent-categories, and continuation-ratio logit models. 
In addition, three different parameter structures or assumptions are considered for these four logit models: {\tt proportional odds (po)} assuming that parameters are identical across response categories; {\tt non-proportional odds (npo)} allowing parameters to vary by category; and {\tt partial proportional odds (ppo)} containing both constant and varying parameters across categories (see \cite{bu2020} and references therein).

For experiments with discrete factors only, many numerical algorithms have been proposed for finding optimal designs, including Fedorov-Wynn \citep{fedorov1972, fedorov2025model}, multiplicative \citep{titterington1976,titterington1978,silvey1978}, cocktail \citep{yu2011d}, lift-one \citep{ym2015,ytm2016, bu2020}, as well as classical optimization techniques (see \cite{huang2024forlion} for a review). According to \cite{ymm2016} and \cite{huang2024forlion}, the lift-one algorithm outperforms many commonly used optimization algorithms and obtains D-optimal designs with fewer distinct design points, including Fedorov-Wynn, multiplicative, cocktail, as well as Nelder-Mead, quasi-Newton, and simulated annealing algorithms. 

For experiments with continuous factors only, \cite{yangmin2013} and \cite{harman2020randomized} discretized the continuous factors into grid points and treated the experiments as with discrete factors, and \cite{ai2023locally} incorporated the Fedorov-Wynn algorithm with the lift-one algorithm for continuation-ratio link models. 

For experiments with mixed factors, particle swarm optimization (PSO) algorithms have been applied for D-optimal designs under generalized linear models with binary responses \citep{lukemire2018} and cumulative logit models \citep{lukemire_optimal_2022}, which, however, cannot guarantee the optimal solution to be ever found \citep{kennedy1995particle, poli2007particle}. \cite{harman2021optimal} proposed a grid-exploration method, called the galaxy exploration (GEX), for finding D-optimal designs under linear, generalized linear, and nonlinear models after discretizing the continuous factors. \cite{huang2024forlion} proposed a ForLion algorithm by adding a merging step to \cite{ai2023locally}'s algorithm, which can significantly reduce the number of distinct design points and thus the experimental time and cost. Unlike PSO-types of algorithms, the D-efficiency of the ForLion algorithm is guaranteed by the general equivalence theorem \citep{kiefer1974, pukelsheim1993, atkinson2007,stufken2012, fedorov2014} when the convergence condition is met.

The ForLion algorithm has been successfully applied to both generalized linear models and multinomial logistic models with mixed factors \citep{huang2024forlion}. However, it relies on assumed parameter values and produces designs known as {\it locally optimal} \citep{chernoff1953}, which may not be robust when the parameter values are misspecified.
To address the limitation due to local optimality, Bayesian D-optimal designs have been proposed \citep{chaloner1995}, which maximize $E(\log|{\mathbf F}|)$ with respect to a prior distribution on unknown parameters, where ${\mathbf F}$ is the Fisher information matrix associated with the design. However, a major drawback of Bayesian D-optimal designs is its computational intensity, even with D-criterion that maximizes the determinant of the Fisher information matrix. As a promising alternative, Expected Weighted (EW) D-optimal designs have been proposed  for generalized linear models \citep{ymm2016}, cumulative link models \citep{ytm2016}, and multinomial logistic models \citep{bu2020} with discrete factors, which maximize $\log |E({\mathbf F})|$ or equivalently $|E({\mathbf F})|$. Compared with Bayesian D-optimal designs, EW D-optimal designs are computationally much faster and often highly efficient in terms of the Bayesian D-criterion \citep{ymm2016, ytm2016, bu2020}. 
Along this line, \cite{tong2014} provided analytical solutions for certain cases of D-optimal designs for generalized linear models, and \cite{bu2020} incorporated the lift-one and exchange algorithms with the EW D-criterion.

In this paper, we adopt EW D-criterion and adapt the ForLion algorithm for finding robust designs with mixed factors. The original EW D-optimal designs were proposed for experiments with discrete factors or discretized continuous factors. 
To solve our problems, we formulate in Section~\ref{sec:EW_D_optimal} the EW D-optimal criterion and characterize EW D-optimal designs for general parametric statistical models with mixed factors. In particular, we theoretically justify the existence of EW D-optimal designs that contains no more than $p(p+1)/2$ design points with continuous or mixed factors (see Theorem~\ref{thm:thm 2.2 page 64}), where $p$ is the number of parameters (see also Theorems~\ref{thm:MLM design points} and \ref{thm:GLM design points}). In Section~\ref{sec:Algorithms}, we develop algorithms for finding two types of EW D-optimal designs with mixed factors, namely  sample-based EW D-optimal designs for experiments with a pilot study, and integral-based EW D-optimal designs for experiments with a prior distribution for unknown parameters. We theoretically justify that the designs obtained by our EW ForLion algorithm are EW D-optimal (see Corollary~\ref{cor:EW_D_optiality}) when the convergence condition is met. We further formulate EW D-optimal designs for multinomial logistic models and generalized linear models, and propose an algorithm for converting approximate designs with mixed factors to exact designs with grid points, which is much faster than finding optimal designs restricted on the same set of grid points. In Section~\ref{sec:applications},  we construct the proposed robust designs for two real experiments, namely the paper feeder experiment \citep{joseph2004}, and an experiment on minimizing surface defects \citep{wu2008}. 
In Section~\ref{section:Conclusion and Discussion}, we conclude our findings. 
The algorithms are implemented in R (version 4.4.2, \cite{R-base}) and the simulations are run on a Windows 11 laptop with 32GB of RAM and a 13th Gen Intel Core i7-13700HX processor. An implementation of these algorithms is provided in the R package \texttt{ForLion} (version 0.4.0, \cite{huang2025forlionpackage}).

\section{EW D-optimal Designs with Mixed Factors}
\label{sec:EW_D_optimal}

Following \cite{huang2024forlion}, we first consider experiments with a design region ${\cal X} = \prod_{j=1}^d I_j \subset \mathbb{R}^d$, where $I_j$ is either a closed finite interval for the first $k$ factors, or a finite set of distinct numerical levels for $k+1\leq j\leq d$. In both cases, we denote $a_j = \min I_j > -\infty$ and $b_j = \max I_j < \infty$. If $k=0$, all factors are discrete or qualitative; if $k=d$, all factors are continuous. Note that ${\cal X}$ here is compact and constructed as a product of sets, which is common for typical applications. 

For some applications, e.g., when the levels of some discrete factors are not numerical (see Example~\ref{ex:minimizing_surface_example} below), or the level combinations of discrete factors under consideration are restricted to a true subset of all possible level combinations (see Example~\ref{ex:paper_feeder_example} below), we need to consider more general design regions in the form of ${\cal X} = \prod_{j=1}^k I_j\ \times \ {\cal D}$, where ${\cal D} \subseteq \prod_{j=k+1}^d I_j$~. Apparently, $\prod_{j=1}^d I_j$ is a special case of $\prod_{j=1}^k I_j\ \times\ {\cal D}$ with  ${\cal D} = \prod_{j=k+1}^d I_j$~.

\begin{example}\label{ex:minimizing_surface_example}
A polysilicon deposition process for circuit manufacturing described by \cite{phadke1989} involved one qualitative control factor, {\tt cleaning method} ({\it None} for no cleaning, $CM_2$ for cleaning inside the reactor, and $CM_3$ for cleaning outside the reactor), and five other factors that are continuous in nature (see Table~4.1 in \cite{phadke1989}). Since the levels of cleaning method are not numerical, it is more appropriate to convert it to two (binary) indicator variables $({\mathbf 1}_{CM_2}, {\mathbf 1}_{CM_3})$ with $(0,0), (1,0), (0,1)$ standing for {\it None}, $CM_2$, and $CM_3$, respectively. In this case, the corresponding design region is ${\cal X} = \prod_{j=1}^5 I_j\ \times\ \{(0,0), (1,0), (0,1)\}$, where $I_1, \ldots, I_5$ may be defined as in Table~S1 of the Supplementary Material of \cite{huang2024forlion}. Note that the cleaning method was simplified into a binary factor ($CM_2$ or $CM_3$) in \cite{lukemire_optimal_2022} and \cite{huang2024forlion}.
\hfill{$\Box$}
\end{example}

\begin{example}\label{ex:paper_feeder_example}
In the paper feeder experiment described by \cite{joseph2004}, besides one continuous factor, there are eight discrete or qualitative control factors, including two binary ones and six three-level ones (see Section~\ref{sec:Model_Selection_Paper_Feeder_Experiment} in the Supplementary Material). Instead of considering all $2^2 \times 3^6=2,916$ level combinations, the experimenters adopted 18 out of them modified from an orthogonal array $OA(18, 2^1\times 3^7)$ (see Table~5 in \cite{joseph2004}). In this case, for illustration purposes, we may consider a restricted design region ${\cal X} = [0,160] \times {\cal D}$, where $[0,160]$ is the range of the continuous factor, and ${\cal D} = \{(x_{i2}, \ldots, x_{i9})\mid i=1, \ldots, 18\}$ consists of the original 18 level combinations of eight discrete factors.
\hfill{$\Box$}
\end{example}

Given an experimental setting ${\mathbf x}_i = (x_{i1}, \ldots, x_{id})^T \in {\cal X}$, suppose there are $n_i \geq 0$ experimental units assigned to this experimental setting. Their responses (could be vectors) are assumed to be iid from a parametric model $M({\mathbf x}_i, \boldsymbol{\theta})$, where the parameter vector $\boldsymbol{\theta} \in \boldsymbol{\Theta} \subseteq \mathbb{R}^p$ with the parameter space $\boldsymbol{\Theta}$.  Suppose the responses are independent across different experimental settings. Under regularity conditions (see, e.g., Sections~2.5 and 2.6 in \cite{lehmann1998theory}), the corresponding Fisher information matrix can be written as $\sum_{i=1}^m n_i {\mathbf F}({\mathbf x}_i, \boldsymbol{\theta})$, with respect to the corresponding design $\{({\mathbf x}_i, n_i), i=1, \ldots, m\}$, known as an {\it exact design}, where ${\mathbf F}({\mathbf x}_i, \boldsymbol{\theta})$ is a $p\times p$ matrix, known as the Fisher information at ${\mathbf x}_i$~. In design theory, $\boldsymbol{\xi} = \{({\mathbf x}_i, w_i), i=1, \ldots, m\}$ with $w_i = n_i/n \geq 0$ and $n = \sum_{i=1}^m n_i$~, known as an {\it approximate design}, is often considered first \citep{kiefer1974, pukelsheim1993, atkinson2007}. The collection of all feasible designs is denoted by $\boldsymbol{\Xi} = \{ \{({\mathbf x}_i, w_i), i=1, \ldots, m\} \mid m\geq 1, {\mathbf x}_i \in {\cal X}, w_i\geq 0, \sum_{i=1}^m w_i=1 \}$.

\subsection{EW D-optimal designs}
\label{sec:EW_bootstrapped_EW}

Following \cite{bu2020} and \cite{huang2024forlion}, we adopt the D-criterion for optimal designs. When $\boldsymbol{\theta}$ is given or assumed to be known,  an $\boldsymbol{\xi} \in \boldsymbol{\Xi}$ that maximizes $f(\boldsymbol{\xi}) = |{\mathbf F}(\boldsymbol{\xi}, \boldsymbol{\theta})|$ is
known as a {\it locally D-optimal} design \citep{chernoff1953}, where ${\mathbf F}(\boldsymbol{\xi}, \boldsymbol{\theta}) = \sum_{i=1}^m w_i {\mathbf F}({\mathbf x}_i, \boldsymbol{\theta})$ is the Fisher information matrix associated with  $\boldsymbol{\xi}$. 
The ForLion algorithm was proposed by \cite{huang2024forlion} for finding locally D-optimal  designs with mixed factors.

When $\boldsymbol{\theta}$ is unknown while a prior distribution or probability measure $Q(\cdot)$ on $\boldsymbol{\Theta}$ is available instead, we adopt the EW D-optimality \citep{atkinson2007, ymm2016, ytm2016, bu2020, huang2025constrained} and look for $\boldsymbol{\xi} \in \boldsymbol{\Xi}$ maximizing 
\begin{equation}\label{eq:f_EW}
f_{\rm EW}(\boldsymbol{\xi}) = |E\{{\mathbf F}(\boldsymbol{\xi}, \boldsymbol{\Theta})\}| = \left| \sum_{i=1}^m w_i E\left\{ {\mathbf F}({\mathbf x}_i, \boldsymbol{\Theta})\right\} \right|\ ,
\end{equation}
called an {\it integral-based EW D-optimal} approximate design, where 
\begin{equation}\label{eq:E_F_xi_theta}
E\left\{ {\mathbf F}({\mathbf x}_i, \boldsymbol{\Theta})\right\} = \int_{\boldsymbol{\Theta}} {\mathbf F}({\mathbf x}_i, \boldsymbol{\theta}) Q(d\boldsymbol{\theta})
\end{equation}
is a $p\times p$ matrix after entry-wise expectation with respect to $Q(\cdot)$ on $\boldsymbol{\Theta}$. By replacing $\boldsymbol{\theta}$ in ${\mathbf F}(\boldsymbol{\xi}, \boldsymbol{\theta})$ with $\boldsymbol{\Theta}$, we indicate that the expectation $E\left\{ {\mathbf F}(\boldsymbol{\xi}, \boldsymbol{\Theta})\right\}$ is taken with respect to a random parameter vector in $\boldsymbol{\Theta}$. 

When $\boldsymbol{\theta}$ is unknown but a dataset from a prior or pilot study is available, following \cite{bu2020} and \cite{huang2025constrained}, we bootstrap the original dataset to obtain a set of bootstrapped datasets and their corresponding parameter vectors $\{\hat{\boldsymbol{\theta}}_1, \ldots, \hat{\boldsymbol{\theta}}_B\}$ by fitting the parametric model on each of the bootstrapped datasets. In other words, we may replace the prior distribution $Q(\cdot)$ in \eqref{eq:E_F_xi_theta} with the empirical distribution of $\{\hat{\boldsymbol{\theta}}_1, \ldots, \hat{\boldsymbol{\theta}}_B\}$. That is, we may look for $\boldsymbol{\xi} \in \boldsymbol{\Xi}$ maximizing 
\begin{equation}\label{eq:f_BEW}
f_{\rm SEW}(\boldsymbol{\xi}) = |\hat{E}\{{\mathbf F}(\boldsymbol{\xi}, \boldsymbol{\Theta})\}| = \left| \sum_{i=1}^m w_i \hat{E}\left\{ {\mathbf F}({\mathbf x}_i, \boldsymbol{\Theta})\right\} \right|\ ,
\end{equation}
called a {\it sample-based EW D-optimal} approximate design, where 
\begin{equation}\label{eq:hat_E_F_xi_theta}
\hat{E}\left\{ {\mathbf F}({\mathbf x}_i, \boldsymbol{\Theta})\right\} = \frac{1}{B} \sum_{j=1}^B {\mathbf F}({\mathbf x}_i, \hat{\boldsymbol{\theta}}_j)
\end{equation}
is a bootstrapped estimate of $E\left\{ {\mathbf F}({\mathbf x}_i, \boldsymbol{\Theta})\right\}$ based on $\{\hat{\boldsymbol{\theta}}_1, \ldots, \hat{\boldsymbol{\theta}}_B\}$. According to \cite{bu2020}, the EW D-optimal design obtained via bootstrapped samples is a good approximation to the Bayesian D-optimal design obtained in a similar way.

For multinomial logistic models, such as cumulative logit models, the feasible parameter space $\boldsymbol\Theta$ may not be rectangular (see Example~\ref{ex:trauma_clinical_trial} below), which makes the integral in \eqref{eq:E_F_xi_theta} difficult to calculate. In that case, even with a prior distribution instead of a dataset, we may also draw a random sample $\{\hat{\boldsymbol{\theta}}_1, \ldots, \hat{\boldsymbol{\theta}}_B\}$ from the prior distribution, and adopt the sample-based EW D-optimality. 

\begin{example}\label{ex:trauma_clinical_trial}
A trauma clinical trial was originally described by \cite{chuang1997} and analyzed under a cumulative logit model with non-proportional odds \citep{agresti2010,bu2020}. According to Example~5.2 in \cite{bu2020}, the feasible parameter space $\boldsymbol\Theta = \{ \boldsymbol\theta = (\beta_{11}, \beta_{12}, \beta_{21}, \beta_{22}, \beta_{31}, \beta_{32}, \beta_{41},  \beta_{42})^T \in \mathbb{R}^8 \mid \beta_{11} + \beta_{12} x < \beta_{21} + \beta_{22} x < \beta_{31} + \beta_{32} x < \beta_{41} + \beta_{42} x, \mbox{ for all }x\in \{1,2,3,4\}\}$, which is not rectangular. Suppose a multivariate normal distribution $N_8(\hat{\boldsymbol{\mu}}, \hat{\boldsymbol{\Sigma}})$ restricted to $\boldsymbol{\Theta}$ is adopted as the prior distribution, where $\hat{\boldsymbol{\mu}}, \hat{\boldsymbol{\Sigma}}$ are estimated from a previous study \citep{atkinson1999}. We may use the following rejection sampling procedure (see, e.g., Section~6.2.3 in \cite{givens2012computational}, for more general scenarios) to draw a random sample: {\it (i)} Sample $\boldsymbol{\theta} \sim N_8(\hat{\boldsymbol{\mu}}, \hat{\boldsymbol{\Sigma}})$; {\it (ii)} keep $\boldsymbol{\theta}$ only if $\boldsymbol{\theta} \in \boldsymbol{\Theta}$; and {\it (iii)} repeat the previous steps until having accumulated a sample of the desired size.
\hfill{$\Box$}
\end{example}

According to the two simulation studies in Section~\ref{sec:robustness_sampled_based} of the Supplementary Material, in terms of relative efficiency, the sample-based EW D-optimal designs are fairly stable against the choice of random sample or set of bootstrapped parameter vectors.

\subsection{Characteristics of EW D-optimal designs}
\label{sec:theoretical_justification}

In this paper, the design region ${\cal X}$ takes the general form of $\prod_{j=1}^k I_j\ \times \ {\cal D}$. Whenever the number of continuous factors $k\geq 1$, ${\cal X}$ contains infinitely many design points, which is different in nature from the experiments considered by \cite{ymm2016, ytm2016} and \cite{bu2020}. Both theoretical justifications and numerical algorithms for EW D-optimal designs here are far more difficult. 

In this section, to characterize EW D-optimal designs, we fix either a prior distribution $Q(\cdot)$ on $\boldsymbol{\Theta}$ for integral-based EW D-optimality, or a set of bootstrapped or sampled parameter vectors $\{\hat{\boldsymbol{\theta}}_1, \ldots, \hat{\boldsymbol{\theta}}_B\}$ for sample-based EW D-optimality, so that either $E\left\{ {\mathbf F}({\mathbf x}_i, \boldsymbol{\Theta})\right\}$ is defined as in \eqref{eq:E_F_xi_theta}, or $\hat{E}\left\{ {\mathbf F}({\mathbf x}_i, \boldsymbol{\Theta})\right\}$ is defined as in \eqref{eq:hat_E_F_xi_theta}.
Similarly to \cite{fedorov2014} and \cite{huang2024forlion},  we list below the needed assumptions: 

\begin{itemize}
    \item[(A1)] The design region ${\cal X} \subset \mathbb{R}^d$ is compact.    
    \item[(A2)] For each $\boldsymbol{\theta} \in \boldsymbol{\Theta}$, the Fisher information ${\mathbf F}({\mathbf x}, \boldsymbol{\theta})$ is element-wise continuous with respect to all continuous factors of ${\mathbf x}\in {\cal X}$.    
    \item[(A3)] For each $\boldsymbol{\theta} \in \boldsymbol{\Theta}$, the Fisher information $\mathbf{F}(\mathbf{x}, \boldsymbol\theta) = (F_{st}({\mathbf x}, \boldsymbol{\theta}))_{s,t=1,\ldots, p}$ is element-wise continuous with respect to all continuous factors of ${\mathbf x}\in {\cal X}$, and there exists a nonnegative and integrable function $K(\boldsymbol\theta)$ on $\boldsymbol{\Theta}$, such that, $\int_{\boldsymbol{\Theta}} K(\boldsymbol{\theta}) Q(d\boldsymbol{\theta}) < \infty$, and $|F_{st}(\mathbf{x}, \boldsymbol\theta)| \leq K(\boldsymbol\theta)$, for all $s,t\in \{1, \ldots, p\}$, $\mathbf{x} \in \mathcal{X}$, and $\boldsymbol{\theta}\in \boldsymbol{\Theta}$.
\end{itemize}

In this paper, ${\cal X}$ can be $\prod_{j=1}^d I_j$ with $k=d$,  $\prod_{j=1}^k I_j\ \times {\cal D}$ with $1\leq k\leq d-1$, or ${\cal D} \subseteq \prod_{j=k+1}^d I_j$ with $k=0$. In each case, ${\cal X}$ is bounded and closed in $\mathbb{R}^d$ and thus Assumption~(A1) is satisfied.
Apparently, (A3) always implies (A2). Assumption~(A2) is for sample-based EW D-optimality, while (A3) is for integral-based EW D-optimality. If $k=0$, that is, all factors are discrete, (A2) is automatically satisfied.

To characterize the sample-based and integral-based D-optimal EW designs, we follow \cite{fedorov2014} and extend the collection $\boldsymbol{\Xi}$ of designs (each design consists only of a finite number of design points) to $\boldsymbol{\Xi}({\cal X})$, which consists of all probability measures on ${\cal X}$. In other words, a design $\boldsymbol{\xi} \in \boldsymbol{\Xi}({\cal X})$ is also a probability measure on ${\cal X}$, and for each $\boldsymbol{\theta} \in \boldsymbol{\Theta}$,
\[
{\mathbf F}(\boldsymbol{\xi}, \boldsymbol{\theta}) = \int_{\cal X} {\mathbf F}({\mathbf x}, \boldsymbol{\theta})\boldsymbol{\xi}(d{\mathbf x})\ .
\]
Then for the sample-based EW criterion and $\boldsymbol{\xi} \in \boldsymbol{\Xi}({\cal X})$, we denote
\begin{equation}\label{eq:F_SEW_xi}
    {\mathbf F}_{\rm SEW}(\boldsymbol{\xi}) = \hat{E}\{{\mathbf F}(\boldsymbol{\xi}, \boldsymbol{\Theta})\} = \frac{1}{B} \sum_{j=1}^B {\mathbf F}(\boldsymbol{\xi}, \hat{\boldsymbol{\theta}}_j) = \int_{\cal X} \hat{E}\{{\mathbf F}({\mathbf x}, \boldsymbol{\Theta})\} \boldsymbol{\xi}(d{\mathbf x})\ .
\end{equation}
For the integral-based EW criterion, under Assumption~(A3) and Fubini Theorem (see, e.g., Theorem~5.9.2 in \cite{resnick2003probability}), we denote
\begin{eqnarray}
{\mathbf F}_{\rm EW}(\boldsymbol{\xi}) &=& E\{{\mathbf F}(\boldsymbol{\xi}, \boldsymbol{\Theta})\} = \int_{\boldsymbol{\Theta}} {\mathbf F}(\boldsymbol{\xi}, \boldsymbol{\theta}) Q(d\boldsymbol{\theta}) \nonumber\\
&=& \int_{\boldsymbol{\Theta}} \int_{\cal X} {\mathbf F}({\mathbf x}, \boldsymbol{\theta})\boldsymbol{\xi}(d{\mathbf x}) Q(d\boldsymbol{\theta})\nonumber\\
&=& \int_{\cal X} \int_{\boldsymbol{\Theta}} {\mathbf F}({\mathbf x}, \boldsymbol{\theta}) Q(d\boldsymbol{\theta}) \boldsymbol{\xi}(d{\mathbf x})\>\>\>\>\>\mbox{(by Fubini Theorem)}\nonumber\\
&=& \int_{\cal X} E\{{\mathbf F}({\mathbf x}, \boldsymbol{\Theta})\} \boldsymbol{\xi}(d{\mathbf x})\ .\label{eq:F_EW_xi}
\end{eqnarray}
Furthermore, we denote the collections of Fisher information matrices, \[{\cal F}_{\rm SEW}({\cal X}) = \{{\mathbf F}_{\rm SEW}(\boldsymbol{\xi}) \mid \boldsymbol{\xi} \in \boldsymbol{\Xi}({\cal X})\}\] and 
\[{\cal F}_{\rm EW}({\cal X}) = \{{\mathbf F}_{\rm EW}(\boldsymbol{\xi}) \mid \boldsymbol{\xi} \in \boldsymbol{\Xi}({\cal X})\}\ .\]
As a summary of the arguments in Section~\ref{sec:Assumptions_and_proof} of the Supplementary Material, we have the following lemma:

\begin{lemma}\label{lem:compactness}
Under Assumptions~(A1) and (A2), ${\cal F}_{\rm SEW}({\cal X})$ is convex and compact; while under Assumptions~(A1) and (A3), ${\cal F}_{\rm EW}({\cal X})$ is convex and compact.    
\end{lemma}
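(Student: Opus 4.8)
\emph{Proof proposal.}
The plan is to handle both claims uniformly by observing that, in each case, the map $\boldsymbol{\xi}\mapsto{\mathbf F}_{\rm SEW}(\boldsymbol{\xi})$ (resp. $\boldsymbol{\xi}\mapsto{\mathbf F}_{\rm EW}(\boldsymbol{\xi})$) is the barycenter map $\boldsymbol{\xi}\mapsto\int_{\cal X}g({\mathbf x})\,\boldsymbol{\xi}(d{\mathbf x})$ for a fixed matrix-valued function $g$ on ${\cal X}$, namely $g({\mathbf x})=\hat{E}\{{\mathbf F}({\mathbf x},\boldsymbol{\Theta})\}$ in the sample-based case and $g({\mathbf x})=E\{{\mathbf F}({\mathbf x},\boldsymbol{\Theta})\}$ in the integral-based case, as in \eqref{eq:F_SEW_xi} and \eqref{eq:F_EW_xi}. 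Once we know that $g$ is element-wise continuous and bounded on ${\cal X}$, convexity and compactness of its image over all probability measures follow from standard facts about probability measures on a compact metric space.

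First I would note that, by (A1), ${\cal X}$ is a closed bounded subset of $\mathbb{R}^d$ and hence a compact metric space, and that in either form of ${\cal X}$ the discrete coordinates range over a finite set, so every convergent sequence in ${\cal X}$ is eventually constant in those coordinates. Consequently, element-wise continuity of ${\mathbf F}(\cdot,\boldsymbol{\theta})$ in the continuous factors (Assumption (A2) in the sample-based case, the continuity part of (A3) in the integral-based case) actually gives element-wise continuity of ${\mathbf F}(\cdot,\boldsymbol{\theta})$ on all of ${\cal X}$, for each fixed $\boldsymbol{\theta}$. In the sample-based case $g=\frac{1}{B}\sum_{j=1}^B{\mathbf F}(\cdot,\hat{\boldsymbol{\theta}}_j)$ is then a finite average of functions continuous on the compact set ${\cal X}$, hence continuous and bounded. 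In the integral-based case I would use the dominated convergence theorem: for ${\mathbf x}_n\to{\mathbf x}$ in ${\cal X}$, we have $F_{st}({\mathbf x}_n,\boldsymbol{\theta})\to F_{st}({\mathbf x},\boldsymbol{\theta})$ for every $\boldsymbol{\theta}$ while $|F_{st}({\mathbf x}_n,\boldsymbol{\theta})|\le K(\boldsymbol{\theta})$ with $\int_{\boldsymbol{\Theta}}K(\boldsymbol{\theta})\,Q(d\boldsymbol{\theta})<\infty$ by (A3), so $E\{F_{st}({\mathbf x}_n,\boldsymbol{\Theta})\}\to E\{F_{st}({\mathbf x},\boldsymbol{\Theta})\}$; the same bound gives $|E\{F_{st}({\mathbf x},\boldsymbol{\Theta})\}|\le\int_{\boldsymbol{\Theta}}K(\boldsymbol{\theta})\,Q(d\boldsymbol{\theta})$ uniformly in ${\mathbf x}$. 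Well-definedness of these entries and the interchange of integrals behind \eqref{eq:F_EW_xi} were already settled in the excerpt via Fubini.

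Granting that $g$ is continuous and bounded, convexity of ${\cal F}_{\rm SEW}({\cal X})$ (resp. ${\cal F}_{\rm EW}({\cal X})$) is immediate: for $\boldsymbol{\xi}_1,\boldsymbol{\xi}_2\in\boldsymbol{\Xi}({\cal X})$ and $\alpha\in[0,1]$, the mixture $\alpha\boldsymbol{\xi}_1+(1-\alpha)\boldsymbol{\xi}_2$ is again a probability measure on ${\cal X}$, and linearity of integration gives $\int g\,d(\alpha\boldsymbol{\xi}_1+(1-\alpha)\boldsymbol{\xi}_2)=\alpha\int g\,d\boldsymbol{\xi}_1+(1-\alpha)\int g\,d\boldsymbol{\xi}_2$. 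For compactness, I would invoke that the set $\boldsymbol{\Xi}({\cal X})$ of Borel probability measures on the compact metric space ${\cal X}$ is compact in the topology of weak convergence (Prokhorov's theorem; equivalently Banach--Alaoglu on the unit ball of $C({\cal X})^{*}$), together with the fact that the barycenter map is continuous for that topology, since each of its $p^2$ entries $\boldsymbol{\xi}\mapsto\int_{\cal X}g_{st}\,d\boldsymbol{\xi}$ is, by definition of weak convergence, continuous for the bounded continuous integrand $g_{st}$. Thus ${\cal F}_{\rm SEW}({\cal X})$ and ${\cal F}_{\rm EW}({\cal X})$ are continuous images of a compact set, hence compact. (Alternatively, one can identify ${\cal F}_{\rm SEW}({\cal X})$ with the convex hull of the compact set $g({\cal X})\subset\mathbb{R}^{p\times p}$, which is already compact by Carath\'{e}odory's theorem in finite dimension, yielding convexity and compactness simultaneously.)

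I expect the only genuine subtlety to lie in the integral-based case, where the single integrable dominating function $K$ from (A3) must serve two purposes at once: it drives the dominated-convergence argument that makes ${\mathbf x}\mapsto E\{{\mathbf F}({\mathbf x},\boldsymbol{\Theta})\}$ continuous, and it provides the uniform bound that keeps all the matrices ${\mathbf F}_{\rm EW}(\boldsymbol{\xi})$ within a fixed bounded region; by contrast the sample-based case is elementary because the relevant expectation is a finite average of continuous functions. The reduction of continuity on ${\cal X}$ to continuity in the continuous coordinates, via finiteness of ${\cal D}$, is routine but deserves to be stated since ${\cal X}$ need be neither convex nor connected.
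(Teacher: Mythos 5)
Your proposal is correct and follows essentially the same route as the paper: continuity of ${\mathbf x}\mapsto \hat{E}\{{\mathbf F}({\mathbf x},\boldsymbol{\Theta})\}$ as a finite average and of ${\mathbf x}\mapsto E\{{\mathbf F}({\mathbf x},\boldsymbol{\Theta})\}$ via dominated convergence with the dominating function $K$ from (A3), convexity from linearity of $\boldsymbol{\xi}\mapsto{\mathbf F}(\boldsymbol{\xi},\boldsymbol{\theta})$, and compactness from weak-$*$ compactness of $\boldsymbol{\Xi}({\cal X})$ together with continuity and boundedness of the barycenter map. Your parenthetical Carath\'{e}odory/convex-hull alternative is in fact the device the paper reserves for the proof of Theorem~1 rather than for this lemma.
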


With the aid of Lemma~\ref{lem:compactness} and Carath\'{e}odory's Theorem (see, e.g., Theorem~2.1.1 in \cite{fedorov1972}), we obtain the following theorem, whose proof, as well as proofs for other theorems, is relegated to Section~\ref{sec:Proofs for the theorems} in the Supplementary Material.

\begin{theorem}\label{thm:thm_2_1_n0} If Assumptions~(A1) and (A2) hold, then for any $p\times p$ matrix ${\mathbf F} \in {\cal F}_{\rm SEW}({\cal X})$, there exists a design $\boldsymbol{\xi} \in \boldsymbol{\Xi} \subset \boldsymbol{\Xi}({\cal X})$ with no more than $m = p(p+1)/2+1$ points, such that ${\mathbf F}_{\rm SEW}(\boldsymbol{\xi}) = {\mathbf F}$. If ${\mathbf F}$ is a boundary point of the convex set ${\cal F}_{\rm SEW}({\cal X})$, then we need no more than $p(p+1)/2$ support points for $\boldsymbol{\xi}$. If Assumptions~(A1) and (A3) are satisfied, then the same conclusions hold for ${\cal F}_{\rm EW}({\cal X})$ and ${\mathbf F}_{\rm EW}(\boldsymbol{\xi})$ as well. 
\end{theorem}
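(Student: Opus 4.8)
The plan is to recognize $\mathcal{F}_{\rm SEW}(\mathcal{X})$ (resp.\ $\mathcal{F}_{\rm EW}(\mathcal{X})$) as the convex hull of a compact subset of a finite-dimensional space and then invoke Carath\'{e}odory's Theorem. First I would fix the ambient space: since every Fisher information matrix ${\mathbf F}({\mathbf x},\boldsymbol{\theta})$ is symmetric, the averaged matrices $\hat{E}\{{\mathbf F}({\mathbf x},\boldsymbol{\Theta})\}$ and $E\{{\mathbf F}({\mathbf x},\boldsymbol{\Theta})\}$ lie in the real vector space of symmetric $p\times p$ matrices, of dimension $N := p(p+1)/2$. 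Define the ``one-point'' image sets $\mathcal{G}_{\rm SEW} = \{\hat{E}\{{\mathbf F}({\mathbf x},\boldsymbol{\Theta})\} : {\mathbf x}\in\mathcal{X}\}$ and $\mathcal{G}_{\rm EW} = \{E\{{\mathbf F}({\mathbf x},\boldsymbol{\Theta})\} : {\mathbf x}\in\mathcal{X}\}$. Under (A1)+(A2), ${\mathbf x}\mapsto \hat{E}\{{\mathbf F}({\mathbf x},\boldsymbol{\Theta})\} = B^{-1}\sum_{j=1}^B {\mathbf F}({\mathbf x},\hat{\boldsymbol{\theta}}_j)$ is a finite sum of maps continuous on $\mathcal{X}$ (continuity in the continuous factors for each of the finitely many combinations of the discrete levels, which are isolated points of $\mathbb{R}$), so $\mathcal{G}_{\rm SEW}$ is a continuous image of the compact set $\mathcal{X}$ and hence compact; under (A1)+(A3), a dominated-convergence argument with the integrable envelope $K(\boldsymbol{\theta})$ gives continuity of ${\mathbf x}\mapsto E\{{\mathbf F}({\mathbf x},\boldsymbol{\Theta})\}$, so $\mathcal{G}_{\rm EW}$ is compact too. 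By \eqref{eq:F_SEW_xi} and \eqref{eq:F_EW_xi}, a finite-support design $\boldsymbol{\xi} = \{({\mathbf x}_i,w_i)\}_{i=1}^m\in\boldsymbol{\Xi}$ gives ${\mathbf F}_{\rm SEW}(\boldsymbol{\xi}) = \sum_i w_i\,\hat{E}\{{\mathbf F}({\mathbf x}_i,\boldsymbol{\Theta})\}$, so $\{{\mathbf F}_{\rm SEW}(\boldsymbol{\xi}) : \boldsymbol{\xi}\in\boldsymbol{\Xi}\}$ is exactly $\mathrm{conv}(\mathcal{G}_{\rm SEW})$; and since a general $\boldsymbol{\xi}\in\boldsymbol{\Xi}(\mathcal{X})$ has ${\mathbf F}_{\rm SEW}(\boldsymbol{\xi})$ equal to the barycenter of the pushforward of $\boldsymbol{\xi}$, which lies in the closed convex hull of the compact set $\mathcal{G}_{\rm SEW}$ (closed because $\mathrm{conv}$ of a compact set in finite dimensions is compact), we get $\mathcal{F}_{\rm SEW}(\mathcal{X}) = \mathrm{conv}(\mathcal{G}_{\rm SEW})$, consistent with Lemma~\ref{lem:compactness}; similarly $\mathcal{F}_{\rm EW}(\mathcal{X}) = \mathrm{conv}(\mathcal{G}_{\rm EW})$.

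Next I would apply Carath\'{e}odory's Theorem (Theorem~2.1.1 in \cite{fedorov1972}) in the $N$-dimensional space of symmetric matrices: any ${\mathbf F}\in\mathrm{conv}(\mathcal{G}_{\rm SEW})$ is a convex combination ${\mathbf F} = \sum_{i=1}^m w_i{\mathbf G}_i$ with ${\mathbf G}_i\in\mathcal{G}_{\rm SEW}$, $w_i>0$, $\sum_i w_i = 1$, and $m\le N+1 = p(p+1)/2+1$. Writing ${\mathbf G}_i = \hat{E}\{{\mathbf F}({\mathbf x}_i,\boldsymbol{\Theta})\}$ for some ${\mathbf x}_i\in\mathcal{X}$ (discarding repeats) produces $\boldsymbol{\xi} = \{({\mathbf x}_i,w_i):i=1,\ldots,m\}\in\boldsymbol{\Xi}\subset\boldsymbol{\Xi}(\mathcal{X})$ with ${\mathbf F}_{\rm SEW}(\boldsymbol{\xi}) = {\mathbf F}$. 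For the boundary refinement, if ${\mathbf F}$ lies on the boundary of the compact convex set $\mathcal{F}_{\rm SEW}(\mathcal{X})$, there is a supporting hyperplane at ${\mathbf F}$, given by a linear functional $\ell$ with $\ell({\mathbf G})\le\ell({\mathbf F})$ for all ${\mathbf G}\in\mathcal{F}_{\rm SEW}(\mathcal{X})\supseteq\mathcal{G}_{\rm SEW}$; then in any representation $\sum_i w_i\ell({\mathbf G}_i)=\ell({\mathbf F})$ with $w_i>0$, $\sum_i w_i=1$ forces $\ell({\mathbf G}_i)=\ell({\mathbf F})$ for all $i$, so ${\mathbf F}$ lies in the convex hull of $\mathcal{G}_{\rm SEW}\cap H$, where $H=\{{\mathbf G}:\ell({\mathbf G})=\ell({\mathbf F})\}$ is an affine hyperplane of dimension $N-1$; Carath\'{e}odory within $H$ then yields $m\le N = p(p+1)/2$. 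The EW case is verbatim the same with $\mathcal{G}_{\rm EW}$, ${\mathbf F}_{\rm EW}$, and Assumption~(A3) replacing (A2).

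I expect the only genuinely delicate point to be the compactness of $\mathcal{G}_{\rm SEW}$ and $\mathcal{G}_{\rm EW}$: in the SEW case this is a finite sum of maps continuous on the compact mixed-factor region $\mathcal{X}$, whereas in the EW case one must push a limit through the integral over $\boldsymbol{\Theta}$ using the integrable dominating function $K(\boldsymbol{\theta})$ from (A3) — the same ingredients that underlie Lemma~\ref{lem:compactness}, so I would reuse that lemma directly rather than re-deriving compactness of $\mathcal{F}_{\rm SEW}(\mathcal{X})$ and $\mathcal{F}_{\rm EW}(\mathcal{X})$. Everything else — the identification of $\mathcal{F}_{\rm SEW}(\mathcal{X})$ with a convex hull, the dimension count $N=p(p+1)/2$, and the interior versus boundary Carath\'{e}odory bounds — is routine.
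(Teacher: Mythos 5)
Your proposal is correct and takes essentially the same route as the paper's proof: both identify ${\cal F}_{\rm SEW}({\cal X})$ (resp.\ ${\cal F}_{\rm EW}({\cal X})$) with the convex hull of the one-point image set $\{\hat{E}\{{\mathbf F}({\mathbf x},\boldsymbol{\Theta})\}\mid {\mathbf x}\in{\cal X}\}$ inside the $p(p+1)/2$-dimensional space of symmetric matrices, invoke Lemma~\ref{lem:compactness} for compactness, and apply Carath\'{e}odory's Theorem with the standard supporting-hyperplane refinement for boundary points. You simply spell out details (the barycenter argument for general probability measures and the hyperplane step) that the paper delegates to Pukelsheim's Section~1.26 and Fedorov's Theorem~2.1.1.
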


To characterize D-optimality for EW designs, we define the objective function $\Psi({\mathbf F}) = -\log |{\mathbf F}|$ for ${\mathbf F} \in {\cal F}_{\rm SEW}({\cal X})$ or ${\cal F}_{\rm EW}({\cal X})$. Then $f_{\rm SEW}(\boldsymbol{\xi}) = \exp\{-\Psi(\hat{E}\{{\mathbf F}(\boldsymbol{\xi}, \boldsymbol{\Theta})\})\}$, and $f_{\rm EW}(\boldsymbol{\xi}) = \exp\{-\Psi(E\{{\mathbf F}(\boldsymbol{\xi}, \boldsymbol{\Theta})\})\}$. Note that minimizing $\Psi({\mathbf F})$ for ${\mathbf F} \in {\cal F}_{\rm SEW}({\cal X})$ or ${\cal F}_{\rm EW}({\cal X})$ is equivalent to maximizing $f_{\rm SEW}(\boldsymbol{\xi})$ or $f_{\rm EW}(\boldsymbol{\xi})$ for $\boldsymbol{\xi} \in \boldsymbol{\Xi}$, respectively, due to Theorem~\ref{thm:thm_2_1_n0}.

For D-optimality, according to Section~2.4.2 in \cite{fedorov2014}, $\Psi$ always satisfies their Assumptions~(B1), (B2), and (B4). In our notations, we let $\boldsymbol{\Xi}(q)$ denote $\{\boldsymbol{\xi} \in \boldsymbol{\Xi} \mid |\hat{E}\{{\mathbf F}(\boldsymbol{\xi}, \boldsymbol{\Theta})\}|\geq q\}$ for sample-based EW D-optimality, or $\{\boldsymbol{\xi} \in \boldsymbol{\Xi} \mid |E\{{\mathbf F}(\boldsymbol{\xi}, \boldsymbol{\Theta})\}|\geq q\}$ for integral-based EW D-optimality. We still need the following assumption:

\begin{itemize}
    \item[(B3)] There exists a $q>0$, such that $\boldsymbol{\Xi}(q)$ is non-empty.
\end{itemize}

\begin{theorem}\label{thm:thm 2.2 page 64} For sample-based EW D-optimality under Assumptions~(A1), (A2), and (B3), or integral-based EW D-optimality under Assumptions~(A1), (A3), and (B3), we must have (i) there exists an optimal design $\boldsymbol{\xi}^*$ that contains no more than $p(p+1)/2$ design points; (ii) the set of optimal designs is convex; and (iii) a design $\boldsymbol{\xi}$ is EW D-optimal if and only if $\max_{\mathbf{x}\in\mathcal{X}} d(\mathbf{x},\boldsymbol{\xi}) \leq p$, where $d(\mathbf{x},\boldsymbol{\xi}) = {\rm tr}([E\{{\mathbf F}(\boldsymbol{\xi}, \boldsymbol{\Theta})\}]^{-1} E\{{\mathbf F}({\mathbf x}, \boldsymbol{\Theta})\})$ for integral-based EW D-optimality, or ${\rm tr}([\hat{E}\{{\mathbf F}(\boldsymbol{\xi}, \boldsymbol{\Theta})\}]^{-1} \hat{E}\{{\mathbf F}({\mathbf x}, \boldsymbol{\Theta})\})$ for sample-based EW D-optimality.
\end{theorem}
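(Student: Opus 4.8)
The plan is to recognize Theorem~\ref{thm:thm 2.2 page 64} as an instance of the general equivalence theory for concave design criteria (the Kiefer--Wolfowitz framework in the abstract form of \cite{fedorov2014}), applied not to the usual information set $\{{\mathbf F}(\boldsymbol\xi,\boldsymbol\theta)\mid\boldsymbol\xi\in\boldsymbol\Xi({\cal X})\}$ but to the averaged set ${\cal F}_{\rm SEW}({\cal X})$ (resp.\ ${\cal F}_{\rm EW}({\cal X})$). The key structural facts that make this work have already been assembled: by Lemma~\ref{lem:compactness} these sets are convex and compact; by Theorem~\ref{thm:thm_2_1_n0} every matrix in them is realized by a finitely supported design with at most $p(p+1)/2+1$ support points (at most $p(p+1)/2$ on the boundary); and by \eqref{eq:F_SEW_xi}--\eqref{eq:F_EW_xi} the map $\boldsymbol\xi\mapsto {\mathbf F}_{\rm SEW}(\boldsymbol\xi)$ (resp.\ ${\mathbf F}_{\rm EW}(\boldsymbol\xi)$) is \emph{linear} in $\boldsymbol\xi$, since the averaging commutes with the integral over ${\cal X}$ --- this linearity is what lets the whole machinery for ordinary designs transfer verbatim with the ``design point information matrix'' $\widehat E\{{\mathbf F}({\mathbf x},\boldsymbol\Theta)\}$ (resp.\ $E\{{\mathbf F}({\mathbf x},\boldsymbol\Theta)\}$) playing the role of ${\mathbf F}({\mathbf x},\boldsymbol\theta)$.

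Concretely, I would proceed in the following steps. (1) Fix the sample-based case (the integral-based case is identical, replacing $\widehat E$ by $E$ and invoking (A3) instead of (A2), with the Fubini identity \eqref{eq:F_EW_xi} justifying the linear representation). Set $\Psi({\mathbf F})=-\log|{\mathbf F}|$ on the set of positive-definite matrices. Verify that on ${\cal F}_{\rm SEW}({\cal X})$ the criterion $\Psi$ is convex, and that assumptions (B1), (B2), (B4) of \cite{fedorov2014} hold for $\Psi$ (this is quoted in the excerpt as automatic for D-optimality), while (B3) is exactly our added hypothesis ensuring $\boldsymbol\Xi(q)$ nonempty for some $q>0$, so that a maximizer of $f_{\rm SEW}$ over $\boldsymbol\Xi$ lies in the nonsingular region and existence is not vacuous. (2) For part~(i): a minimizer of $\Psi$ over the compact convex set ${\cal F}_{\rm SEW}({\cal X})$ exists; because $\Psi$ is strictly convex in $\det$ along the relevant directions and $|{\mathbf F}|$ is maximized there, the optimal $\mathbf F^*$ must be a boundary point of ${\cal F}_{\rm SEW}({\cal X})$ (any interior point could be pushed to strictly larger determinant along a ray), so Theorem~\ref{thm:thm_2_1_n0} gives an optimal design with at most $p(p+1)/2$ support points. (3) For part~(ii): the set of optimal information matrices is the preimage of $\min\Psi$ under a convex function on a convex set, hence convex; pulling back through the linear map $\boldsymbol\xi\mapsto{\mathbf F}_{\rm SEW}(\boldsymbol\xi)$ shows the set of optimal designs is convex. (4) For part~(iii), the equivalence theorem: compute the directional (Gateaux) derivative of $\boldsymbol\xi\mapsto\log|{\mathbf F}_{\rm SEW}(\boldsymbol\xi)|$ at an optimal $\boldsymbol\xi^*$ in the direction of the point mass $\delta_{\mathbf x}$; by the standard identity $\frac{d}{d\alpha}\log|(1-\alpha){\mathbf F}_{\rm SEW}(\boldsymbol\xi^*)+\alpha\,\widehat E\{{\mathbf F}({\mathbf x},\boldsymbol\Theta)\}|\big|_{\alpha=0}={\rm tr}\big([\widehat E\{{\mathbf F}(\boldsymbol\xi^*,\boldsymbol\Theta)\}]^{-1}\widehat E\{{\mathbf F}({\mathbf x},\boldsymbol\Theta)\}\big)-p=d({\mathbf x},\boldsymbol\xi^*)-p$. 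Optimality of $\boldsymbol\xi^*$ over the convex set $\boldsymbol\Xi({\cal X})$ forces this derivative to be $\le 0$ for every ${\mathbf x}\in{\cal X}$, i.e.\ $\max_{\mathbf x} d({\mathbf x},\boldsymbol\xi^*)\le p$; conversely, if $\max_{\mathbf x} d({\mathbf x},\boldsymbol\xi)\le p$ then for any competing $\boldsymbol\eta\in\boldsymbol\Xi({\cal X})$, ${\rm tr}([\widehat E\{{\mathbf F}(\boldsymbol\xi,\boldsymbol\Theta)\}]^{-1}\widehat E\{{\mathbf F}(\boldsymbol\eta,\boldsymbol\Theta)\})=\int d({\mathbf x},\boldsymbol\xi)\,\boldsymbol\eta(d{\mathbf x})\le p$, and concavity of $\log\det$ gives $\log|{\mathbf F}_{\rm SEW}(\boldsymbol\eta)|\le\log|{\mathbf F}_{\rm SEW}(\boldsymbol\xi)|+{\rm tr}(\cdots)-p\le\log|{\mathbf F}_{\rm SEW}(\boldsymbol\xi)|$, so $\boldsymbol\xi$ is optimal; finally the standard averaging argument $\int d({\mathbf x},\boldsymbol\xi^*)\,\boldsymbol\xi^*(d{\mathbf x})=p$ confirms equality is attained on the support.

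The main obstacle --- and the reason the theorem is not merely a citation --- is justifying that the classical equivalence-theorem apparatus, normally stated for $\{{\mathbf F}(\boldsymbol\xi,\boldsymbol\theta)\}$ with a \emph{fixed} $\boldsymbol\theta$ and a design-point matrix ${\mathbf F}({\mathbf x},\boldsymbol\theta)$ that is continuous in ${\mathbf x}$, applies to the \emph{averaged} objects. Three points need care: first, that ${\mathbf F}_{\rm SEW}(\boldsymbol\xi)=\int_{\cal X}\widehat E\{{\mathbf F}({\mathbf x},\boldsymbol\Theta)\}\,\boldsymbol\xi(d{\mathbf x})$ genuinely exhibits the averaged criterion as an ordinary linear-design problem with ``effective'' information $\widehat E\{{\mathbf F}({\mathbf x},\boldsymbol\Theta)\}$ --- immediate in the sample case since it is a finite average, but in the integral case requiring the Fubini interchange of \eqref{eq:F_EW_xi}, which is exactly where (A3)'s dominating function $K(\boldsymbol\theta)$ is used; second, that ${\mathbf x}\mapsto\widehat E\{{\mathbf F}({\mathbf x},\boldsymbol\Theta)\}$ (resp.\ $E\{{\mathbf F}({\mathbf x},\boldsymbol\Theta)\}$) inherits element-wise continuity on ${\cal X}$ from (A2) (resp.\ from (A3) via dominated convergence), so that the maximum over the compact ${\cal X}$ in part~(iii) is attained and Theorem~\ref{thm:thm_2_1_n0}/Lemma~\ref{lem:compactness} are in force; and third, that the nonsingularity required to write $[\widehat E\{{\mathbf F}(\boldsymbol\xi^*,\boldsymbol\Theta)\}]^{-1}$ is guaranteed by (B3). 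Once these are in place, every step above is the standard Kiefer--Wolfowitz computation, so I would present the continuity/measurability bookkeeping carefully and then invoke the convex-analytic equivalence theorem of \cite{fedorov2014} for the rest.
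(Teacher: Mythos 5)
Your proposal is correct and follows essentially the same route as the paper: Lemma~\ref{lem:compactness} and Theorem~\ref{thm:thm_2_1_n0} give convexity, compactness, and the Carath\'{e}odory bound on support points, (B3) gives existence, the optimal information matrix is identified as a boundary point of ${\cal F}_{\rm SEW}({\cal X})$ or ${\cal F}_{\rm EW}({\cal X})$, and parts~(ii) and (iii) follow from the general equivalence machinery of \cite{fedorov2014} applied to the averaged information matrices $E\{{\mathbf F}({\mathbf x},\boldsymbol{\Theta})\}$ or $\hat{E}\{{\mathbf F}({\mathbf x},\boldsymbol{\Theta})\}$. The only cosmetic differences are that you derive the boundary-point property by scaling along a ray rather than invoking the monotonicity of $\Psi$ in (B2), and you write out the Gateaux-derivative computation for the equivalence theorem explicitly where the paper simply cites Theorem~2.2 of \cite{fedorov2014}; both are sound.
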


\section{Algorithms for EW D-optimal Designs}
\label{sec:Algorithms}

\subsection{EW lift-one algorithm for discrete factors only}\label{sec:lift_one_EW_discrete}

When all factors are discrete, the design region ${\cal X}={\cal D}$ contains only a finite number of distinct experimental settings. In this case, we may denote the corresponding design settings as ${\mathbf x}_1, \ldots, {\mathbf x}_m$~. The EW D-optimal design problem in this case is to find ${\mathbf w}^* = (w_1^*, \ldots, w_m^*)^T \in S = \{(w_1, \ldots, w_m)^T \in \mathbb{R}^m \mid w_i\geq 0, i=1, \ldots, m; \sum_{i=1}^m w_i=1\}$, which maximizes $f_{\rm EW}({\mathbf w}) = | \sum_{i=1}^m w_i E\{ {\mathbf F}({\mathbf x}_i, \boldsymbol{\Theta})\}|$ or $f_{\rm SEW}({\mathbf w}) = | \sum_{i=1}^m w_i \hat{E} \{ {\mathbf F}({\mathbf x}_i,$ $\boldsymbol{\Theta})\} |$. 

The lift-one algorithm (see Algorithm~3 in the Supplementary Material of \cite{huang2025constrained}) can be used for finding EW D-optimal designs, with $f({\mathbf w})$ replaced by $f_{\rm EW}({\mathbf w})$ or $f_{\rm SEW}({\mathbf w})$, called the {\it EW lift-one algorithm}.

To speed up the lift-one algorithm, \cite{ym2015} provided analytic solutions in their Lemma~4.2 for generalized linear models. \cite{bu2020} mentioned but with no details that the corresponding optimization problems for multinomial logistic models (MLM) also have analytic solutions when the number of response categories $J\leq 5$. In Section~\ref{sec:analytic_solution_MLM} of the Supplementary Material, we provide all the relevant formulae for the lift-one algorithm under an MLM with $J\leq 5$.

\subsection{EW ForLion algorithm and a rounding algorithm}\label{sec:lift_one_EW_mixed}

When at least one factor is continuous, we follow the ForLion algorithm proposed by \cite{huang2024forlion} for locally D-optimal designs, to find EW D-optimal designs, called the {\it EW ForLion algorithm} (see Algorithm~\ref{alg:EW ForLion}).

When there is no confusion, we denote {\it (i)} ${\mathbf F}(\boldsymbol{\xi})$ for $E\{{\mathbf F}(\boldsymbol{\xi}, \boldsymbol{\Theta})\}$ under integral-based EW D-optimality or $\hat{E}\{{\mathbf F}(\boldsymbol{\xi}, \boldsymbol{\Theta})\}$ under sample-based EW D-optimality, for each $\boldsymbol{\xi} \in \boldsymbol{\Xi}$;  and {\it (ii)} ${\mathbf F}_{\mathbf x}$ for $E\{{\mathbf F}({\mathbf x}, \boldsymbol{\Theta})\}$ under integral-based EW D-optimality or $\hat{E}\{{\mathbf F}({\mathbf x}, \boldsymbol{\Theta})\}$ under sample-based EW D-optimality, for each ${\mathbf x} \in {\cal X}$.  
As a result, $d({\mathbf x}, \boldsymbol{\xi})$ in Theorem~\ref{thm:thm 2.2 page 64} can be simplified to $d({\mathbf x}, \boldsymbol{\xi}) = {\rm tr}({\mathbf F}(\boldsymbol{\xi})^{-1} {\mathbf F}_{\mathbf x})$.

\begin{center}
\hrule
\captionof{algorithm}{\bf EW ForLion Algorithm}\label{alg:EW ForLion}
\hrule
\begin{algorithmic}
     \State \textbf{Step 1:} Specify the merging threshold $\delta>0$ (e.g., $10^{-2}$) and the converging threshold $\epsilon>0$ (e.g., $10^{-6}$), and construct an initial design $\boldsymbol{\xi}_0=\{(\mathbf{x}_i^{(0)}, w_i^{(0)}), i=1, \ldots, m_0\} \in \boldsymbol{\Xi}$, such that, $\|\mathbf{x}_i^{(0)}-\mathbf{x}_j^{(0)}\| \geq \delta$ for all $i \neq j$, and $|{\mathbf F}(\boldsymbol{\xi}_0)|>0$ (see Remark~\ref{remark:initial_design}).

     \State \textbf{Step 2:} Merging close design points: Given $\boldsymbol{\xi}_t = \{(\mathbf{x}_i^{(t)}, w_i^{(t)}), i=1, \ldots, m_t\} \in \boldsymbol{\Xi}$ obtained at the $t$th iteration, detect if there are any two points $\mathbf{x}_i^{(t)}$ and $\mathbf{x}_j^{(t)}$ for which $\|\mathbf{x}_i^{(t)}-\mathbf{x}_j^{(t)}\|<\delta$. For such a pair, tentatively merge them into their midpoint $(\mathbf{x}_i^{(t)}+\mathbf{x}_j^{(t)}) / 2$ with combined weight $w_i^{(t)}+w_j^{(t)}$, and denote the resulting design as $\boldsymbol{\xi}_{\rm mer}$~. 
     If $|\mathbf{F}(\boldsymbol{\xi}_{\rm mer})| > 0$, replace the two design points with their merged one and reduce $m_t$ by $1$; otherwise, do not merge them. Stop if all such pairs have been examined.

     \State \textbf{Step 3:} Given $\boldsymbol{\xi}_t \in \boldsymbol{\Xi}$ after the merging step, apply the lift-one algorithm (see Algorithm~3 in the Supplementary Material of \cite{huang2025constrained} and Remark~1 in \cite{huang2024forlion}) with the converging threshold $\epsilon$, and obtain an EW D-optimal allocation $(w_1^*, \ldots, w_{m_t}^*)^T$ for the design points $\{{\mathbf x}_1^{(t)}, \ldots, {\mathbf x}_{m_t}^{(t)}\}$. Replace $w_i^{(t)}$ with $w_i^*$~, respectively.
     
     \State \textbf{Step 4:} Deleting step: Update $\boldsymbol{\xi}_t$ by discarding any ${\mathbf x}_i^{(t)}$ with $w_i^{(t)}=0$.
     
     \State \textbf{Step 5:} Adding new point: Given $\boldsymbol{\xi}_t \in \boldsymbol{\Xi}$, find $\mathbf{x}^* \in {\cal X}$ that maximizes $d(\mathbf{x}, \boldsymbol{\xi}_t)$. More specifically, if all factors are continuous, $\mathbf{x}^*$ can be obtained by the ``L-BFGS-B'' quasi-Newton method \citep{byrd1995limited} directly. If the first $k$ factors are continuous with $1\leq k\leq d-1$, we first find 
        \[
          \mathbf{x}_{(1)}^* \;=\; {\rm argmax}_{\mathbf{x}_{(1)} \in \prod_{j=1}^k [a_j, b_j]}\,
          d \left((\mathbf{x}_{(1)}^T,\, \mathbf{x}_{(2)}^T)^T,\, \boldsymbol{\xi}_t\right)
        \]
     for each $\mathbf{x}_{(2)} \in {\cal D} \subseteq \prod_{j=k+1}^d I_j$~, and then choose $\mathbf{x}_{(2)}^*$ that yields the largest $d(((\mathbf{x}_{(1)}^*)^T,\, \mathbf{x}_{(2)}^T)^T,\, \boldsymbol{\xi}_t)$. Note that $\mathbf{x}_{(1)}^*$ depends on $\mathbf{x}_{(2)}$~. Then $\mathbf{x}^* = ((\mathbf{x}_{(1)}^*)^T,\, (\mathbf{x}_{(2)}^*)^T)^T$. 
    
  \State \textbf{Step 6:} If $d\left(\mathbf{x}^*, \boldsymbol{\xi}_t\right) \leq p$, proceed to Step 7. Otherwise, obtain $\boldsymbol{\xi}_{t+1}$ by adding $(\mathbf{x}^*, 0)$ to $\boldsymbol{\xi}_t$ and increasing $m_t$ by $1$, and return to Step~2.

  \State \textbf{Step 7:} Output $\boldsymbol{\xi}_t$ as an EW D-optimal design.
\end{algorithmic}
\hrule
\end{center}

\begin{remark}\label{remark:new_forlion}
The EW ForLion algorithm (Algorithm~\ref{alg:EW ForLion}) is modified from the ForLion algorithm proposed by \cite{huang2024forlion}, which aims for locally D-optimal designs, to find EW D-optimal designs. Compared with the original ForLion algorithm, Algorithm~\ref{alg:EW ForLion} is more stable and can be applied to more general design regions. In particular, firstly, in Step~2 of Algorithm~\ref{alg:EW ForLion}, we address a possible issue when the merging step leads to a degenerated ${\mathbf F}(\boldsymbol{\xi}_{\rm mer})$, which seems more likely for EW optimality based on our experience. Theoretically, ${\mathbf F}(\boldsymbol{\xi})$ in Algorithm~\ref{alg:EW ForLion}, standing for $E\{{\mathbf F}(\boldsymbol{\xi}, \boldsymbol{\Theta})\}$ or $\hat{E}\{{\mathbf F}(\boldsymbol{\xi}, \boldsymbol{\Theta})\}$, is a convex combination of nonnegative definite matrices. As long as ${\mathbf F}(\boldsymbol{\xi}, \boldsymbol{\theta})$ is positive definite for some $\boldsymbol{\theta}$, so is ${\mathbf F}(\boldsymbol{\xi})$ (see, e.g., Section~10.1 in \cite{seber2008}). Nevertheless, ${\mathbf F}(\boldsymbol{\xi}_{\rm mer})$ may become degenerated in Step~2 when merging some support points of the current design, which was not considered in the original ForLion algorithm. 
Secondly, in Step~5, we allow a subset ${\cal D}$ rather than  $\prod_{j=k+1}^d I_j$ of a product type to be considered for discrete factors, so that Algorithm~\ref{alg:EW ForLion} is more practically useful than the original ForLion algorithm, especially when not all level combinations of discrete factors are under consideration (see Examples~\ref{ex:minimizing_surface_example} and \ref{ex:paper_feeder_example}).
\hfill{$\Box$}
\end{remark}

\begin{remark}\label{remark:initial_design}
Similarly to the original ForLion algorithm \citep{huang2024forlion}, the initial design $\boldsymbol{\xi}_0$ described in Step~1 of Algorithm~\ref{alg:EW ForLion} can be constructed sequentially, except that the design region ${\cal X} = \prod_{j=1}^k I_j\ \times \ {\cal D}$ rather than $\prod_{j=1}^d I_j$~. In particular, we may construct $\boldsymbol{\xi}_0=\{(\mathbf{x}_i^{(0)}, w_i^{(0)}), i=1, \ldots, m_0\}$ as follows: {\it (i)} Sample $\mathbf{x}_1^{(0)}$ uniformly from ${\cal X}$ and denote $s=1$; {\it (ii)} repeat sampling ${\mathbf x}_{\rm new}$ uniformly from ${\cal X}$ until $\|\mathbf{x}_{\rm new}-\mathbf{x}_j^{(0)}\| \geq \delta$ for each $j=1, \ldots, s$; {\it (iii)} denote ${\mathbf x}_{s+1}^{(0)}={\mathbf x}_{\rm new}$ and increase $s$ by $1$; and {\it (iv)} repeat (ii) and (iii) until $s$ attains a large enough integer $m_0$ such that $|{\mathbf F}(\boldsymbol{\xi}_0)|>0$ with $\boldsymbol{\xi}_0=\{(\mathbf{x}_i^{(0)}, m_0^{-1}), i=1, \ldots, m_0\}$.  
\hfill{$\Box$}
\end{remark}

According to Step~6 in Algorithm~\ref{alg:EW ForLion},  $\boldsymbol{\xi}_t$ reported by the EW ForLion algorithm satisfies $\max_{{\mathbf x}\in {\cal X}} d({\mathbf x}, \boldsymbol{\xi}_t) \leq p$. As a direct conclusion of Theorem~\ref{thm:thm 2.2 page 64}, we have the following corollary:

\begin{corollary}\label{cor:EW_D_optiality}
For the sample-based EW D-optimality under Assumptions (A1), (A2), and (B3), or the integral-based EW D-optimality under Assumptions~(A1), (A3), and (B3), the design obtained by Algorithm~\ref{alg:EW ForLion} must be EW D-optimal. 
\end{corollary}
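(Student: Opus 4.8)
The plan is to read off EW D-optimality directly from the stopping rule of Algorithm~\ref{alg:EW ForLion} together with part~(iii) of Theorem~\ref{thm:thm 2.2 page 64}, so the argument is essentially a matter of checking that the hypotheses of that theorem are in force whenever the algorithm halts. First I would verify the premises of Theorem~\ref{thm:thm 2.2 page 64}: Assumptions~(A1) and (A2) (respectively (A1) and (A3)) are assumed in the corollary, and (B3) holds because Step~1 requires $|\mathbf{F}(\boldsymbol{\xi}_0)|>0$, so taking $q=|\mathbf{F}(\boldsymbol{\xi}_0)|>0$ makes $\boldsymbol{\Xi}(q)$ non-empty. Next I would show that $|\mathbf{F}(\boldsymbol{\xi}_t)|>0$ is preserved at every iteration: it holds for $\boldsymbol{\xi}_0$; the merging in Step~2 is performed only when $|\mathbf{F}(\boldsymbol{\xi}_{\rm mer})|>0$; the lift-one update in Step~3 can only increase $f_{\rm EW}$ or $f_{\rm SEW}$, i.e.\ $|\mathbf{F}(\boldsymbol{\xi})|$, which is already positive; Step~4 deletes only zero-weight points and hence leaves $\mathbf{F}(\boldsymbol{\xi}_t)$ unchanged; and Step~6 adjoins a point with weight $0$, which likewise leaves $\mathbf{F}(\boldsymbol{\xi}_t)$ unchanged. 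Consequently $\mathbf{F}(\boldsymbol{\xi}_t)^{-1}$ exists throughout, and $d(\mathbf{x},\boldsymbol{\xi}_t)={\rm tr}(\mathbf{F}(\boldsymbol{\xi}_t)^{-1}\mathbf{F}_{\mathbf x})$ is well defined.

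Second, I would record that under (A2) (when $k=0$) or (A3) (when $k\geq 1$) the entries of $\mathbf{F}_{\mathbf x}=E\{\mathbf{F}(\mathbf{x},\boldsymbol{\Theta})\}$ (or $\hat{E}\{\mathbf{F}(\mathbf{x},\boldsymbol{\Theta})\}$) are continuous in the continuous coordinates of $\mathbf{x}$, so $\mathbf{x}\mapsto d(\mathbf{x},\boldsymbol{\xi}_t)$ is continuous on the compact region ${\cal X}$ by (A1) and therefore attains its maximum. Step~5 is precisely the computation of a maximizer $\mathbf{x}^{*}$ of $d(\cdot,\boldsymbol{\xi}_t)$ over ${\cal X}$ (optimizing the continuous block by the L-BFGS-B method for each fixed discrete block in ${\cal D}$, then optimizing over ${\cal D}$), so that $d(\mathbf{x}^{*},\boldsymbol{\xi}_t)=\max_{\mathbf{x}\in{\cal X}}d(\mathbf{x},\boldsymbol{\xi}_t)$. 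The algorithm reaches Step~7 and outputs $\boldsymbol{\xi}_t$ only when Step~6 detects $d(\mathbf{x}^{*},\boldsymbol{\xi}_t)\leq p$, which therefore gives $\max_{\mathbf{x}\in{\cal X}}d(\mathbf{x},\boldsymbol{\xi}_t)\leq p$. By the equivalence statement in Theorem~\ref{thm:thm 2.2 page 64}(iii), this is exactly the condition for $\boldsymbol{\xi}_t$ to be EW D-optimal, which establishes the corollary (the statement presupposes that the algorithm terminates and returns a design).

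The one genuine point, as opposed to a routine verification, is the claim that Step~5 delivers a \emph{global} maximizer of $d(\cdot,\boldsymbol{\xi}_t)$ on ${\cal X}$: L-BFGS-B is a local method, so this is guaranteed only when Step~5 is implemented with a suitable multi-start over the continuous block (and the search over the finite set ${\cal D}$ is exhaustive). I would therefore phrase the corollary with the understanding that Step~5 locates the global maximum of $d(\cdot,\boldsymbol{\xi}_t)$; granted this, everything else is a direct application of Theorem~\ref{thm:thm 2.2 page 64}, which is why the result is stated as a corollary.
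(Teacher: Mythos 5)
Your proposal is correct and follows essentially the same route as the paper: the paper also derives the corollary directly from the stopping rule $d(\mathbf{x}^*,\boldsymbol{\xi}_t)\leq p$ in Step~6 combined with the equivalence condition in part~(iii) of Theorem~\ref{thm:thm 2.2 page 64}. Your additional checks (nonsingularity of ${\mathbf F}(\boldsymbol{\xi}_t)$ throughout the iterations, continuity and attainment of the maximum of $d(\cdot,\boldsymbol{\xi}_t)$, and the caveat that Step~5 must return a global maximizer) are sensible elaborations of what the paper leaves implicit, but they do not change the argument.
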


Similarly to the ForLion algorithm \citep{huang2024forlion}, we may relax in practice the stopping rule $d\left(\mathbf{x}^*, \boldsymbol{\xi}_t\right) \leq p$ in Step~6 to $d\left(\mathbf{x}^*, \boldsymbol{\xi}_t\right) \leq p + \epsilon$. Following the arguments in Remark~2 of \cite{huang2024forlion}, as a direct conclusion of Theorem~\ref{thm:thm 2.2 page 64}, Algorithm~\ref{alg:EW ForLion} is guaranteed to stop in finite steps.

Since the design found by Algorithm~\ref{alg:EW ForLion} is an approximate design with continuous design points, to facilitate users in practice, we propose a rounding algorithm (see Algorithm~\ref{alg:appro to exact}) to convert an approximate design to an exact design with a user-specified set of grid points. Different from the rounding algorithms proposed in the literature (see, e.g., Algorithm~2 in \cite{huang2025constrained}), Algorithm~\ref{alg:appro to exact} here involves rounding both the weights and the levels of continuous factors.

\medskip
\begin{center}
\hrule
\captionof{algorithm}{\bf Rounding Algorithm}\label{alg:appro to exact}  \hrule  
\begin{algorithmic}
     \State \textbf{Step 0}: Input: An approximate design $\boldsymbol{\xi} = \{({\mathbf x}_i, w_i), i=1,\ldots,m_0\} \in \boldsymbol{\Xi}$ with all $w_i>0$ and $|{\mathbf F}(\boldsymbol{\xi})|>0$, a pre-specified merging threshold  $\delta_2$ (e.g., $\delta_2=0.1$, typically larger than $\delta$ in Algorithm~\ref{alg:EW ForLion}), grid levels (or pace lengths) $L_1, \ldots, L_k$ for the $k$ continuous factors (e.g., $L_1= \cdots = L_k=0.25$, typically larger than $\delta_2$), respectively, and the total number $n$ of experimental units.

     \State \textbf{Step 1:} Merging step: For each pair of $\mathbf{x}_i$ and $\mathbf{x}_j$~, define their distance $d_{ij}=\left\|\mathbf{x}_{i}-\mathbf{x}_{ j}\right\|$ if their levels of all discrete factors are identical; and $\infty$ otherwise. If $d_{ij} < \delta_2$, tentatively merge ${\mathbf x}_i$ and ${\mathbf x}_j$ into a single design point $(w_i \mathbf{x}_{i}+w_j\mathbf{x}_{j}) /(w_i+w_j)$ with combined weight $w_i+w_j$~, and denote the resulting design as $\boldsymbol{\xi}_{\rm mer}$~. If $|{\mathbf F}(\boldsymbol{\xi}_{\rm mer})|>0$, replace $\boldsymbol{\xi}$ with $\boldsymbol{\xi}_{\rm mer}$; otherwise, do not merge. Stop if all such pairs have been examined.
     
     \State \textbf{Step 2:} Rounding design points to grid levels: Round the levels of the continuous factors of each design point to their nearest multiples of the corresponding $L_j$~, $j=1, \ldots, k$, respectively. 
     
     \State \textbf{Step 3:} Allocating experimental units: First set $n_i=\lfloor n w_i\rfloor$, the largest integer no more than $nw_i$~, for each $i$, and then allocate any remaining units one by one to design points with $n w_i>n_i$~, in the order of increasing ${\mathbf F}(\boldsymbol{\xi})$ the most (see Algorithm~2 in \cite{huang2025constrained}).
     
     \State \textbf{Step 4:} Deleting step: Discard any ${\mathbf x}_i$ for which $n_i=0$ and denote by $m$ the number of remaining design points.

     \State \textbf{Step 5:} Output: Exact design $\{({\mathbf x}_i, n_i), i=1,\ldots,m\}$ with $n_i>0$ and $\sum_{i=1}^m n_i=n$.
\end{algorithmic}
\hrule
\end{center}

\subsection{Calculating $E\{{\mathbf F}({\mathbf x}, \boldsymbol{\Theta})\}$ or $\hat{E}\{{\mathbf F}({\mathbf x}, \boldsymbol{\Theta})\}$}\label{sec:E_F_x}

As illustrated by \cite{ymm2016, ytm2016} and \cite{bu2020}, one major advantage of EW D-optimality over Bayesian D-optimality is that it optimizes $|E\{{\mathbf F}({\mathbf x}, \boldsymbol{\Theta})\}|$ or $|\hat{E}\{{\mathbf F}({\mathbf x},\allowbreak \boldsymbol{\Theta})\}|$, instead of $E\{\log |{\mathbf F}({\mathbf x}, \boldsymbol{\Theta})|\}$.
To implement the EW ForLion algorithm (Algorithm~\ref{alg:EW ForLion}) for EW D-optimal designs, in this section, we show by two examples that instead of calculating ${\mathbf F}_{\mathbf x} = E\{{\mathbf F}({\mathbf x}, \boldsymbol{\Theta})\}$ or $\hat{E}\{{\mathbf F}({\mathbf x}, \boldsymbol{\Theta})\}$ directly, we may focus on key components of the Fisher information matrix to make the computation more efficiently.

\begin{example}\label{ex:MLM_E_Fx}{\bf Multinomial logistic models (MLM):}\quad {\rm   
A general MLM \citep{pmcc1995, atkinson1999, bu2020} for categorical responses with $J$ categories can be defined by ${\mathbf C}^T \log({\mathbf L}\boldsymbol{\pi}_i) = \boldsymbol{\eta}_i = {\mathbf X}_i \boldsymbol{\theta}$, $i=1, \ldots, m$, with $(2J-1)\times J$ constant matrices ${\mathbf C}$ and ${\mathbf L}$, category probabilities $\boldsymbol{\pi}_i = (\pi_{i1}, 
\ldots, \pi_{iJ})^T$, $J\times p$ model matrices ${\mathbf X}_i$~, and model parameters $\boldsymbol{\theta} \in \boldsymbol{\Theta} \subseteq \mathbb{R}^p$.  According to Theorem~2 of \cite{huang2024forlion}, to calculate the Fisher information ${\mathbf F}_{\mathbf x}$ at ${\mathbf x} \in {\cal X}$, it is enough to replace $u_{st}^{\mathbf x}$ (or $u_{st}^{\mathbf x}(\boldsymbol{\theta})$ as it is a function of $\boldsymbol{\theta}$) with $E\{u_{st}^{\mathbf x}(\boldsymbol{\Theta})\} = \int_{\boldsymbol{\Theta}} u_{st}^{\mathbf x}(\boldsymbol{\theta}) Q(d\boldsymbol{\theta})$ or $\hat{E}\{u_{st}^{\mathbf x}(\boldsymbol{\Theta})\} = B^{-1}\sum_{j=1}^B u_{st}^{\mathbf x}(\hat{\boldsymbol{\theta}}_j)$, for $s,t = 1, \ldots, J$.  The formulae for calculating $u_{st}^{\mathbf x}(\boldsymbol{\theta})$ can be found in Appendix~A of \cite{huang2024forlion}.
Then ${\mathbf F}({\mathbf x}, \boldsymbol{\theta}) = {\mathbf X}_{\mathbf x}^T {\mathbf U}_{\mathbf x}(\boldsymbol{\theta}) {\mathbf X}_{\mathbf x}$ according to Corollary~3.1 in \cite{bu2020}, where 
\begin{equation}\label{eq:X_x}
{\mathbf X}_{\mathbf x}= \begin{pmatrix}
 {\mathbf h}_1^T({\mathbf x}) &  \boldsymbol0^T & \cdots & \boldsymbol0^T& {\mathbf h}_c^T({\mathbf x})\\
 \boldsymbol0^T &  {\mathbf h}_2^T({\mathbf x}) &\ddots & \vdots & \vdots\\
\vdots &  \ddots& \ddots &  \boldsymbol0^T & {\mathbf h}_c^T({\mathbf x})\\
\boldsymbol0^T & \cdots & \boldsymbol0^T & {\mathbf h}_{J-1}^T({\mathbf x}) & {\mathbf h}_c^T({\mathbf x})\\
 \boldsymbol0^T & \cdots & \cdots & \boldsymbol0^T & \boldsymbol0^T\\
\end{pmatrix}_{J \times p}
\end{equation}
with predetermined predictor functions ${\mathbf h}_j=(h_{j1}, \ldots, h_{jp_j})^T$, $j=1, \ldots, J-1$, ${\mathbf h}_c=(h_1, \ldots, h_{p_c})^T$, and ${\mathbf U}_{\mathbf x}(\boldsymbol{\theta}) = \left(u_{st}^{\mathbf x}(\boldsymbol{\theta})\right)_{s,t=1, \ldots, J}$~. Then $E\{{\mathbf F}({\mathbf x}, \boldsymbol{\Theta})\} = {\mathbf X}_{\mathbf x}^T E\{{\mathbf U}_{\mathbf x}(\boldsymbol{\Theta})\} {\mathbf X}_{\mathbf x}$ and $\hat{E}\{{\mathbf F}({\mathbf x}, \boldsymbol{\Theta})\} = {\mathbf X}_{\mathbf x}^T \hat{E}\{{\mathbf U}_{\mathbf x}(\boldsymbol{\Theta})\} {\mathbf X}_{\mathbf x}$~.
}\hfill{$\Box$}
\end{example}

\begin{theorem}\label{thm:MLM design points} For MLMs under Assumptions~(A1) and (B3), suppose all the predictor functions ${\mathbf h}_1, \ldots, {\mathbf h}_{J-1}$ and ${\mathbf h}_c$ defined in \eqref{eq:X_x} are continuous with respect to all continuous factors of $\mathbf{x}\in {\cal X}$, and the parameter space $\boldsymbol{\Theta} \subseteq\mathbb{R}^p$ is bounded. Then there exists an EW D-optimal design $\boldsymbol{\xi}$ that contains no more than $p(p+1)/2$ support points, and the design obtained by Algorithm~\ref{alg:EW ForLion} must be EW D-optimal.
\end{theorem}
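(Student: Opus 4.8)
The plan is to derive the statement from Theorem~\ref{thm:thm 2.2 page 64} and Corollary~\ref{cor:EW_D_optiality} by checking that an MLM satisfying the stated hypotheses falls under Assumptions~(A1)--(A3) and (B3). Assumption~(A1) holds since $\mathcal{X}$ is compact, and (B3) is assumed. Once (A2) is verified (for the sample-based criterion) and (A3) is verified (for the integral-based criterion), both claims follow at once: the existence of an EW D-optimal design with at most $p(p+1)/2$ support points is conclusion~(i) of Theorem~\ref{thm:thm 2.2 page 64}, and the EW D-optimality of the design returned by Algorithm~\ref{alg:EW ForLion} is precisely Corollary~\ref{cor:EW_D_optiality}.

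To check (A2), I would use the factorization ${\mathbf F}({\mathbf x},\boldsymbol{\theta}) = {\mathbf X}_{\mathbf x}^T {\mathbf U}_{\mathbf x}(\boldsymbol{\theta}) {\mathbf X}_{\mathbf x}$ recalled in Example~\ref{ex:MLM_E_Fx}. Every entry of ${\mathbf X}_{\mathbf x}$ in \eqref{eq:X_x} is either $0$ or a coordinate of one of ${\mathbf h}_1,\ldots,{\mathbf h}_{J-1},{\mathbf h}_c$, hence continuous in the continuous factors of ${\mathbf x}$ by hypothesis. The weights $u_{st}^{\mathbf x}(\boldsymbol{\theta})$ are functions (given in Appendix~A of \cite{huang2024forlion}) of the category probabilities $\boldsymbol{\pi}({\mathbf x},\boldsymbol{\theta})$ solving ${\mathbf C}^T\log({\mathbf L}\boldsymbol{\pi}) = {\mathbf X}_{\mathbf x}\boldsymbol{\theta}$; since for a fixed feasible $\boldsymbol{\theta}$ the inverse link $\boldsymbol{\eta}\mapsto\boldsymbol{\pi}$ is continuous on the relevant range and $\boldsymbol{\eta}={\mathbf X}_{\mathbf x}\boldsymbol{\theta}$ is continuous in ${\mathbf x}$, each $u_{st}^{\mathbf x}(\boldsymbol{\theta})$, and hence each entry of ${\mathbf F}({\mathbf x},\boldsymbol{\theta})$, is continuous in the continuous factors of ${\mathbf x}$. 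That is (A2), which already yields the sample-based conclusions via Theorem~\ref{thm:thm 2.2 page 64} and Corollary~\ref{cor:EW_D_optiality}.

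For (A3) I would, in addition, produce an integrable dominating function. By compactness of $\mathcal{X}$ and continuity of the predictor functions there is a constant $M$ with $|({\mathbf X}_{\mathbf x})_{ab}|\le M$ for all ${\mathbf x}\in\mathcal{X}$ and all $a,b$; by boundedness of $\boldsymbol{\Theta}$ the linear predictor $\boldsymbol{\eta}={\mathbf X}_{\mathbf x}\boldsymbol{\theta}$ stays in a bounded set as $({\mathbf x},\boldsymbol{\theta})$ ranges over $\mathcal{X}\times\boldsymbol{\Theta}$, on which the $u_{st}$ are bounded by a constant $U$ --- this uses that the $\pi_{ij}({\mathbf x},\boldsymbol{\theta})$ are bounded away from $0$ on a bounded $\boldsymbol{\eta}$-set. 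Then $|F_{st}({\mathbf x},\boldsymbol{\theta})|\le J^2M^2U=:K$, a constant, so $\int_{\boldsymbol{\Theta}}K\,Q(d\boldsymbol{\theta})=K<\infty$ because $Q$ is a probability measure; thus (A3) holds (and it implies (A2)). Invoking Theorem~\ref{thm:thm 2.2 page 64} and Corollary~\ref{cor:EW_D_optiality} with (A1), (A3), (B3) finishes the integral-based case.

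The main obstacle is the uniform bound on the $u_{st}^{\mathbf x}(\boldsymbol{\theta})$ used for (A3): one must show that over the compact design region and the bounded parameter space the category probabilities (or the cumulative/adjacent combinations of them appearing in the denominators of ${\mathbf U}_{\mathbf x}$) stay bounded away from $0$, so that the Fisher-information weights do not blow up. For the baseline-category logit this is immediate from $\pi_{ij}=e^{\eta_{ij}}/(1+\sum_k e^{\eta_{ik}})$; for the cumulative, adjacent-categories, and continuation-ratio logit models one has to exploit the structure of the feasible $\boldsymbol{\Theta}$ (which keeps the $\boldsymbol{\eta}$-coordinates suitably ordered or separated) together with its boundedness, and be careful about whether the bound is needed on $\boldsymbol{\Theta}$ itself or on its closure. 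Everything else reduces to the already-established Theorem~\ref{thm:thm 2.2 page 64} and Corollary~\ref{cor:EW_D_optiality}.
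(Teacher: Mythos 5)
Your proposal is correct and follows essentially the same route as the paper's proof: bound the predictor functions via compactness of ${\cal X}$, bound the linear predictor $\boldsymbol{\eta}_{\mathbf x}={\mathbf X}_{\mathbf x}\boldsymbol{\theta}$ via boundedness of $\boldsymbol{\Theta}$, deduce that the category probabilities and hence the $u_{st}^{\mathbf x}(\boldsymbol{\theta})$ and the entries of ${\mathbf F}({\mathbf x},\boldsymbol{\theta})$ are uniformly bounded so a constant dominating function verifies (A3), and then invoke Theorem~\ref{thm:thm 2.2 page 64} and Corollary~\ref{cor:EW_D_optiality}. The ``main obstacle'' you flag --- keeping the probabilities bounded away from zero uniformly over ${\cal X}\times\boldsymbol{\Theta}$ --- is exactly the step the paper disposes of by citing Appendix~A and Theorem~2 of \cite{huang2024forlion}, so your treatment is, if anything, slightly more explicit than the paper's.
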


The boundedness of $\boldsymbol{\Theta}$ in Theorem~\ref{thm:MLM design points} is a practical requirement. For typical applications, the boundedness of a working parameter space is often needed for numerical searches for parameter estimates or theoretical derivations for desired properties \citep{ferguson1996course}.

\begin{example}\label{ex:GLM_E_Fx}{\bf Generalized linear models (GLM):}\quad {\rm   
A generalized linear model \citep{pmcc1989, dobson2018} can be defined by $E(Y_i) = \mu_i$ and  $\eta_i = g(\mu_i) = {\mathbf X}_i^T\boldsymbol{\theta}$, where $g$ is a given link function, ${\mathbf X}_i = {\mathbf h}({\mathbf x}_i) = (h_1({\mathbf x}_i), \ldots, h_p({\mathbf x}_i))^T$, $i=1, \ldots, m$, and $\boldsymbol{\theta} \in \boldsymbol{\Theta} \subseteq \mathbb{R}^p$ (see, e.g., Section~4 in \cite{huang2024forlion}).  Then ${\mathbf F}({\mathbf x}, \boldsymbol{\theta}) = \nu\{{\mathbf h}({\mathbf x})^T \boldsymbol{\theta}\} \cdot {\mathbf h}({\mathbf x}) {\mathbf h}({\mathbf x})^T$, where $\nu = \{(g^{-1})'\}^2/s$ with $s(\eta_i) = {\rm Var}(Y_i)$ (see Table~5 in the Supplementary Material of \cite{huang2025constrained} for examples of $\nu$). To calculate ${\mathbf F}_{\mathbf x} = E\{{\mathbf F}({\mathbf x}, \boldsymbol{\Theta})\}$ or $\hat{E}\{{\mathbf F}({\mathbf x}, \boldsymbol{\Theta})\}$ under GLMs, it is enough to replace $\nu\{{\mathbf h}({\mathbf x})^T \boldsymbol{\theta}\}$ with $E[\nu\{{\mathbf h}({\mathbf x})^T \boldsymbol{\Theta}\}] = \int_{\boldsymbol{\Theta}} \nu\{{\mathbf h}({\mathbf x})^T \boldsymbol{\theta}\} Q(d\boldsymbol{\theta})$ or $\hat{E}[\nu\{{\mathbf h}({\mathbf x})^T \boldsymbol{\Theta}\}] $ $= B^{-1}\sum_{j=1}^B \nu\{{\mathbf h}({\mathbf x})^T \hat{\boldsymbol{\theta}}_j\}$.  That is, $E\{{\mathbf F}({\mathbf x}, \boldsymbol{\Theta})\} = E\left[\nu\{{\mathbf h}({\mathbf x})^T \boldsymbol{\Theta}\}\right] \cdot {\mathbf h}({\mathbf x}) {\mathbf h}({\mathbf x})^T$, and $\hat{E}\{{\mathbf F}({\mathbf x}, $ $\boldsymbol{\Theta})\} = \hat{E}\left[\nu\{{\mathbf h}({\mathbf x})^T \boldsymbol{\Theta}\}\right] \cdot {\mathbf h}({\mathbf x}) {\mathbf h}({\mathbf x})^T$.

To calculate ${\mathbf F}(\boldsymbol{\xi}, \boldsymbol{\theta})$ under GLMs, instead of $\sum_{i=1}^m w_i {\mathbf F}({\mathbf x}_i, \boldsymbol{\theta})$, we recommend its matrix form ${\mathbf F}(\boldsymbol{\xi}, \boldsymbol{\theta}) = {\mathbf X}_{\boldsymbol\xi}^T{\mathbf W}_{\boldsymbol\xi}(\boldsymbol{\theta}) {\mathbf X}_{\boldsymbol\xi}$~, 
where ${\mathbf X}_{\boldsymbol\xi} =$ $({\mathbf h}({\mathbf x}_1), \ldots,$ ${\mathbf h}({\mathbf x}_m))^T \in \mathbb{R}^{m\times p}$, and ${\mathbf W}_{\boldsymbol\xi}(\boldsymbol{\theta}) =  {\rm diag}\{w_1 \nu\{{\mathbf h}({\mathbf x}_1)^T {\boldsymbol\theta}\}, \ldots, w_m \nu\{{\mathbf h}({\mathbf x}_m)^T {\boldsymbol\theta}\}\}$. Then $E\{{\mathbf F}(\boldsymbol{\xi}, \boldsymbol{\theta})\} = {\mathbf X}_{\boldsymbol\xi}^T E\{{\mathbf W}_{\boldsymbol\xi}(\boldsymbol{\Theta})\}$ ${\mathbf X}_{\boldsymbol\xi}$ and $\hat{E}\{{\mathbf F}(\boldsymbol{\xi}, \boldsymbol{\theta})\} = {\mathbf X}_{\boldsymbol\xi}^T \hat{E}\{{\mathbf W}_{\boldsymbol\xi}(\boldsymbol{\Theta})\}{\mathbf X}_{\boldsymbol\xi}$~.
The matrix form is much more efficient when programming in R.
}\hfill{$\Box$}
\end{example}

\begin{theorem}\label{thm:GLM design points} For GLMs under Assumptions~(A1) and (B3), suppose all the predictor functions ${\mathbf h} = (h_1, \ldots, h_p)^T$ are continuous with respect to all continuous factors of $\mathbf{x} \in {\cal X}$, and $\boldsymbol{\Theta}$ is bounded. Then there exists an EW D-optimal design $\boldsymbol{\xi}$ that contains no more than $p(p+1)/2$ support points, and the design obtained by Algorithm~\ref{alg:EW ForLion} must be EW D-optimal.
\end{theorem}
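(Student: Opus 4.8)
The plan is to show that, under the hypotheses, all the standing assumptions behind Theorem~\ref{thm:thm 2.2 page 64} and Corollary~\ref{cor:EW_D_optiality} are satisfied by GLMs; both conclusions then follow at once. Assumptions~(A1) and (B3) are assumed directly, so the real content is to verify (A2) for the sample-based EW criterion and (A3) for the integral-based EW criterion, working from the explicit form $\mathbf{F}(\mathbf{x},\boldsymbol{\theta}) = \nu\{\mathbf{h}(\mathbf{x})^T\boldsymbol{\theta}\}\,\mathbf{h}(\mathbf{x})\mathbf{h}(\mathbf{x})^T$ recorded in Example~\ref{ex:GLM_E_Fx}, so that $F_{st}(\mathbf{x},\boldsymbol{\theta}) = \nu\{\mathbf{h}(\mathbf{x})^T\boldsymbol{\theta}\}\,h_s(\mathbf{x})\,h_t(\mathbf{x})$ for $s,t = 1,\ldots,p$.

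For (A2), fix $\boldsymbol{\theta}\in\boldsymbol{\Theta}$. Each $h_j$ is continuous in the continuous factors of $\mathbf{x}$ by assumption, so the linear predictor $\mathbf{h}(\mathbf{x})^T\boldsymbol{\theta}$ is too; since the weight function $\nu$ is continuous on $\mathbb{R}$ for every standard GLM link (cf.\ Table~5 in the Supplementary Material of \cite{huang2023constrained}), the composition $\nu\{\mathbf{h}(\mathbf{x})^T\boldsymbol{\theta}\}$, and hence each $F_{st}(\mathbf{x},\boldsymbol{\theta})$, is continuous in the continuous factors. This is (A2); as already noted in the text, once (A3) is in place it also implies (A2) whenever there is a continuous factor.

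For (A3), I need a $Q$-integrable dominating function $K(\boldsymbol{\theta})$. Since ${\cal X}$ is compact, each $h_j$ is continuous in the continuous factors, and the discrete factors take only finitely many level combinations, the constant $M := \max_{1\le j\le p}\sup_{\mathbf{x}\in{\cal X}}|h_j(\mathbf{x})|$ is finite, so $|h_s(\mathbf{x})h_t(\mathbf{x})| \le M^2$ for all $s,t$ and $\mathbf{x}$. Because $\boldsymbol{\Theta}$ is bounded, Cauchy--Schwarz gives $|\mathbf{h}(\mathbf{x})^T\boldsymbol{\theta}| \le \sqrt{p}\,M\,\sup_{\boldsymbol{\theta}\in\boldsymbol{\Theta}}\|\boldsymbol{\theta}\| =: R < \infty$ uniformly in $\mathbf{x}$ and $\boldsymbol{\theta}$, and continuity of $\nu$ then yields $N := \sup_{|\eta|\le R}|\nu(\eta)| < \infty$. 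Therefore $|F_{st}(\mathbf{x},\boldsymbol{\theta})| \le NM^2$ for all $s,t$, all $\mathbf{x}\in{\cal X}$, and all $\boldsymbol{\theta}\in\boldsymbol{\Theta}$, so the constant $K(\boldsymbol{\theta})\equiv NM^2$ works, being trivially integrable against the probability measure $Q$.

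Having verified (A1), (A2), (B3) in the sample-based case and (A1), (A3), (B3) in the integral-based case, I finish by quoting earlier results: Theorem~\ref{thm:thm 2.2 page 64}(i) supplies an EW D-optimal design with at most $p(p+1)/2$ support points, and Corollary~\ref{cor:EW_D_optiality} shows that the design returned by Algorithm~\ref{alg:EW ForLion} is EW D-optimal. The one place that deserves care --- and the only step beyond bookkeeping --- is the uniform bound used for (A3): it is precisely there that the hypothesis ``$\boldsymbol{\Theta}$ bounded'' enters, in combination with compactness of ${\cal X}$ and continuity of $\nu$, to confine the linear predictor to a compact interval on which $\nu$ is bounded. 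The argument is a verbatim parallel of the proof of Theorem~\ref{thm:MLM design points}, with the scalar $\nu\{\mathbf{h}(\mathbf{x})^T\boldsymbol{\theta}\}$ playing the role there played by the entries $u_{st}^{\mathbf{x}}(\boldsymbol{\theta})$ of $\mathbf{U}_{\mathbf{x}}(\boldsymbol{\theta})$.
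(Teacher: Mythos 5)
Your proposal is correct and follows essentially the same route as the paper's own proof: bound $\|{\mathbf h}({\mathbf x})\|$ by compactness of ${\cal X}$ and finiteness of the discrete level combinations, confine the linear predictor ${\mathbf h}({\mathbf x})^T\boldsymbol{\theta}$ to a compact interval using boundedness of $\boldsymbol{\Theta}$, invoke continuity of $\nu$ to get a uniform constant dominating $|F_{st}|$ (hence Assumption~(A3) with a constant $K$), and then quote Theorem~\ref{thm:thm 2.2 page 64} and Corollary~\ref{cor:EW_D_optiality}. Your explicit verification of the continuity half of (A2)/(A3) and the Cauchy--Schwarz step are just slightly more detailed bookkeeping than the paper's version.
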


\subsection{First-order derivative of sensitivity function}\label{sec:first_order_d}

To implement the EW ForLion algorithm (Algorithm~\ref{alg:EW ForLion}) for EW D-optimal designs, if there is at least one continuous factor, we need to calculate the first-order derivative of the sensitivity function $d({\mathbf x}, \boldsymbol{\xi})$ in Theorem~\ref{thm:thm 2.2 page 64}. 
Following the simplified notation in Section~\ref{sec:lift_one_EW_mixed}, $d({\mathbf x}, \boldsymbol{\xi}) = {\rm tr}({\mathbf F}(\boldsymbol{\xi})^{-1} {\mathbf F}_{\mathbf x})$.

Similarly to Theorem~3 in \cite{huang2024forlion}, we obtain the following theorem for multinomial logistic models (MLM) under EW D-optimality:

\begin{theorem}\label{thm:sensitivity function eq thm for mlm}
For an MLM model under Assumptions~(A1) and (B3), consider $\boldsymbol{\xi}\in \boldsymbol{\Xi}$ satisfying $|{\mathbf F}(\boldsymbol{\xi})| > 0$. If we denote the $p \times p$ matrix ${\mathbf F}(\boldsymbol{\xi})^{-1} = (\mathbf{E}_{st})_{s,t=1, \ldots, J}$ with submatrix $\mathbf{E}_{st} \in \mathbf{R}^{p_s \times p_t}$, and $p_J=p_c$~, ${\mathbf h}_j^{\mathbf x} = {\mathbf h}_j({\mathbf x})$, ${\mathbf h}_c^{\mathbf x} = {\mathbf h}_c({\mathbf x})$, $u_{st}^{\mathbf x} = E\{u_{st}^{\mathbf x}(\boldsymbol{\Theta})\}$ or $\hat{E}\{u_{st}^{\mathbf x} (\boldsymbol{\Theta})\}$ for simplicity, then
\begin{eqnarray*}
d(\mathbf{x}, \boldsymbol{\xi}) &=&  \sum_{t=1}^{J-1} u_{tt}^{\mathbf x} \left(\mathbf{h}_t^{\mathbf{x}}\right)^T \mathbf{E}_{t t} \mathbf{h}_t^{\mathbf{x}} +\sum_{s=1}^{J-1} \sum_{t=1}^{J-1} u_{st}^{\mathbf x} \left(\mathbf{h}_c^{\mathbf{x}}\right)^T \mathbf{E}_{J J} \mathbf{h}_c^{\mathbf{x}} \\
&+& 2 \sum_{s=1}^{J-2} \sum_{t=s+1}^{J-1} u_{st}^{\mathbf x} \left(\mathbf{h}_t^{\mathbf{x}}\right)^T \mathbf{E}_{s t} \mathbf{h}_s^{\mathbf{x}} +2 \sum_{s=1}^{J-1} \sum_{t=1}^{J-1} u_{st}^{\mathbf x} \left(\mathbf{h}_c^{\mathbf{x}}\right)^T \mathbf{E}_{s J} \mathbf{h}_s^{\mathbf{x}}\ .
\end{eqnarray*}
\end{theorem}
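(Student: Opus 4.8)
The plan is to reduce the statement to a purely algebraic block-matrix identity that parallels Theorem~3 of \cite{huang2024forlion}, the only new ingredient being that the EW expectation commutes with the $\boldsymbol{\theta}$-free quadratic form $\mathbf{X}_\mathbf{x}^T(\cdot)\mathbf{X}_\mathbf{x}$. First I would record, as in Example~\ref{ex:MLM_E_Fx}, that since $\mathbf{X}_\mathbf{x}$ in \eqref{eq:X_x} does not depend on $\boldsymbol{\theta}$, taking the integral- (or sample-) based expectation entrywise through ${\mathbf F}({\mathbf x},\boldsymbol{\theta}) = \mathbf{X}_\mathbf{x}^T {\mathbf U}_\mathbf{x}(\boldsymbol{\theta})\mathbf{X}_\mathbf{x}$ yields ${\mathbf F}_\mathbf{x} = \mathbf{X}_\mathbf{x}^T {\mathbf U}_\mathbf{x}\mathbf{X}_\mathbf{x}$ with ${\mathbf U}_\mathbf{x} = (u_{st}^\mathbf{x})_{s,t=1,\ldots,J}$ and $u_{st}^\mathbf{x} = E\{u_{st}^\mathbf{x}(\boldsymbol{\Theta})\}$ or $\hat E\{u_{st}^\mathbf{x}(\boldsymbol{\Theta})\}$; in particular ${\mathbf U}_\mathbf{x}$ inherits symmetry from each ${\mathbf U}_\mathbf{x}(\boldsymbol{\theta})$. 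By Theorem~\ref{thm:thm 2.2 page 64} and the simplified notation of Section~\ref{sec:lift_one_EW_mixed}, $d(\mathbf{x},\boldsymbol{\xi}) = {\rm tr}({\mathbf F}(\boldsymbol{\xi})^{-1}{\mathbf F}_\mathbf{x})$, which is well defined since $|{\mathbf F}(\boldsymbol{\xi})|>0$ and, by the standing EW setup of Section~\ref{sec:EW_D_optimal}, the entries $u_{st}^\mathbf{x}$ are finite.

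Next I would use cyclic invariance of the trace to write $d(\mathbf{x},\boldsymbol{\xi}) = {\rm tr}({\mathbf U}_\mathbf{x}{\mathbf M})$ with the $J\times J$ matrix ${\mathbf M} = \mathbf{X}_\mathbf{x}{\mathbf F}(\boldsymbol{\xi})^{-1}\mathbf{X}_\mathbf{x}^T$, and compute ${\mathbf M}$ entrywise from the sparse block pattern of $\mathbf{X}_\mathbf{x}$ in \eqref{eq:X_x} together with the block partition ${\mathbf F}(\boldsymbol{\xi})^{-1} = ({\mathbf E}_{st})$. Row $j$ of $\mathbf{X}_\mathbf{x}$ for $1\le j\le J-1$ has only $({\mathbf h}_j^\mathbf{x})^T$ in its $j$th block and $({\mathbf h}_c^\mathbf{x})^T$ in its last block, while row $J$ vanishes; hence $M_{st}=0$ whenever $s=J$ or $t=J$, and for $s,t\le J-1$,
\[
M_{st} = ({\mathbf h}_s^\mathbf{x})^T {\mathbf E}_{st}{\mathbf h}_t^\mathbf{x} + ({\mathbf h}_s^\mathbf{x})^T {\mathbf E}_{sJ}{\mathbf h}_c^\mathbf{x} + ({\mathbf h}_c^\mathbf{x})^T {\mathbf E}_{Jt}{\mathbf h}_t^\mathbf{x} + ({\mathbf h}_c^\mathbf{x})^T {\mathbf E}_{JJ}{\mathbf h}_c^\mathbf{x}.
\]

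Finally I would expand $d(\mathbf{x},\boldsymbol{\xi}) = {\rm tr}({\mathbf U}_\mathbf{x}{\mathbf M}) = \sum_{s,t=1}^{J-1} u_{st}^\mathbf{x} M_{ts}$ and regroup the four pieces of $M_{ts}$ into the claimed four sums: the ${\mathbf E}_{JJ}$ piece is the same scalar for every $(s,t)$ and factors out with the full double sum; the ${\mathbf E}_{ts}$ piece restricted to $s=t$ gives $\sum_{t=1}^{J-1} u_{tt}^\mathbf{x}({\mathbf h}_t^\mathbf{x})^T{\mathbf E}_{tt}{\mathbf h}_t^\mathbf{x}$, and its $s\neq t$ part, after pairing $(s,t)$ with $(t,s)$ and using $u_{st}^\mathbf{x}=u_{ts}^\mathbf{x}$ together with ${\mathbf E}_{ts}^T={\mathbf E}_{st}$ (so that a scalar equals its transpose), collapses to the factor-$2$ sum over $s<t$; and the two $c$-cross pieces coincide after relabeling $s\leftrightarrow t$ and transposing, producing twice one of them. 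I expect the only real work to be this last bookkeeping step --- tracking every index range, the placement of the factor $2$, and the grouping of the ${\mathbf h}_c$ blocks so that the result matches the stated expression term by term; the EW aspect itself contributes nothing beyond the observation in the first paragraph, and Assumptions~(A1) and (B3) are used only to guarantee that ${\mathbf F}_\mathbf{x}$, ${\mathbf F}(\boldsymbol{\xi})$, and the sensitivity function are the objects supplied by Theorem~\ref{thm:thm 2.2 page 64}.
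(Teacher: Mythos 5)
Your proposal is correct and follows essentially the same route as the paper: the paper's proof simply invokes Theorem~2 and Lemma~1 of \cite{huang2024forlion} for the block-trace identity ${\rm tr}(\mathbf{E}_{st}\,u_{st}^{\mathbf x}\,\mathbf{h}_s^{\mathbf x}(\mathbf{h}_t^{\mathbf x})^T)=u_{st}^{\mathbf x}(\mathbf{h}_t^{\mathbf x})^T\mathbf{E}_{st}\mathbf{h}_s^{\mathbf x}$ and notes that the only change is replacing $u_{st}^{\mathbf x}(\boldsymbol{\theta})$ by its (sample- or integral-based) expectation, which is exactly the cyclic-trace and block-sparsity computation you carry out explicitly. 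Your version merely writes out the bookkeeping that the paper delegates to the cited lemma, and both hinge on the same two observations: $\mathbf{X}_{\mathbf x}$ is $\boldsymbol{\theta}$-free so the expectation passes onto $\mathbf{U}_{\mathbf x}$, and the symmetry of $\mathbf{U}_{\mathbf x}$ and of $\mathbf{F}(\boldsymbol{\xi})^{-1}$ produces the factors of~$2$.
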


\begin{example}\label{ex:MLM_d_x}{\bf Multinomial logistic models (MLM):}\quad {\rm   
To calculate the first-order derivative of $d({\mathbf x}, \boldsymbol{\xi})$ under MLMs, we follow Appendix~C in \cite{huang2024forlion} after {\it (i)} ${\mathbf F}(\boldsymbol{\xi})$ is replaced with $E\{{\mathbf F}(\boldsymbol{\xi}, \boldsymbol{\Theta})\}$ or $\hat{E}\{{\mathbf F}(\boldsymbol{\xi}, \boldsymbol{\Theta})\}$; {\it (ii)} ${\mathbf F}_{\mathbf x}$ is replaced with $E\{{\mathbf F}({\mathbf x}, \boldsymbol{\Theta})\}$ or $\hat{E}\{{\mathbf F}({\mathbf x}, \boldsymbol{\Theta})\}$; {\it (iii)} ${\mathbf U}_{\mathbf x} = (u_{st}^{\mathbf x})_{s,t=1, \ldots, J}$ with $u_{st}^{\mathbf x}$  replaced by $E\{u_{st}^{\mathbf x}(\boldsymbol{\Theta})\}$ or $\hat{E}\{u_{st}^{\mathbf x}(\boldsymbol{\Theta})\}$; and {\it (iv)} when $k\geq 1$, for $i=1, \ldots, k$, $\partial{\mathbf U}_{\mathbf x}/\partial x_i = (\partial u_{st}^{\mathbf x}/\partial x_i)_{s,t=1, \ldots, J}$ with $\partial u_{st}^{\mathbf x}/\partial x_i$ replaced by 
\[
\frac{\partial E\{u_{st}^{\mathbf x}(\boldsymbol{\Theta})\}}{\partial x_i} = \frac{\partial }{\partial x_i}\int_{\boldsymbol{\Theta}} u_{st}^{\mathbf x}(\boldsymbol{\theta}) Q(d\boldsymbol{\theta})\>\mbox{ or }\>\hat{E}\left\{\frac{\partial u_{st}^{\mathbf x}}{\partial x_i}(\boldsymbol{\Theta})\right\} = \frac{1}{B} \sum_{j=1}^B \frac{\partial u_{st}^{\mathbf x}}{\partial x_i}(\hat{\boldsymbol{\theta}}_j)\ .
\]
Note that $(\partial u_{st}^{\mathbf x}/\partial x_i)(\hat{\boldsymbol{\theta}}_j)$ denotes $\partial u_{st}^{\mathbf x}/\partial x_i$ at $\boldsymbol{\theta}=\hat{\boldsymbol{\theta}}_j$~, whose formulae can be found in Appendix~C of \cite{huang2024forlion}. 
}\hfill{$\Box$}
\end{example}

\begin{example}\label{ex:GLM_d_x}{\bf Generalized linear models (GLM):}\quad {\rm   
Under GLMs, the sensitivity function $d({\mathbf x}, \boldsymbol{\xi}) = E[\nu\{{\mathbf h}({\mathbf x})^T \boldsymbol{\Theta}\}]\cdot {\mathbf h}({\mathbf x})^T [E\{{\mathbf F}(\boldsymbol{\xi}, \boldsymbol{\Theta})\}]^{-1} {\mathbf h}({\mathbf x})$ for $f_{\rm EW}(\boldsymbol{\xi})$ or $\hat{E}[\nu\{{\mathbf h}({\mathbf x})^T \boldsymbol{\Theta}\}]\cdot {\mathbf h}({\mathbf x})^T [\hat{E}\{{\mathbf F}(\boldsymbol{\xi}, \boldsymbol{\Theta})\}]^{-1} {\mathbf h}({\mathbf x})$ for $f_{\rm SEW}(\boldsymbol{\xi})$.
Following Section~S.5 of the Supplementary Material of \cite{huang2024forlion}, we denote ${\mathbf A}_E = [{\mathbf X}_{{\boldsymbol\xi}}^T E\{{\mathbf W}_{{\boldsymbol\xi}}(\boldsymbol{\Theta})\} {\mathbf X}_{{\boldsymbol\xi}}]^{-1}$ and ${\mathbf A}_{\hat E} = [{\mathbf X}_{{\boldsymbol\xi}}^T \hat{E}\{{\mathbf W}_{{\boldsymbol\xi}}(\boldsymbol{\Theta})\} {\mathbf X}_{{\boldsymbol\xi}}]^{-1}$.
The first-order derivative of $d({\mathbf x}, \boldsymbol{\xi})$ with respect to continuous factors ${\mathbf x}_{(1)} = (x_1, \ldots, x_k)^T$ is $\partial d({\mathbf x}, \boldsymbol{\xi})/\partial {\mathbf x}_{(1)} = {\mathbf h}({\mathbf x})^T {\mathbf A}_E {\mathbf h}({\mathbf x}) \cdot \partial E[\nu\{{\mathbf h}({\mathbf x})^T {\boldsymbol\Theta}\}]/\partial {\mathbf x}_{(1)} + 2\cdot E[\nu\{{\mathbf h}({\mathbf x})^T {\boldsymbol\Theta}\}] \cdot \{\partial {\mathbf h}({\mathbf x})/\partial {\mathbf x}_{(1)}^T\}^T {\mathbf A}_E {\mathbf h}({\mathbf x})$ for $f_{\rm EW}(\boldsymbol{\xi})$, or 
\begin{eqnarray*}
\frac{\partial d({\mathbf x}, \boldsymbol{\xi})}{\partial {\mathbf x}_{(1)}}
&=& {\mathbf h}({\mathbf x})^T {\mathbf A}_{\hat E} {\mathbf h}({\mathbf x}) \cdot \left\{\frac{\partial {\mathbf h}({\mathbf x})}{\partial {\mathbf x}_{(1)}^T}\right\}^T \cdot \frac{1}{B}\sum_{j=1}^B \nu'\{{\mathbf h}({\mathbf x})^T {\hat{\boldsymbol\theta}}_j\} \hat{\boldsymbol{\theta}}_j\\
&+& \frac{2}{B}\cdot \sum_{j=1}^B \nu\{{\mathbf h}({\mathbf x})^T \hat{\boldsymbol\theta}_j\} \cdot \left\{\frac{\partial {\mathbf h}({\mathbf x})}{\partial {\mathbf x}_{(1)}^T}\right\}^T {\mathbf A}_{\hat E} {\mathbf h}({\mathbf x})
\end{eqnarray*}
for $f_{\rm SEW}(\boldsymbol{\xi})$.
}\hfill{$\Box$}
\end{example}

Similar to Theorem~4 in \cite{huang2024forlion}, we obtain the following result for GLM under EW D-optimality:

\begin{theorem}\label{thm:sensitivity function eq thm for GLM}
For a GLM model under Assumptions~(A1) and (B3), consider $\boldsymbol{\xi} \in \boldsymbol{\Xi}$ satisfying  $|\mathbf{X}_{\boldsymbol{\xi}}^T {\mathbf W}_{{\boldsymbol\xi}} \mathbf{X}_{\boldsymbol{\xi}}|>0$, where ${\mathbf W}_{{\boldsymbol\xi}}$ is either $E\{{\mathbf W}_{{\boldsymbol\xi}}(\boldsymbol{\Theta})\}$ for integral-based EW D-optimality or $\hat{E}\{{\mathbf W}_{{\boldsymbol\xi}}(\boldsymbol{\Theta})\}$ for sample-based EW D-optimality (see Example~\ref{ex:GLM_E_Fx}). Then $\boldsymbol{\xi}$ is EW D-optimal if and only if $\max _{\mathbf{x} \in \mathcal{X}} \allowbreak E[\nu\{{\mathbf h}({\mathbf x})^T \boldsymbol{\Theta}\}]\cdot {\mathbf h}({\mathbf x})^T [\mathbf{X}_{\boldsymbol{\xi}}^T {\mathbf W}_{{\boldsymbol\xi}} \mathbf{X}_{\boldsymbol{\xi}}]^{-1} {\mathbf h}({\mathbf x}) \leq p$ for integral-based D-optimality or $\max _{\mathbf{x} \in \mathcal{X}} \hat{E}[\nu\{{\mathbf h}({\mathbf x})^T \boldsymbol{\Theta}\}]\cdot {\mathbf h}({\mathbf x})^T [\mathbf{X}_{\boldsymbol{\xi}}^T {\mathbf W}_{{\boldsymbol\xi}} \mathbf{X}_{\boldsymbol{\xi}}]^{-1} {\mathbf h}({\mathbf x}) \leq p$ for sample-based EW D-optimality.
\end{theorem}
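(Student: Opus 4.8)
The plan is to derive Theorem~\ref{thm:sensitivity function eq thm for GLM} as a specialization of Theorem~\ref{thm:thm 2.2 page 64}(iii) to the GLM setting, using the explicit form of the Fisher information established in Example~\ref{ex:GLM_E_Fx}. First I would invoke Theorem~\ref{thm:thm 2.2 page 64}: under Assumptions~(A1), (B3), together with (A2) in the sample-based case or (A3) in the integral-based case, a design $\boldsymbol{\xi}$ is EW D-optimal if and only if $\max_{\mathbf{x}\in\mathcal{X}} d(\mathbf{x},\boldsymbol{\xi})\leq p$, where $d(\mathbf{x},\boldsymbol{\xi})={\rm tr}(\mathbf{F}(\boldsymbol{\xi})^{-1}\mathbf{F}_{\mathbf{x}})$ in the simplified notation of Section~\ref{sec:lift_one_EW_mixed}. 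So the theorem reduces to showing that for GLMs the sensitivity function $d(\mathbf{x},\boldsymbol{\xi})$ equals $E[\nu\{\mathbf{h}(\mathbf{x})^T\boldsymbol{\Theta}\}]\cdot\mathbf{h}(\mathbf{x})^T[\mathbf{X}_{\boldsymbol{\xi}}^T\mathbf{W}_{\boldsymbol{\xi}}\mathbf{X}_{\boldsymbol{\xi}}]^{-1}\mathbf{h}(\mathbf{x})$, with the appropriate reading of $\mathbf{W}_{\boldsymbol{\xi}}$.

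Next I would carry out the substitution. By Example~\ref{ex:GLM_E_Fx}, $\mathbf{F}_{\mathbf{x}}=E\{\mathbf{F}(\mathbf{x},\boldsymbol{\Theta})\}=E[\nu\{\mathbf{h}(\mathbf{x})^T\boldsymbol{\Theta}\}]\cdot\mathbf{h}(\mathbf{x})\mathbf{h}(\mathbf{x})^T$ (and analogously with $\hat{E}$ in the sample-based case), while $\mathbf{F}(\boldsymbol{\xi})=E\{\mathbf{F}(\boldsymbol{\xi},\boldsymbol{\Theta})\}=\mathbf{X}_{\boldsymbol{\xi}}^T E\{\mathbf{W}_{\boldsymbol{\xi}}(\boldsymbol{\Theta})\}\mathbf{X}_{\boldsymbol{\xi}}$. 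Plugging these into $d(\mathbf{x},\boldsymbol{\xi})={\rm tr}(\mathbf{F}(\boldsymbol{\xi})^{-1}\mathbf{F}_{\mathbf{x}})$ gives
\[
d(\mathbf{x},\boldsymbol{\xi})=E[\nu\{\mathbf{h}(\mathbf{x})^T\boldsymbol{\Theta}\}]\cdot{\rm tr}\left([\mathbf{X}_{\boldsymbol{\xi}}^T\mathbf{W}_{\boldsymbol{\xi}}\mathbf{X}_{\boldsymbol{\xi}}]^{-1}\mathbf{h}(\mathbf{x})\mathbf{h}(\mathbf{x})^T\right),
\]
since the scalar $E[\nu\{\cdot\}]$ pulls out of the trace. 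The cyclic property of the trace, ${\rm tr}(\mathbf{A}\mathbf{h}\mathbf{h}^T)={\rm tr}(\mathbf{h}^T\mathbf{A}\mathbf{h})=\mathbf{h}^T\mathbf{A}\mathbf{h}$ for a vector $\mathbf{h}$, then collapses this to the stated quadratic form. The hypothesis $|\mathbf{X}_{\boldsymbol{\xi}}^T\mathbf{W}_{\boldsymbol{\xi}}\mathbf{X}_{\boldsymbol{\xi}}|>0$ is exactly the condition $|\mathbf{F}(\boldsymbol{\xi})|>0$ guaranteeing invertibility. Finally I would replace $E$ by $\hat{E}$ throughout to obtain the sample-based half, noting Example~\ref{ex:GLM_E_Fx} supplies the parallel identities.

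The only genuine verification needed is that the GLM continuity hypothesis on $\mathbf{h}=(h_1,\ldots,h_p)^T$ together with boundedness of $\boldsymbol{\Theta}$ secures the assumptions feeding Theorem~\ref{thm:thm 2.2 page 64}; but this is precisely the content already packaged in Theorem~\ref{thm:GLM design points}, so I would cite it rather than re-prove it. Concretely, continuity of $\nu$ (as a composition of the smooth link-derived function with the continuous linear predictor $\mathbf{h}(\mathbf{x})^T\boldsymbol{\theta}$) gives (A2), and boundedness of $\boldsymbol{\Theta}$ lets one take a constant dominating function $K(\boldsymbol{\theta})$, yielding (A3). Thus no step is a serious obstacle here: the result is essentially a bookkeeping specialization, and the main (minor) care is just tracking the two parallel notations $E\{\cdot\}$ versus $\hat{E}\{\cdot\}$ and confirming the scalar-pull-out and trace-cyclicity steps are legitimate, which they are because $\mathbf{F}_{\mathbf{x}}$ is rank one.
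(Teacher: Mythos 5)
Your proposal is correct and follows essentially the same route as the paper's own proof: invoke Theorem~\ref{thm:thm 2.2 page 64}(iii) for the equivalence $\max_{\mathbf{x}}d(\mathbf{x},\boldsymbol{\xi})\leq p$, then substitute ${\mathbf F}(\boldsymbol{\xi})=\mathbf{X}_{\boldsymbol{\xi}}^T{\mathbf W}_{\boldsymbol{\xi}}\mathbf{X}_{\boldsymbol{\xi}}$ and the rank-one form of ${\mathbf F}_{\mathbf x}$ to collapse the trace to the stated quadratic form. Your additional remarks on trace cyclicity and on the assumptions being secured via Theorem~\ref{thm:GLM design points} only make explicit what the paper leaves implicit.
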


\section{Applications to Real Experiments}
\label{sec:applications}

In this section, we use the paper feeder experiment (see Section~\ref{sec:intro} and Example~\ref{ex:paper_feeder_example}) and a minimizing surface defects experiment (see also Example~\ref{ex:minimizing_surface_example}) both under MLMs to illustrate the advantages gained by adopting our algorithms and the EW D-optimal designs. For examples under GLMs, please see Section~\ref{sec:Additional Examples: Generalized Linear Models} of the Supplementary Material.

\subsection{Paper feeder experiment}\label{sec:paper_feeder}

In this section, we consider the motivating example mentioned in Section~\ref{sec:intro} and Example~\ref{ex:paper_feeder_example}, the paper feeder experiment. There are eight discrete control factors and one continuous factor, the stack force, under our consideration (see Section~\ref{sec:Model_Selection_Paper_Feeder_Experiment} for more details). As mentioned in Section~\ref{sec:intro}, instead of using two separate GLMs to model the two types of failures, {\tt misfeed} and {\tt multifeed}, we consider multinomial logistic models (MLM) to model all three possible outcomes, namely  {\tt misfeed}, {\tt normal}, and {\tt multifeed}, at the same time. Using Akaike information criterion (AIC, \cite{AIC1973information}) and Bayesian information criterion (BIC, \cite{hastie2009elements}), we adopt a cumulative logit model with non-proportional odds as the most appropriate model for this experiment (see Table~\ref{tab:Model comparison for the Paper Feeder Experiment} in Section~\ref{sec:Model_Selection_Paper_Feeder_Experiment} of the Supplementary Material for more details).

For this experiment, we compare ten different designs (see Tables~\ref{tab:Robust Designs for the Paper Feeder Experiment} and \ref{tab:Efficiencies of designs for paper feeder experiment}): (1) ``Original allocation'': the original design that collected 1,785 observations roughly uniformly at 183 distinct experimental settings; (2) ``EW Bu-appro'': an approximate design  obtained by applying \cite{bu2020}'s EW lift-one algorithm on the original 183 distinct settings; (3) ``EW Bu-exact'': an exact design obtained by applying \cite{bu2020}'s EW exchange algorithm on the 183 distinct settings with $n=1,785$; (4) ``EW Bu-grid2.5'' \& (5) ``EW Bu-grid2.5 exact'': approximate and exact designs by applying \cite{bu2020}'s algorithms after discretizing the range $[0,160]$ of stack force into grid points with equal space 2.5; (6) ``EW ForLion'': the proposed EW D-optimal approximate design by applying Algorithm~\ref{alg:EW ForLion} to the continuous factor, stack force, ranging in $[0,160]$; (7)$\sim$(10) exact designs obtained by applying Algorithm~\ref{alg:appro to exact} with different grid levels or pace lengths ($L=0.1, 0.5, 1, 2.5$) and $n=1,785$. As mentioned in Example~\ref{ex:paper_feeder_example}, for illustration purposes, all designs follow the same 18 runs modified from $OA(18, 2^1\times 3^7)$ for the eight discrete factors, while their design points can be different in terms of the levels of the continuous factor.
Since the feasible space of a cumulative logit model is not rectangular (see, e.g., Example~\ref{ex:trauma_clinical_trial}, as well as Example~5.2 in \cite{bu2020} and Example~8 in \cite{huang2025constrained}), we bootstrap the original dataset to collect $B=100$ (due to computational intensity) samples and the corresponding estimated model parameters, denoted by $\hat{\boldsymbol{\theta}}_1, \hat{\boldsymbol{\theta}}_2, \ldots, \hat{\boldsymbol{\theta}}_{100}$~. All the EW designs are under the sample-based EW D-optimality with respect to the 100 parameter vectors.  
The constructed designs are presented in Tables~\ref{tab:D-Optimal designs for the paper feeder experiment} and \ref{tab:EW Forlion optimal designs} of the Supplementary Material. 

In Table~\ref{tab:Robust Designs for the Paper Feeder Experiment}, we list the numbers of design points $m$, the computational times in seconds for obtaining the designs, the objective function values, and the relative efficiencies compared with the recommended EW ForLion design, that is, $(|\hat{E}\{{\mathbf F}(\boldsymbol{\xi}, \boldsymbol \Theta)\}|/|\hat{E}\{{\mathbf F}(\boldsymbol{\xi}_{\rm EWForLion},$ $\boldsymbol\Theta)\}|)^{1/p}$ with $p=32$. Compared with our EW ForLion design, the original design on $183$ settings only achieves $63.03\%$ relative efficiency. Having applied \cite{bu2020}'s algorithms on the same 183 settings, the relative efficiencies of EW Bu-appro and EW Bu-exact, $88.16\%$ and $88.15\%$, are not satisfactory either. When we apply \cite{bu2020}'s algorithms on grid points with length $2.5$, it takes much longer time to find EW Bu-grid2.5 and EW Bu-grid2.5 exact designs, which also consist of more design points (55 and 44). The EW ForLion design and its exact designs at different grid levels require the minimum number of experimental settings, less computational time, and satisfactory relative efficiencies.

\begin{table}[ht]
    \centering
    \captionsetup{skip=2pt} 
    \caption{Robust designs for the paper feeder experiment}
    {
    \renewcommand{\arraystretch}{0.5}
        \resizebox{\textwidth}{!}{
    \begin{threeparttable}
          \begin{tabular}{ccrccrccccr}
    \toprule
    Designs &  &$m$& & & Time (s) & & &$|\hat{E}\{{\mathbf F}(\boldsymbol{\xi}, \boldsymbol \Theta)\}|$ &   &Relative Efficiency \\
    \midrule
    Original allocation&  &183&  &  & - &  &  &1.342e+28&   &63.03\% \\
     EW Bu-appro & & 38 &  &  & 116s &  &  &6.162e+32&   &88.16\%\\
    EW Bu-exact &  & 38 &  &  & 1095s &  &  &6.159e+32&   &88.15\%\\
   EW Bu-grid2.5&   &55 &  &  &  15201s &  &  &2.078e+34&   &98.40\%\\
     EW Bu-grid2.5 exact&  &44 &  &  & 69907s &  &  &2.078e+34&  &98.40\%\\
      EW ForLion&  & 38 &  &  &  1853s  &  &  &3.482e+34&   &100.00\%\\
     EW ForLion exact grid0.1&   & 38 &  &  &  0.28s &  &  & 2.122e+34 &  &98.46\%\\
     EW ForLion exact grid0.5&   & 38 &  &  &  0.24s &  &  & 2.065e+34 &  &98.38\%\\
     EW ForLion exact grid1&   & 38 &  &  &  0.23s &  &  &  1.579e+34 &  &97.56\%\\
     EW ForLion exact grid2.5&   & 38 &  &  &  0.22s &  &  & 1.326e+34 &  &97.03\%\\
    \bottomrule
    \end{tabular}
     \begin{tablenotes}
         \footnotesize
          \setlength{\baselineskip}{8pt} 
         \item Note: Time for EW ForLion exact designs are for applying Algorithm~\ref{alg:appro to exact} to EW ForLion (approximate) design.
         \normalsize
     \end{tablenotes}
    \end{threeparttable}
    }
    }
    \label{tab:Robust Designs for the Paper Feeder Experiment}
\end{table}

To assess the robustness of these designs, we first use the original ForLion algorithm \citep{huang2024forlion} to find the corresponding locally D-optimal design $\boldsymbol{\xi}_j^*$ for each $\hat{\boldsymbol{\theta}}_j$~.  Then the robustness of a design $\boldsymbol{\xi}\in \boldsymbol{\Xi}$ can be evaluated by the 100 relative efficiencies $(|\mathbf{F}(\boldsymbol{\xi}, \hat{\boldsymbol{\theta}}_j)| / |\mathbf{F}(\boldsymbol{\xi}_j^*, \hat{\boldsymbol{\theta}}_j)|)^{1/p}$, $j=1, \ldots, 100$.
Table~\ref{tab:Efficiencies of designs for paper feeder experiment} (see also Figure~\ref{fig:PFE_simulation_results} in the Supplementary Material) presents the five-number summary of the 100 relative efficiencies. Overall, the EW ForLion design is the most robust one against parameter misspecifications. For practical uses, we recommend the grid-0.1 exact design (EW ForLion exact grid0.1), which has nearly the same robustness as the EW D-optimal approximate design and the smallest number of distinct experimental settings (i.e., 38).

\begin{table}[ht]
\centering
\captionsetup{skip=2pt} 
\caption{Summary of 100 relative efficiencies against locally D-optimal designs for paper feeder experiment}
 {
    \renewcommand{\arraystretch}{0.5}
        \resizebox{\textwidth}{!}{
\begin{tabularx}{\textwidth}{c*{6}{X}}  %
\toprule
Robust design &Min  & $Q_1$  & Median   & $Q_3$ &Max\\  
    \hline
Original allocation  & 0.5075  &0.6005& 0.6115 & 0.6308 &0.6587\\ 
EW Bu-appro  & 0.6813  &0.8133 & 0.8343  & 0.8488&0.8741\\
EW Bu-exact& 0.6813  &0.8134& 0.8346  & 0.8489&0.8742\\
EW Bu-grid2.5  & 0.7592 &0.9183& 0.9327   & 0.9417&0.9513\\
EW Bu-grid2.5 exact  & 0.7594 &0.9184& 0.9328   & 0.9417  &0.9515\\
EW ForLion & 0.8363  &0.9277& 0.9364  & 0.9445 &0.9542\\
EW ForLion exact grid0.1  & 0.8028  &0.9276& 0.9361 &  0.9443  &0.9541\\
EW ForLion exact grid0.5   & 0.7569  &0.9269& 0.9363 & 0.9447 &0.9534\\
EW ForLion exact grid1   & 0.7502  &0.9163 & 0.9315   & 0.9389  &0.9489 \\
EW ForLion exact grid2.5   & 0.7548   &0.9098 & 0.9253  & 0.9348 &0.9439\\
\toprule
\end{tabularx}
}
}
\label{tab:Efficiencies of designs for paper feeder experiment}
\end{table}

\subsection{Minimizing surface defects experiment}\label{sec:Msd_experiment}

\cite{wu2008} presented a study that investigated optimal parameter settings in a polysilicon deposition process for circuit manufacturing (see also Example~\ref{ex:minimizing_surface_example}). This experiment involves responses of $J=5$ categories, one discrete factor $x_{i1}$ and five continuous factors $x_{i2} \in [-25, 25], x_{i3} \in [-200, 200], x_{i4} \in [-150, 0], x_{i5} \in [-100,0], x_{i6} \in [0,16]$ (see Table~S1 in the Supplementary Material of \cite{huang2024forlion} or Table~6 in \cite{lukemire_optimal_2022}).
Having simplified the discrete factor $x_{i1}$ into a binary one in $\{-1,1\}$,  \cite{lukemire_optimal_2022} considered a cumulative logit model with proportional odds:
$$
\log \left(\frac{\pi_{i 1}+\cdots+\pi_{i j}}{\pi_{i, j+1}+\cdots+\pi_{i J}}\right)=\theta_j-\beta_1 x_{i 1}-\beta_2 x_{i 2}-\beta_3 x_{i 3}-\beta_4 x_{i 4}-\beta_5 x_{i 5}-\beta_6 x_{i 6}
$$
with $i=1, \ldots, m$, $j=1,2,3,4$, where $\pi_{ij}$ is the probability that the response associated with ${\mathbf x}_i = (x_{i1}, \ldots, x_{i6})^T$ falls into the $j$th category.

Assuming $\boldsymbol{\theta} = (\theta_1, \ldots, \theta_4, \beta_1, \ldots, \beta_6)^T = (-1.113, 0.183, 1.518, 2.639, - \allowbreak 0.970, 0.077,$ $0.008, -0.007, 0.007, 0.056)^T$, \cite{lukemire_optimal_2022} employed a PSO algorithm and derived a locally D-optimal approximate design with 14 design points, while \cite{huang2024forlion} applied their ForLion algorithm and obtained a locally D-optimal approximate design with 17 design points.

In practice, experimenters may not know the precise values of some or all parameters but often have varying degrees of insight into their actual values. To construct a robust design against parameter misspecifications, we sample $B=1,000$ parameter vectors from the prior distribution listed in Table~S1 of the Supplementary Material of \cite{huang2024forlion}, namely $\theta_1, \ldots, \theta_4, \beta_1, \ldots, \beta_6$ are independently from uniform distributions with ranges $(-2, -1), (-0.5, 0.5), (1, 2), (2.5, 3.5), (-1, 0), (0, 0.2), (-0.1, 0.1),$ $(-0.1, 0.1), (-0.1, 0.1), (0, 0.2)$, respectively. 
 
We use Algorithm~\ref{alg:EW ForLion} to construct a sample-based EW D-optimal approximate design (see the left side of Table~\ref{tab:surface_defects_Minimizing_EW_ForLion_design}). Assuming that the total number of experimental units is $n=1,000$, we apply our Algorithm~\ref{alg:appro to exact} with $L_1 = \cdots = L_5=1$ for continuous control variables and obtain an exact design, as shown on the right side of Table~\ref{tab:surface_defects_Minimizing_EW_ForLion_design}. Both our approximate and exact designs have 17 support points. 

For comparison purposes, we also construct a Bayesian D-optimal design for this experiment under the same prior distribution. Since there are five continuous control factors in this experiment, due to the heavy computational cost of Bayesian D-optimality, we follow \cite{bu2020} and discretize each continuous factor into two evenly distributed grid points. The constructed Bayesian D-optimal design, which costs 29,112 seconds (about twice as much as 14,718 seconds cost by the EW ForLion algorithm), contains 64 support points with weights ranging from $0.0007$ to $0.0263$.

\begin{table}[ht]
    \centering
    \captionsetup{skip=2pt} 
    \caption{EW D-optimal approximate (left) and exact (right) designs by EW ForLion and rounding algorithms for minimizing surface defects experiment} 
     {
\renewcommand{\arraystretch}{0.8}
    \resizebox{\textwidth}{!}{
    \begin{tabular}{c|ccccccc|c|ccccccc}
        \hline
         Design  & Cleaning & Deposition & Deposition & Nitrogen & Silane & Settling &  & Design  & Cleaning & Deposition & Deposition & Nitrogen & Silane & Settling &   \\
        Point & Method & Temp. & Pressure & Flow & Flow & Time & $w_i$ & Point & Method & Temp. & Pressure & Flow & Flow & Time & $n_i$ \\
        \hline
  1&  -1 & -25.000 & 200.000&   -150 & 0 & 0 & 0.0475&
  1&  -1 &   -25 & 200  &  -150 & 0  & 0 &48 \\
  
  2&  1 & 25.000 &  0 &    0  &  0  &  0&0.0986&
  2 & 1 &  25  & 0 &   0  &  0  &  0& 99\\
  
  3&   1& 25.000 &   200.000 &   -150 &  0  &  16&0.0380&
  3&   1&  25 &   200 &   -150 &   0 &   16& 38 \\
  
  4&  -1& -25.000 &  0 &   0   & 0  & 16&0.1206&
  4&  -1&  -25 &  0  &  0  &  0 &  16 & 121\\
  
  5&   1& 0 &  0 &   0 &0 &   16&0.0984&
  5&   1&  0 & 0 &   0 &0 &   16& 98\\
  
  6&  -1&  25.000 & -104.910 &-150 &-100 &  16&0.0426 &
  6&  -1&   25 & -105& -150& -100 &  16&  43\\
  
  7&   -1&  14.870 &  -1.299  &  0  &  0   & 0&0.1017&
  7&   -1&   15  &  -1  &  0  &  0  &  0 & 102\\
  
  8&   1& -13.168 &  2.802  &  0  &  0  & 16&0.0399&
  8&   1&   -13 &  3  &  0  &  0  & 16& 40 \\
  
  9&   -1& 25.000 & 111.805 &0 &  -100&   16&0.0533&
  9&   -1&  25 & 112 & 0  &  -100  &  16 & 53 \\
  
 10&   -1 &25.000 &  -72.107 &  0  &   -100  & 16 &0.0432&
 10&   -1 &  25  &-72 &   0  &  -100 &  16&43\\
 
 11&    -1& -25.000 & -170.690  &  -150 & 0 &   0 & 0.0401&
 11&    -1&  -25 & -171 &   -150&  0  &  0& 40  \\
 
 12&    1& 25.000 &-158.631 &-150  &  0 &   16 &0.0470&
 12&    1&  25 & -159 & -150 &   0  &  16&  47 \\
 
 13&    1& -25.000& -200.000 &-150 &   -100 &  0&0.0215&
 13&    1&  -25 &-200& -150 & -100  & 0& 21 \\
 
 14&   1& -25.000 &-20.500&   0 &-100 &  0 &0.0843&
 14&   1&  -25& -20  &  0 &-100  & 0& 84 \\
 
 15&    1 & -25.000& 200.000 &-150  & -100 & 0  &0.0415&
 15&    1 &  -25 &200& -150 &  -100 & 0&  42 \\
 
 16&   1 & -2.925 &  0.017  &  0 &   0 &   0&0.0294&
 16&   1 &  -3 &  0  &  0  &  0  &  0&29\\

 17&   -1 & -10.568 &  -1.093  &  0 &   0 &   0&0.0524&
 17&   -1 &  -11 &  -1  &  0  &  0  &  0&52\\
        \hline
    \end{tabular}
}}
\label{tab:surface_defects_Minimizing_EW_ForLion_design}
\end{table}

To compare the robustness of different designs against parameter misspecifications, we sample 10,000 parameter vectors from the same prior distribution. For each sampled parameter vector $\boldsymbol{\theta}_j$~, we calculate the relative efficiency of a design $\boldsymbol{\xi}$ under comparison with respect to our EW ForLion approximate design $\boldsymbol{\xi}_{\rm EW ForLion}$~, i.e., $(|\mathbf{F}(\boldsymbol{\xi}, \boldsymbol{\theta}_j)|/|\mathbf{F}(\boldsymbol{\xi}_{\rm EW ForLion}, \allowbreak \boldsymbol{\theta}_j)|)^{1/p}$ with $p=10$ in this case.

\begin{figure}[ht]
  \centering
   \begin{subfigure}[b]{0.45\textwidth}
    \centering
    \includegraphics[width=\textwidth]{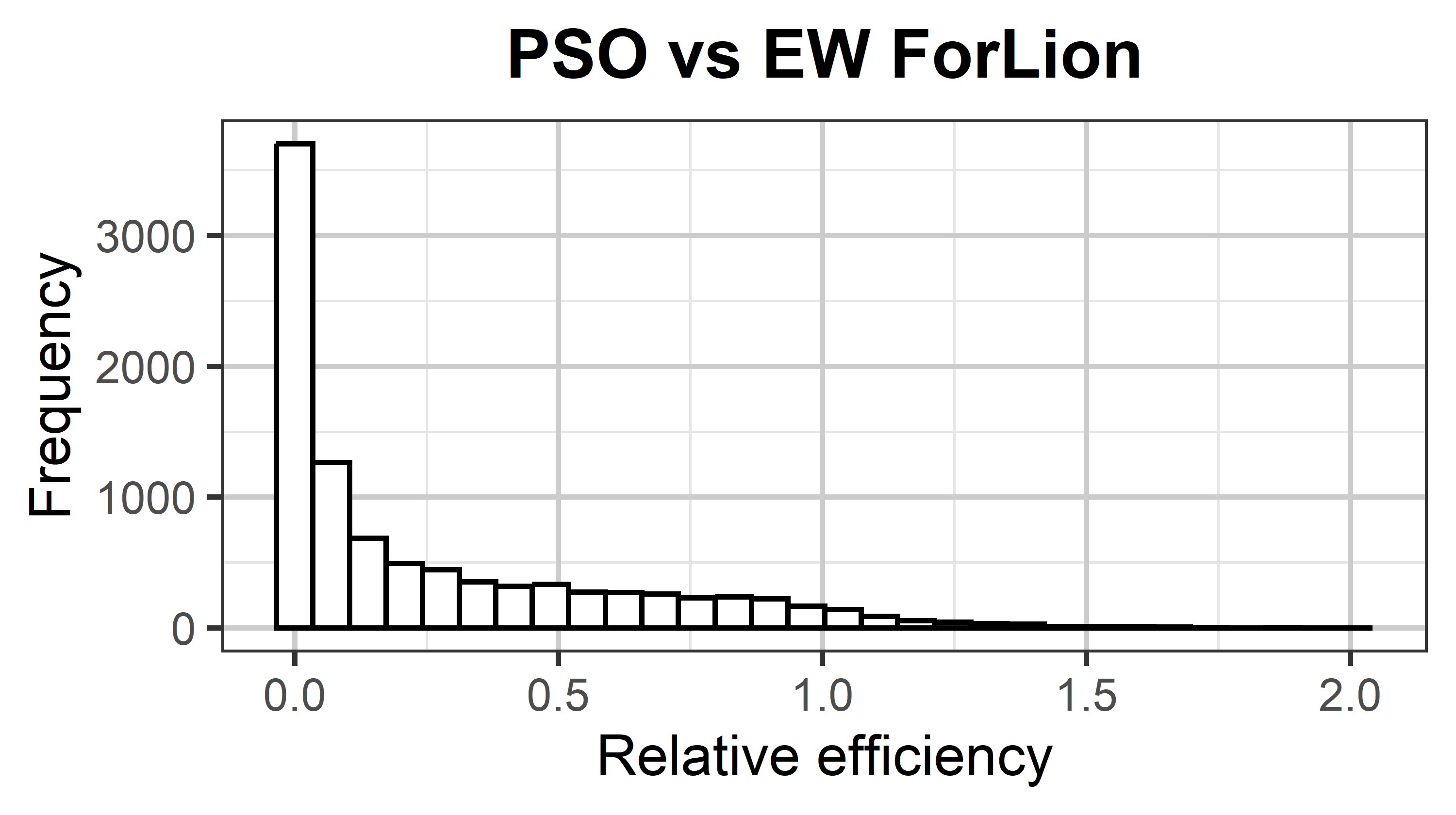} 
    \label{fig:PSO_vs_EW_ForLion}
  \end{subfigure}
  \hfill
  \begin{subfigure}[b]{0.45\textwidth}
    \centering
    \includegraphics[width=\textwidth]{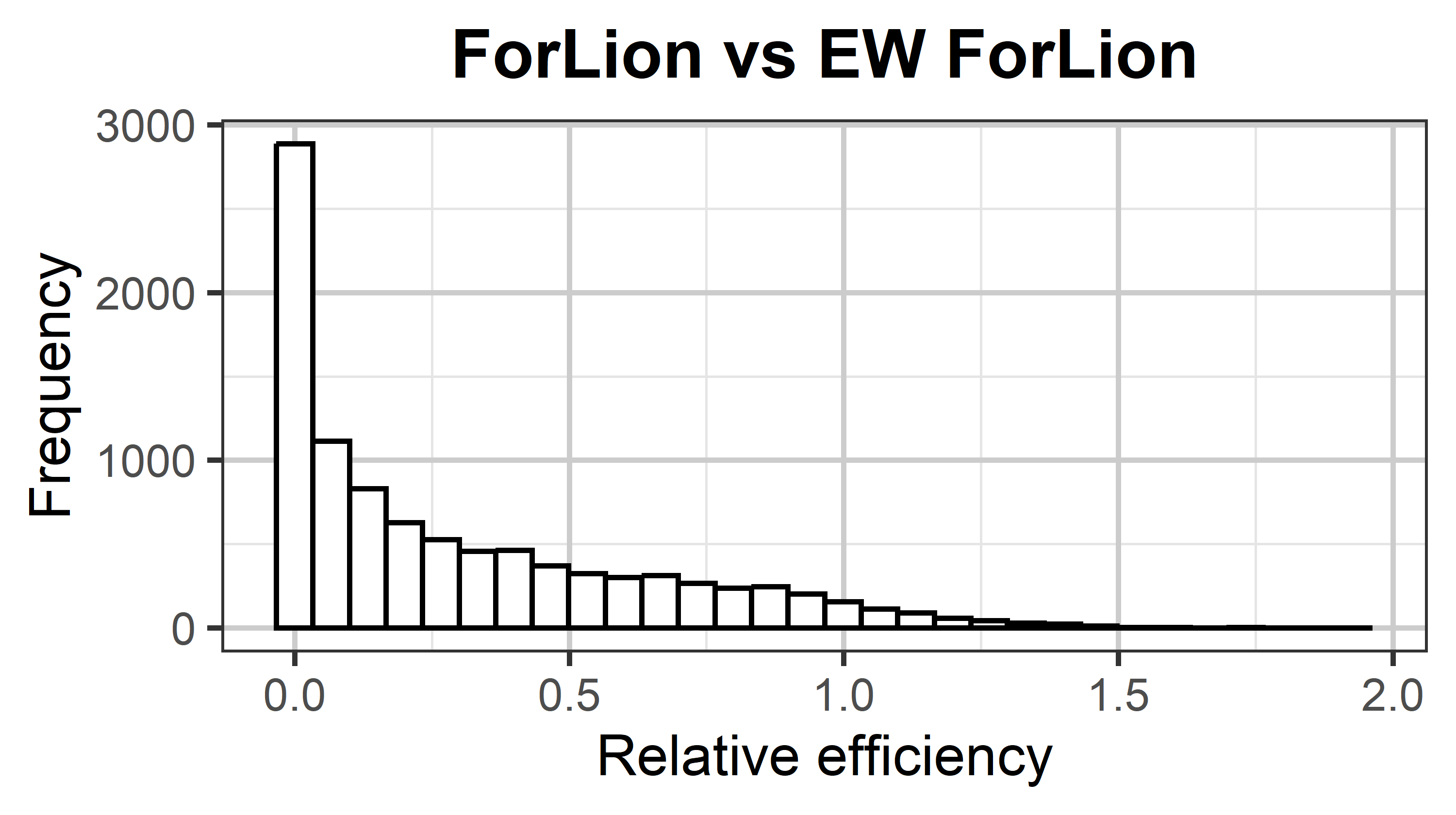} 
    \label{fig: ForLion_vs_EW ForLion}
  \end{subfigure}\\ 
  \begin{subfigure}[b]{0.45\textwidth}
    \centering
    \includegraphics[width=\textwidth]{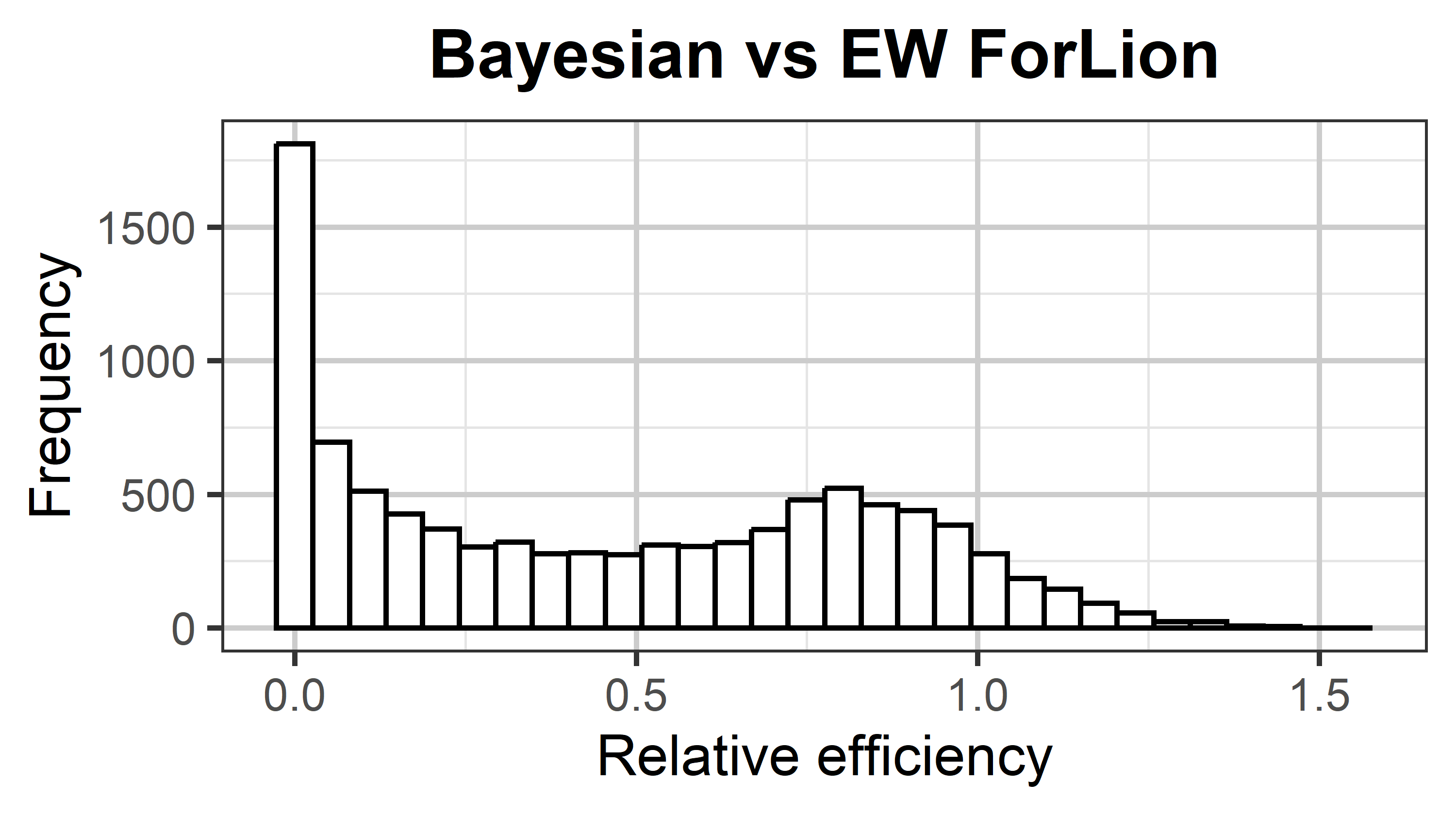} 
    \label{fig:Bayesian_vs_EW ForLion}
  \end{subfigure}
  \hfill
  \begin{subfigure}[b]{0.45\textwidth}
    \centering
    \includegraphics[width=\textwidth]{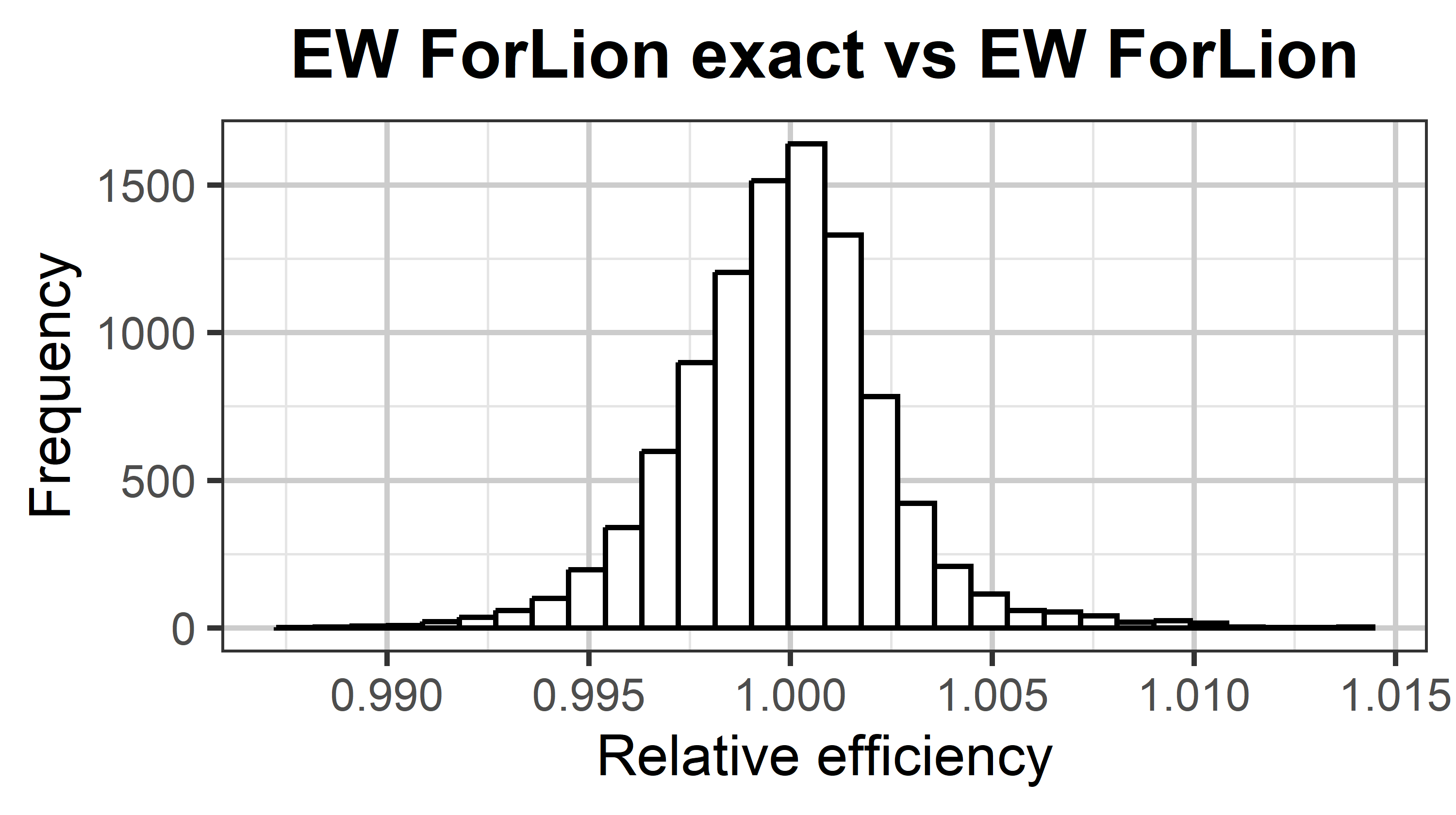} 
    \label{fig:EW_ForLion_exact_vs_EW ForLion}
  \end{subfigure}
  \hfill   \\  
  \begin{subfigure}[b]{0.45\textwidth}
    \centering
    \includegraphics[width=\textwidth]{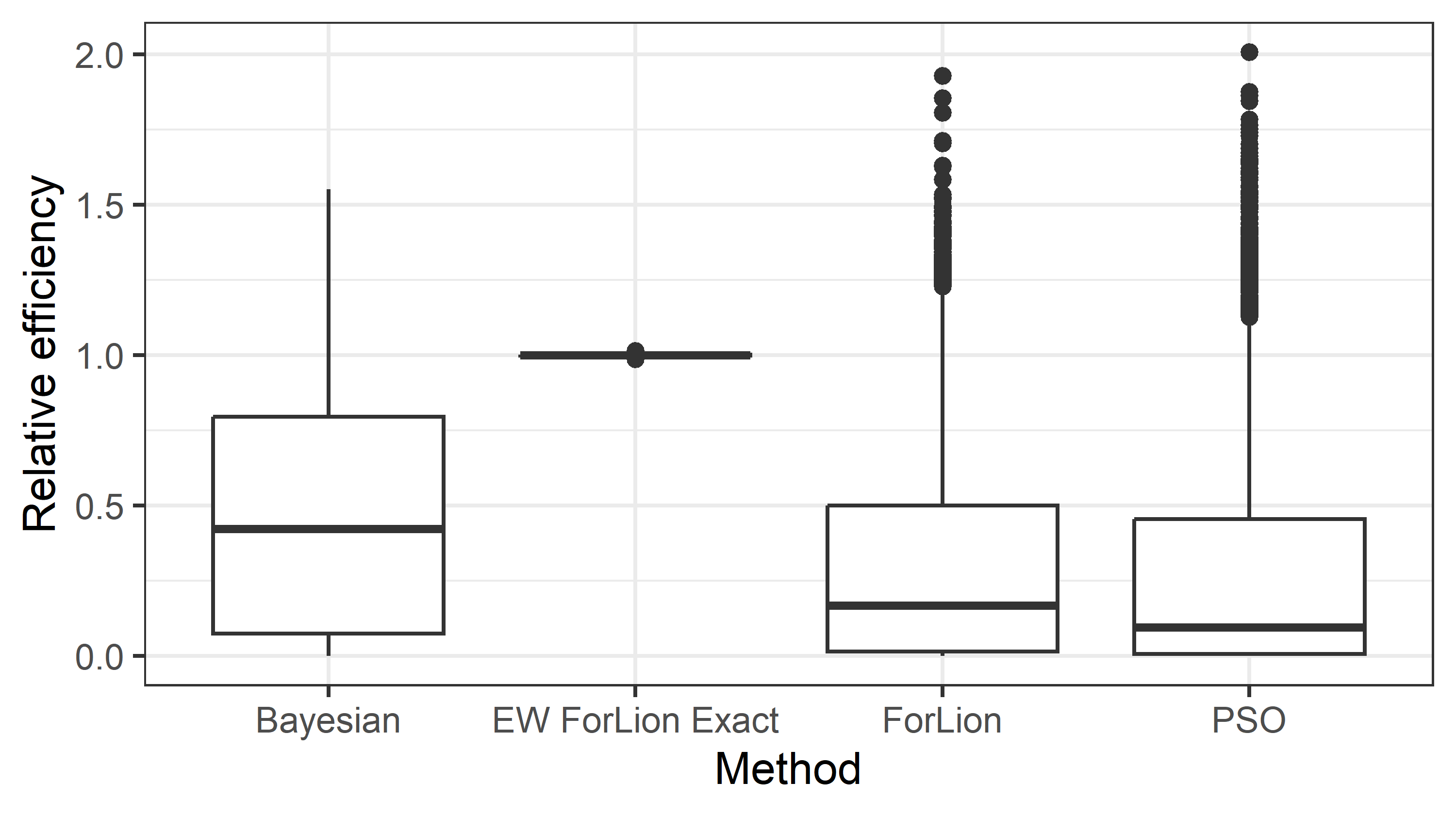} 
    \label{fig:boxplot_for_msd}
  \end{subfigure} \\  [-3ex]
  \caption{Relative efficiency comparison among designs for 10,000 sampled parameter sets in the minimizing surface defects experiment}
  \label{fig:Relative_Efficiency_MSD_updated}
\end{figure}

Figure~\ref{fig:Relative_Efficiency_MSD_updated} shows histograms and boxplots of $10,000$ relative efficiencies of the 14-point PSO-generated
locally D-optimal design (labeled with ``PSO'' in Figure~\ref{fig:Relative_Efficiency_MSD_updated}) by \cite{lukemire_optimal_2022}, the 17-point locally D-optimal design (labeled with ``ForLion'') obtained by  \cite{huang2024forlion}'s original ForLion algorithm, the 64-point Bayesian D-optimal design (labeled with ``Bayesian''),  and the 17-point EW ForLion exact design provided in Table~\ref{tab:surface_defects_Minimizing_EW_ForLion_design} obtained by our Algorithm~\ref{alg:appro to exact}, with respect to the 17-point EW ForLion (approximate) design provided in Table~\ref{tab:surface_defects_Minimizing_EW_ForLion_design}.

As displayed in Figure~\ref{fig:Relative_Efficiency_MSD_updated}, the EW ForLion exact design performs fairly similar to the EW ForLion approximate design. For most scenarios, the EW ForLion approximate design outperforms the two locally D-optimal designs (PSO and ForLion), and their relative efficiencies are below one. It is because the locally D-optimal designs are constructed with respect to a specific nominal parameter vector, which may not perform well if the true parameter vector deviates from the nominal one. On the contrary, the EW ForLion approximate or exact design are much more robust.
As for the Bayesian D-optimal design obtained by discretizing each continuous factor into two levels, it performs better than locally D-optimal designs, but worse than EW ForLion designs. 
Overall we recommend our EW ForLion approximate and exact designs as robust designs for practical applications.

\section{Conclusion and Discussion}
\label{section:Conclusion and Discussion}

In this paper, we characterize EW D-optimal designs for fairly general parametric models with mixed factors and derive detailed formulae for multinomial logistic models and generalized linear models. We develop an EW ForLion algorithm for finding EW D-optimal designs with mixed factors or continuous factors, and a rounding algorithm to convert approximate designs with continuous factors to exact designs with user-specified sets of grid points.  

Depending on the availability of a pilot study or a prior distribution on unknown parameters, we recommend either sample-based or integral-based EW D-optimal designs. Apparently, the obtained EW D-optimal design depends on the set of parameter vectors or the prior distribution chosen.
By applying our algorithms to real experiments, we show that the obtained EW D-optimal designs under a reasonable parameter vector set or prior distribution provide well-performed robust designs against parameter misspecifications. By merging close design settings, our designs may significantly reduce the experimental cost and time by reducing the number of distinct experimental settings.

The sample-based strategy offers an alternative approach for aggregating prior information about model parameters. Compared with the integral-based solution, the sample-based approach can significantly reduce the computational cost while maintaining the robustness of the resulting designs. It may be incorporated with other design criteria, such as minimax and Bayesian optimality, making it a practical tool for researchers.

\section*{Supplementary Materials}

The Supplementary Material includes several sections: S1 provides analytic solutions for the lift-one algorithm with MLMs; S2 discusses assumptions for design region and parametric models; S3 provides proofs of main theorems; S4 displays model selection and design comparison for the paper feeder experiment; S5 discusses robustness of sample-based EW designs; S6 provides two examples for GLMs and integral-based EW D-optimality.
\par

\clearpage



\clearpage
\setcounter{page}{1}
\def\thepage{S\arabic{page}}



\begin{center}
{\Large\bf Expected Weighted D-optimal Designs for\\ Experiments with Mixed Factors}
\end{center}
\vspace{.25cm}
\centerline{Siting Lin$^1$, Yifei Huang$^2$, and Jie Yang$^1$}
\vspace{.4cm}
\centerline{\it $^1$University of Illinois at Chicago and $^2$Astellas Pharma Global Development, Inc.}
\vspace{.55cm}
\centerline{\bf Supplementary Material}
\vspace{.55cm}
\par

\begin{description}
  \item[\textbf{S1}] Analytic Solutions for Multinomial Logistic Models 
  \item[\textbf{S2}] Assumptions and Relevant Results
  \item[\textbf{S3}] Proofs of Main Theorems
  \item[\textbf{S4}] Model Selection and Design Comparison for Paper Feeder Experiment
  \item[\textbf{S5}] Robustness of Sample-based EW Designs
  \item[\textbf{S6}] More Examples
\end{description}

\renewcommand{\thesection}{S\arabic{section}}
\setcounter{section}{0}
\setcounter{equation}{0}
\def\theequation{S\arabic{section}.\arabic{equation}}
\setcounter{table}{0}
\def\thetable{S.\arabic{table}}
\setcounter{figure}{0}
\def\thefigure{S.\arabic{figure}}

\section{Analytic Solutions for Multinomial Logistic Models}\label{sec:analytic_solution_MLM}

According to Theorem~S.9 in the Supplementary Material of \cite{bu2020}, for multinomial logistic models, given a design $\boldsymbol{\xi} = \{({\mathbf x}_i, w_i) \mid i=1, \ldots, m\} \in \boldsymbol{\Xi}$, for $i \in \{1, \ldots, m\}$ and $0<z<1$, we have
\begin{eqnarray*}
  f_i(z) &=& (1-z)^{p-J+1}\sum_{j=0}^{J-1} b_j\,z^j(1-z)^{J-1-j}\ ,\\
f_i'(z) &=& (1-z)^{p-J}\sum_{j=1}^{J-1} b_{j}(j-pz) z^{j-1}(1-z)^{J-1-j}-pb_{0}(1-z)^{p-1}\ ,
\end{eqnarray*}
where $b_0=f_i(0)$, $(b_{J-1},b_{J-2},\ldots,b_1)^T = \mathbf{B}_{J-1}^{-1}\mathbf{c}$, $\mathbf{B}_{J-1} = (s^{t-1})_{s,t=1,2,\ldots,J-1}$~, and $\mathbf{c}=(c_1, c_2, \ldots, c_{J-1})^T$ with $c_j=(j+1)^p\,j^{J-1-p}f_i(1/(j+1))-j^{\,J-1}f_i(0)$, $j=1,\ldots,J-1$.

For typically applications, $p\geq J \geq 3$. Since \(f_i(z)\) is an order-\(p\) polynomial in \(z\) with \(f_i(1)=0\), its maximum on \([0,1]\) is attained either at \(z=0\) or at an interior point \(z \in (0,1)\) satisfying $f_i'(z)=0$, that is, 
\begin{equation}\label{eq:max}
   \sum_{j=1}^{J-1}jb_{j}z^{j-1}(1-z)^{J-j-1}=p\sum_{j=0}^{J-1}b_{j}z^{j}(1-z)^{J-j-1},\> 0<z<1. 
\end{equation}

In \cite{bu2020}, it was mentioned that \eqref{eq:max} has analytic solutions for $J\leq 5$, while no formula was provided accordingly. In this section, we provide explicit solutions to facilitate the programming for multinomial logistic models. Note that we only need the real-valued solutions locating in $(0,1)$ for our purposes. 

According to the analytic solutions to quadratic, cubic, and quartic equations (see, e.g., Sections~3.8.1$\sim$3.8.3 in \cite{abramowitz1964handbook}), we have the following results: 

\begin{lemma}\label{lem:MLM_solution_J_3}
When $J=3$, \eqref{eq:max} can be rewritten as
$A_2z^2+A_1z+A_0=0$ with $A_2 = p(b_{0}-b_{1} + b_2)$, $A_1 = b_{1}(1+p)-2(b_0 p + b_2)$, and $A_0 = b_{0} p -b_{1}$~.
\begin{itemize}
\item[(i)] If $A_2=A_1=A_0=0$, then \eqref{eq:max} has infinitely many solutions in $(0,1)$ and $f_i(z)\equiv f_i(0) \geq 0$. In this case, we may choose $z=0$ to maximize $f_i(z)$.   
\item[(ii)] If $A_2=A_1=0$ but $A_0\neq 0$, then we must have $A_0 > 0$. In this case, \eqref{eq:max} has no solution, while $z=0$ uniquely maximizes $f_i(z)$ with $z\in [0,1]$. 
\item[(iii)] If $A_2=0$ but $A_1\neq 0$, then \eqref{eq:max} has a unique solution $-A_0/A_1 \in \mathbb{R}$.
\item[(iv)] If $A_2 \neq 0$ and $A_1^2-4A_2A_0>0$, then \eqref{eq:max} has two real solutions
\[
\frac{-A_1\pm\sqrt{A_1^2-4A_2A_0}}{2A_2}\ .
\]
\item[(v)] If $A_2 \neq 0$ and $A_1^2-4A_2A_0=0$, then \eqref{eq:max} has only one real solution $-A_1/(2A_2)$.
\item[(vi)] If $A_2 \neq 0$ and $A_1^2-4A_2A_0<0$, then \eqref{eq:max} has no real solution. In this case, we choose $z=0$, which uniquely maximizes $f_i(z)$ with $z\in [0,1]$. 
\end{itemize}
\end{lemma}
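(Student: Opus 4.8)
\emph{Proof proposal.}

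The plan is to reduce \eqref{eq:max} in the case $J=3$ to a single quadratic equation in $z$, and then to read off the six sub-cases from the elementary theory of quadratic and linear equations, using one structural input: $f_i$ is a nonnegative polynomial on $[0,1]$ with $f_i(1)=0$. First I would substitute $J=3$ into \eqref{eq:max}; the left-hand side collapses to $b_1(1-z)+2b_2z$ and the right-hand side to $p\,[\,b_0(1-z)^2+b_1z(1-z)+b_2z^2\,]$. Expanding both sides in powers of $z$ and collecting everything on one side gives $A_2z^2+A_1z+A_0=0$ with precisely the coefficients $A_2=p(b_0-b_1+b_2)$, $A_1=b_1(1+p)-2(b_0p+b_2)$, $A_0=pb_0-b_1$ stated in the lemma; this is a routine expansion.

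Next I would record the companion identity $f_i'(z)=-(1-z)^{p-3}(A_2z^2+A_1z+A_0)$, valid for $0<z<1$, which follows either from the displayed formula for $f_i'$ specialized to $J=3$ (factor out $(1-z)^{p-3}$ and expand the remaining bracket), or by differentiating $f_i(z)=(1-z)^{p-2}[b_0(1-z)^2+b_1z(1-z)+b_2z^2]$ directly. Since $p\ge J=3$ makes $(1-z)^{p-3}>0$ on $(0,1)$, this identity identifies the interior critical points of $f_i$ with the roots of the quadratic in $(0,1)$ and shows that $f_i'$ has the sign opposite to that of the quadratic; it is the bridge between the algebra of the $A_j$ and the maximization of $f_i$.

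With Steps 1 and 2 in hand, cases (iii)--(vi) are immediate: when $A_2=0$ and $A_1\ne0$ the equation is linear with the unique root $-A_0/A_1$; when $A_2\ne0$ the sign of the discriminant $A_1^2-4A_2A_0$ gives two, one, or no real roots, with the formulas from the quadratic formula. For case (i), $A_2=A_1=A_0=0$ makes \eqref{eq:max} hold identically, so $f_i'\equiv0$ on $(0,1)$ by the identity, hence $f_i$ is constant on $[0,1]$ by continuity and $f_i(z)\equiv f_i(0)=b_0$; moreover $b_0\ge0$ because $f_i(z)$ is the determinant of a positive semidefinite Fisher information matrix arising in the lift-one update, so $z=0$ maximizes $f_i$. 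For case (ii), $A_2=A_1=0$ with $A_0\ne0$ leaves \eqref{eq:max} with no root, and $f_i'(z)=-(1-z)^{p-3}A_0$ has constant sign on $(0,1)$; if $A_0<0$, then $f_i$ is strictly increasing on $(0,1)$, forcing $f_i(z)<f_i(1)=0$ for $z<1$ and contradicting $f_i\ge0$, so necessarily $A_0>0$, $f_i$ is strictly decreasing on $[0,1]$, and $z=0$ is its unique maximizer.

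The only step that is not pure bookkeeping is the argument in case (ii) ruling out $A_0<0$: it uses both $f_i\ge0$ and $f_i(1)=0$ (the latter relying on $p\ge J$), i.e., the fact that $f_i$ is a genuine D-criterion polynomial rather than an arbitrary one. Everything else reduces to the standard root analysis of a quadratic equation together with the elementary derivative identity of Step 2.
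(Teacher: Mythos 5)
Your proposal is correct and follows essentially the same route as the paper: reduce \eqref{eq:max} to the quadratic $A_2z^2+A_1z+A_0=0$, record the identity $f_i'(z)=-(1-z)^{p-3}(A_2z^2+A_1z+A_0)$, and rule out $A_0<0$ in case (ii) by the sign of $f_i'$ together with $f_i(0)\geq 0=f_i(1)$. The paper's proof is terser (it explicitly verifies only case (ii) and leaves the root-counting cases as standard), but the substance is identical.
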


\medskip
\noindent
{\bf Proof of Lemma~\ref{lem:MLM_solution_J_3}:}
When $J=3$,
\eqref{eq:max} can be rewritten as
$A_2z^2+A_1z+A_0=0$ with
$A_2 = p(b_{0}-b_{1} + b_2)$, $A_1 = b_{1}(1+p)-2(b_{0}p + b_2)$, $A_0 = b_{0}p -b_{1}$~, and 
\[
f'_i(z) = - (1-z)^{p-J} (A_2z^2+A_1z+A_0)\ .
\]
We only need to verify case~(ii). Actually, in this case, if $A_0 < 0$, then $f'_i(z) = -(1-z)^{p-J} A_0 > 0$ for all $z\in (0,1)$, while $f_i(0) \geq 0 = f_i(1)$ leads to a contradiction.
\hfill{$\Box$}

As a direct conclusion of Sections~3.8.2 in \cite{abramowitz1964handbook}, we obtain the following lemma for $J=4$:

\begin{lemma}\label{lem:MLM_solution_J_4}
When $J=4$, equation~\eqref{eq:max} is equivalent to $A_3z^3+A_2z^2+A_1z+A_0=0$ with $A_0 = b_0 p - b_1$~, $A_1 = - 3 b_0 p + b_1 (2+p) - 2 b_2$~, $A_2 = 3 b_0 p - b_1 (1 + 2p) + b_2 (2 + p) - 3 b_3$~, and $A_3 = p(-b_0 + b_1 - b_2 + b_3)$. The cases with $A_3=0$ has been listed in  Lemma~\ref{lem:MLM_solution_J_3}. If $A_3\neq 0$, \eqref{eq:max} is equivalent to $z^3 + a_2 z^2 + a_1 z + a_0 = 0$ with $a_i=A_i/A_3$, $i=0,1,2$. We let
\[
q=\frac{a_1}{3}-\frac{a_2^2}{9}\ ,\>\>\> r=\frac{a_1a_2-3a_0}{6}-\frac{a_2^3}{27}\ ,
\]
$s_1 = [r+(q^3+r^2)^{1/2}]^{1/3}$, and $s_2 = [r-(q^3+r^2)^{1/2}]^{1/3}$.
\begin{itemize}
\item[(i)] If $q^3 + r^2 >0$, then \eqref{eq:max} has only one real solution $s_1+s_2-a_2/3$~.
\item[(ii)] If $q^3 + r^2 =0$, then \eqref{eq:max} has two real solutions $z_1 = 2 r^{1/3} - a_2/3$ and $z_2 = -r^{1/3} - a_2/3$~.
\item[(iii)] If $q^3 + r^2 < 0$, then \eqref{eq:max} has three real solutions 
\[
z_1 = s_1 + s_2 - \frac{a_2}{3}\ ,\>\>\> z_2, z_3 = -\frac{s_1+s_2}{2} - \frac{a_2}{3} \pm \frac{i\sqrt{3}}{2}(s_1-s_2)
\]
with $i=\sqrt{-1}$~, known as the imaginary unit, which in this case gets involved in $(q^3+r^2)^{1/2}$ as well.
\end{itemize}
\end{lemma}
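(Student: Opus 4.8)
The plan is to prove the lemma in two stages: first reduce equation~\eqref{eq:max} with $J=4$ to the claimed cubic, and then read off the real roots and their closed forms from the classical Cardano formula, exactly as in Section~3.8.2 of \cite{abramowitz1964handbook}.

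For the first stage, I would start from the expression for $f_i'(z)$ given in Section~\ref{sec:analytic_solution_MLM}. For $z\in(0,1)$ and $p\geq J=4$ the factor $(1-z)^{p-J}$ is strictly positive, so on $(0,1)$ the equation $f_i'(z)=0$ is equivalent to the polynomial identity obtained after cancelling that factor, which for $J=4$ reads
\[
b_1(1-z)^2 + 2b_2 z(1-z) + 3b_3 z^2 \;=\; p\left[b_0(1-z)^3 + b_1 z(1-z)^2 + b_2 z^2(1-z) + b_3 z^3\right].
\]
Expanding both sides with the binomial theorem and collecting the coefficients of $z^0, z^1, z^2, z^3$ reproduces precisely the cubic $A_3 z^3 + A_2 z^2 + A_1 z + A_0 = 0$ with the coefficients $A_0,A_1,A_2,A_3$ displayed in the lemma. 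This is purely mechanical bookkeeping; the only point worth stating is the positivity of $(1-z)^{p-J}$ on $(0,1)$ that licenses the cancellation and hence the equivalence.

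For the second stage, if $A_3=0$ the equation is the quadratic $A_2 z^2 + A_1 z + A_0 = 0$, and its real solutions are described by exactly the same case split (on whether $A_2$ and $A_1$ vanish and on the sign of $A_1^2 - 4A_2 A_0$) as in Lemma~\ref{lem:MLM_solution_J_3}, so nothing new is needed there. If $A_3\neq 0$, divide through by $A_3$ to get the monic cubic $z^3 + a_2 z^2 + a_1 z + a_0 = 0$ with $a_i = A_i/A_3$, and apply the depressed-cubic substitution $z = t - a_2/3$; a short computation turns this into $t^3 + 3qt - 2r = 0$ with $q = a_1/3 - a_2^2/9$ and $r = (a_1 a_2 - 3a_0)/6 - a_2^3/27$. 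Setting $t = s_1 + s_2$ and imposing $s_1 s_2 = -q$ and $s_1^3 + s_2^3 = 2r$ forces $s_1^3, s_2^3$ to be the two roots of $w^2 - 2rw - q^3 = 0$, which gives the stated $s_1 = [r + (q^3 + r^2)^{1/2}]^{1/3}$ and $s_2 = [r - (q^3 + r^2)^{1/2}]^{1/3}$; translating the three Cardano roots in $t$ back by $-a_2/3$ yields the formulas in (i)--(iii), and the sign of the discriminant $q^3 + r^2$ dictates which case applies in the standard way.

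The main obstacle is mild: it is the casus irreducibilis $q^3 + r^2 < 0$ in case~(iii), where $(q^3 + r^2)^{1/2}$ is imaginary, so $s_1$ and $s_2$ must be chosen as complex-conjugate cube roots; then $s_1 + s_2$ is real and $s_1 - s_2$ is purely imaginary, so $\tfrac{i\sqrt{3}}{2}(s_1 - s_2)$ is real and all three roots $z_1, z_2, z_3$ come out real. I would invoke this classical fact rather than re-derive it, and keep the sign conventions for $q$ and $r$ aligned with \cite{abramowitz1964handbook} so that the quoted root formulas are reproduced verbatim; beyond that the argument is routine.
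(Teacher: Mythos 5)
Your proposal is correct and takes essentially the same route as the paper, which presents this lemma as ``a direct conclusion of Sections~3.8.2 in \cite{abramowitz1964handbook}'' with no further argument: you reduce \eqref{eq:max} at $J=4$ to the cubic (your expansion reproduces the stated $A_0,\ldots,A_3$ exactly) and then quote the classical Cardano formulas, including the correct handling of the casus irreducibilis in case~(iii). The only difference is that you spell out the coefficient bookkeeping and the depressed-cubic substitution that the paper leaves implicit.
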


For $J=5$, we follow the arguments for equation~(12) in \cite{tong2014} and obtain the following results:

\begin{lemma}\label{lem:MLM_solution_J_5}
When $J=5$, equation~\eqref{eq:max} is equivalent to $A_4z^4 + A_3z^3+A_2z^2+A_1z+A_0=0$ with $A_0 = b_0 p - b_1$~, $A_1 = - 4 b_0 p + b_1 (3+p) - 2 b_2$~, $A_2 = 6 b_0 p - 3 b_1 (1 + p) + b_2 (4 + p) - 3 b_3$~, $A_3 = - 4 b_0 p + b_1 (1 + 3p) - 2 b_2 (1+p) + b_3 (3+p) - 4 b_4$~, and $A_4 = p(b_0 - b_1 + b_2 - b_3 + b_4)$. The cases with $A_4=0$ has been listed in  Lemmas~\ref{lem:MLM_solution_J_4} \& \ref{lem:MLM_solution_J_3}. If $A_4\neq 0$, \eqref{eq:max} is equivalent to $z^4 + a_3 z^3 + a_2 z^2 + a_1 z + a_0 = 0$ with $a_i=A_i/A_4$, $i=0,1,2,3$. Then there are four solutions to \eqref{eq:max} calculated as complex numbers:
\[
z_1, z_2 = -\frac{a_3}{4} - \frac{\sqrt{A_*}}{2} \pm  \frac{\sqrt{B_*}}{2}\ ,\>\>\> 
z_3, z_4 = -\frac{a_3}{4} + \frac{\sqrt{A_*}}{2} \pm  \frac{\sqrt{C_*}}{2}
\]
with
\begin{eqnarray*}
A_* &=& -\frac{2  a_2}{3}+\frac{ a_3^2}{4}+\frac{G_*}{3\times 2^{1/3}}\ ,\\
B_* &=& -\frac{4 a_2}{3}+\frac{a_3^2}{2}-\frac{G_*}{3\times 2^{1/3}} - \frac{-8 a_1+4 a_2 a_3-a_3^3}{4 \sqrt{A_*}}\ ,\\
C_* &=& -\frac{4 a_2}{3}+\frac{a_3^2}{2}-\frac{G_*}{3\times 2^{1/3}} + \frac{-8 a_1+4 a_2 a_3-a_3^3}{4 \sqrt{A_*}}\ ,\\
D_* &=& \left(F_*+\sqrt{F_*^2-4 E_*^3}\right)^{1/3}\ ,\\
E_* &=& 12 a_0+a_2^2-3 a_1 a_3\ ,\\
F_* &=& 27 a_1^2-72 a_0 a_2+2 a_2^3-9 a_1 a_2 a_3+27 a_0 a_3^2\ ,\\
G_* &=& D_* + 2^{2/3} E_*/D_*\ .
\end{eqnarray*}
\end{lemma}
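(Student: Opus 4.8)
The plan is to establish Lemma~\ref{lem:MLM_solution_J_5} in two stages: first reduce equation~\eqref{eq:max} with $J=5$ to an explicit polynomial equation of degree at most four, and then solve the resulting quartic by the classical Ferrari / resolvent-cubic method. For the first stage, I would set $J=5$ in \eqref{eq:max}, bring every term to one side, and multiply by $-1$ to obtain
\[
p\sum_{j=0}^{4} b_j z^j (1-z)^{4-j}\;-\;\sum_{j=1}^{4} j\, b_j z^{j-1}(1-z)^{4-j}\;=\;0 .
\]
Then I would expand each factor $(1-z)^{4-j}$ by the binomial theorem, distribute, and collect the coefficient of $z^\ell$ for $\ell=0,1,2,3,4$. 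This is a routine but bookkeeping-heavy computation; carrying it out yields precisely the stated expressions $A_0=b_0p-b_1$, $A_1=-4b_0p+b_1(3+p)-2b_2$, $A_2=6b_0p-3b_1(1+p)+b_2(4+p)-3b_3$, $A_3=-4b_0p+b_1(1+3p)-2b_2(1+p)+b_3(3+p)-4b_4$, and $A_4=p(b_0-b_1+b_2-b_3+b_4)$, so that \eqref{eq:max} is equivalent to $A_4z^4+A_3z^3+A_2z^2+A_1z+A_0=0$. The degenerate cases $A_4=0$ drop the degree and are already covered by Lemmas~\ref{lem:MLM_solution_J_4} and~\ref{lem:MLM_solution_J_3}, so from here I assume $A_4\neq 0$ and divide through to get the monic quartic $z^4+a_3z^3+a_2z^2+a_1z+a_0=0$ with $a_i=A_i/A_4$.

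For the second stage I would invoke the algebraic solution of the general quartic (Section~3.8.3 in \cite{abramowitz1964handbook}, following the same line as the argument around equation~(12) in \cite{tong2014}). The key steps are: (i) form the resolvent-cubic data $E_*=12a_0+a_2^2-3a_1a_3$ and $F_*=27a_1^2-72a_0a_2+2a_2^3-9a_1a_2a_3+27a_0a_3^2$, which are the standard invariants produced by the cubic-term-removing shift $z\mapsto z-a_3/4$; (ii) extract a root of the resolvent cubic via $D_*=(F_*+\sqrt{F_*^2-4E_*^3})^{1/3}$ and $G_*=D_*+2^{2/3}E_*/D_*$; (iii) use $G_*$ to define $A_*=-\tfrac{2a_2}{3}+\tfrac{a_3^2}{4}+\tfrac{G_*}{3\cdot 2^{1/3}}$, which determines the factorization of the quartic into two quadratics; and (iv) read off the discriminants $B_*$ and $C_*$ of those two quadratic factors, so that the quadratic formula applied to each factor yields the four roots $z_1,z_2=-\tfrac{a_3}{4}-\tfrac{\sqrt{A_*}}{2}\pm\tfrac{\sqrt{B_*}}{2}$ and $z_3,z_4=-\tfrac{a_3}{4}+\tfrac{\sqrt{A_*}}{2}\pm\tfrac{\sqrt{C_*}}{2}$. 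Since only solutions in $(0,1)$ matter for the lift-one update, any branch of the radicals may be fixed, and complex or out-of-range roots are simply discarded afterwards.

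The main obstacle is not conceptual but one of careful matching: the general quartic-solution formulas come in several equivalent normalizations, and I must verify that the particular combinations $E_*,F_*,D_*,G_*,A_*,B_*,C_*$ written in the statement coincide with the quantities produced by Ferrari's method after the depressing substitution, and that the sign and cube-root/square-root branch conventions are internally consistent so that the four displayed expressions genuinely exhaust the root set of the quartic. This is essentially the content of the cited references and can be transcribed with care; the coefficient-extraction in the first stage, while tedious, is mechanical and poses no real difficulty.
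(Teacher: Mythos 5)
Your proposal is correct and matches the paper's treatment: the paper gives no explicit proof of this lemma, simply deriving the quartic coefficients from \eqref{eq:max} with $J=5$ and then citing the classical quartic solution (Section~3.8.3 of \cite{abramowitz1964handbook}, following the argument for equation~(12) in \cite{tong2014}), which is exactly your two-stage plan. The coefficient extraction you outline does indeed produce the stated $A_0,\ldots,A_4$, and the second stage is the standard resolvent-cubic/Ferrari transcription, so nothing further is needed.
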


\section{Assumptions and Relevant Results}\label{sec:Assumptions_and_proof} 

Following \cite{fedorov2014} and \cite{huang2024forlion},  in this section we discuss the assumptions needed for this paper in details. Note that our assumptions have been adjusted for mixed factors. Recall that a design point or experimental setting ${\mathbf x} = (x_1, \ldots, x_d)^T \in {\cal X} \subseteq \mathbb{R}^d$. Among the $d\geq 1$ factors, the first $k$ factors are continuous, and the last $d-k$ factors are discrete. Note that $0\leq k\leq d$ and $k=0$ implies no continuous factor. When $k\geq 1$, we denote ${\mathbf x}_{(1)} = (x_1, \ldots, x_k)^T$ as the vector of continuous factors.

\begin{lemma}\label{lem:A2_SEW}
Suppose $k\geq 1$ and the parametric model $M({\mathbf x}, \boldsymbol{\theta})$ with ${\mathbf x}\in {\cal X}$ and $\boldsymbol{\theta} \in \boldsymbol{\Theta}$ satisfies Assumption~(A2). Given $\{\hat{\boldsymbol{\theta}}_1, \ldots, \hat{\boldsymbol{\theta}}_B\} \subseteq \boldsymbol{\Theta}$, $\hat{E}\left\{ {\mathbf F}({\mathbf x}, \boldsymbol{\Theta})\right\}$ $ = B^{-1} \sum_{j=1}^B {\mathbf F}({\mathbf x}, \hat{\boldsymbol{\theta}}_j)$ is  element-wise continuous with respect to all continuous factors of ${\mathbf x}\in {\cal X}$.
\end{lemma}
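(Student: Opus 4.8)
The plan is to reduce the statement to the elementary fact that a finite linear combination of continuous scalar functions is continuous, so that essentially nothing beyond Assumption~(A2) is needed. The argument proceeds entry by entry of the matrix $\hat{E}\{{\mathbf F}({\mathbf x}, \boldsymbol{\Theta})\}$ and uses only that the bootstrap sample $\{\hat{\boldsymbol{\theta}}_1, \ldots, \hat{\boldsymbol{\theta}}_B\}$ is finite.

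First I would fix the $d-k$ discrete components of ${\mathbf x}$, so that ${\mathbf x}$ is identified with its continuous part ${\mathbf x}_{(1)} = (x_1, \ldots, x_k)^T$ ranging over $\prod_{j=1}^k I_j$. Next, fix a pair of indices $s,t \in \{1, \ldots, p\}$. By Assumption~(A2), for each $\ell \in \{1, \ldots, B\}$ the entry $F_{st}({\mathbf x}, \hat{\boldsymbol{\theta}}_\ell)$ of ${\mathbf F}({\mathbf x}, \hat{\boldsymbol{\theta}}_\ell)$ is a continuous function of ${\mathbf x}_{(1)}$ on $\prod_{j=1}^k I_j$ at the fixed discrete levels. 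The corresponding entry of $\hat{E}\{{\mathbf F}({\mathbf x}, \boldsymbol{\Theta})\}$ is, by definition~\eqref{eq:hat_E_F_xi_theta}, $\frac{1}{B}\sum_{\ell=1}^B F_{st}({\mathbf x}, \hat{\boldsymbol{\theta}}_\ell)$, a finite average of continuous functions, hence itself continuous in ${\mathbf x}_{(1)}$. Since $s,t$ and the choice of discrete levels were arbitrary, $\hat{E}\{{\mathbf F}({\mathbf x}, \boldsymbol{\Theta})\}$ is element-wise continuous with respect to all continuous factors of ${\mathbf x}\in{\cal X}$, which is the claim.

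I do not expect a genuine obstacle here: the only points requiring a little care are bookkeeping, namely keeping the discrete factors fixed while varying the continuous ones, and recalling that ``continuity with respect to all continuous factors'' is precisely the joint continuity of each entry as a function of ${\mathbf x}_{(1)}$. The reason the argument is this short --- in contrast with the integral-based counterpart, where one invokes the dominated convergence theorem under Assumption~(A3) --- is that $B<\infty$, so interchanging a limit in ${\mathbf x}_{(1)}$ with the averaging over $\ell$ is immediate and no integrability control is needed.
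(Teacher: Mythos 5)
Your proof is correct and is exactly the argument the paper has in mind: the paper simply states that the proof of this lemma is straightforward, the point being that a finite average of functions that are continuous in the continuous factors (by Assumption~(A2)) is itself continuous, with no integrability control needed since $B<\infty$. Your remark contrasting this with the dominated-convergence argument required for the integral-based version matches the paper's treatment of the companion lemma.
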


The proof of Lemma~\ref{lem:A2_SEW} is straightforward. It is for sample-based EW designs. For integral-based EW designs, we need the following lemma:

\begin{lemma}\label{lem:A3_EW}
Suppose $k\geq 1$ and the parametric model $M({\mathbf x}, \boldsymbol{\theta})$ with ${\mathbf x}\in {\cal X}$ and $\boldsymbol{\theta} \in \boldsymbol{\Theta}$ satisfies Assumption~(A3).
Then $E\{\mathbf{F}(\mathbf{x}, \boldsymbol\theta)\}=\int_{\boldsymbol{\Theta}} \mathbf{F}(\mathbf{x}, \boldsymbol\theta) Q(d \boldsymbol{\theta})$ is element-wise continuous with respect to all continuous factors of ${\mathbf x} \in {\cal X}$.
\end{lemma}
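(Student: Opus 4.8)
The plan is to reduce the claim to the classical ``continuity under the integral sign'' lemma, applied one matrix entry at a time. First I would fix an index pair $(s,t)\in\{1,\dots,p\}^2$ and freeze the values of the $d-k$ discrete coordinates of ${\mathbf x}$, writing ${\mathbf x}_{(1)}=(x_1,\dots,x_k)^T$ for the continuous part, so that the map ${\mathbf x}_{(1)}\mapsto{\mathbf x}\in{\cal X}$ is the natural inclusion of $\prod_{j=1}^k I_j$ into ${\cal X}$; what must be shown is that
\[
g_{st}({\mathbf x}_{(1)})=\int_{\boldsymbol\Theta}F_{st}({\mathbf x},\boldsymbol\theta)\,Q(d\boldsymbol\theta)
\]
is a continuous function of ${\mathbf x}_{(1)}$ on $\prod_{j=1}^k I_j$. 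Since this product of closed bounded intervals is a metric space, it is enough to verify sequential continuity.

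Next I would take an arbitrary sequence ${\mathbf x}_{(1)}^{(n)}\to{\mathbf x}_{(1)}$ in $\prod_{j=1}^k I_j$ and let ${\mathbf x}^{(n)},{\mathbf x}\in{\cal X}$ be obtained by re-attaching the frozen discrete coordinates, so ${\mathbf x}^{(n)}\to{\mathbf x}$. Assumption~(A3) supplies exactly the two ingredients needed: (i) element-wise continuity of ${\mathbf F}(\cdot,\boldsymbol\theta)$ in the continuous factors gives the pointwise limit $F_{st}({\mathbf x}^{(n)},\boldsymbol\theta)\to F_{st}({\mathbf x},\boldsymbol\theta)$ for every fixed $\boldsymbol\theta\in\boldsymbol\Theta$; and (ii) the bound $|F_{st}({\mathbf x}^{(n)},\boldsymbol\theta)|\le K(\boldsymbol\theta)$, valid for all $n$, together with $\int_{\boldsymbol\Theta}K(\boldsymbol\theta)\,Q(d\boldsymbol\theta)<\infty$, furnishes a $Q$-integrable majorant independent of $n$. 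The Dominated Convergence Theorem then yields $g_{st}({\mathbf x}_{(1)}^{(n)})\to g_{st}({\mathbf x}_{(1)})$, and the same domination shows $\int_{\boldsymbol\Theta}|F_{st}({\mathbf x},\boldsymbol\theta)|\,Q(d\boldsymbol\theta)<\infty$, so $g_{st}$ is well-defined. Since $(s,t)$ and the discrete configuration were arbitrary, $E\{{\mathbf F}({\mathbf x},\boldsymbol\theta)\}$ is element-wise continuous with respect to all continuous factors of ${\mathbf x}\in{\cal X}$.

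I do not expect any real obstacle here: the statement is just the standard corollary of dominated convergence, and the integrability condition $\int_{\boldsymbol\Theta}K\,dQ<\infty$ built into Assumption~(A3) is there precisely to license it. The only two points I would flag in the write-up are that $\boldsymbol\theta\mapsto F_{st}({\mathbf x},\boldsymbol\theta)$ must be $Q$-measurable for each fixed ${\mathbf x}$ --- which is implicit in (A3), since otherwise the entry-wise expectation in \eqref{eq:E_F_xi_theta} would be meaningless --- and that ``continuity with respect to the continuous factors'' refers to the restricted map on $\prod_{j=1}^k I_j$ with the discrete coordinates held fixed, rather than to joint continuity on all of ${\cal X}$, which is exactly why those coordinates are frozen from the start. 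An $\varepsilon$--$\delta$ rendering is equally routine, splitting $\boldsymbol\Theta$ into the part where $K$ is large (of uniformly small $Q$-measure, by integrability of $K$) and its complement (where the continuity of $F_{st}$ applies).
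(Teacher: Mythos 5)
Your proposal is correct and follows essentially the same route as the paper's own proof: both reduce to sequential continuity of each entry $\int_{\boldsymbol\Theta}F_{st}({\mathbf x},\boldsymbol\theta)\,Q(d\boldsymbol\theta)$, use the element-wise continuity in (A3) for pointwise convergence in $\boldsymbol\theta$, and invoke the Dominated Convergence Theorem with the integrable majorant $K(\boldsymbol\theta)$. The only cosmetic difference is that the paper takes a sequence ${\mathbf x}_n\to{\mathbf x}_0$ in ${\cal X}$ and notes the discrete coordinates must eventually coincide with those of ${\mathbf x}_0$, whereas you freeze them from the outset --- the two formulations are equivalent.
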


\begin{proof}[{\bf Proof of Lemma~\ref{lem:A3_EW}}]\quad
Given any ${\mathbf x}_0 = (x_{01}, \ldots, x_{0d})^T \in {\cal X}$, we let $\{{\mathbf x}_n = (x_{n1}, \ldots,$ $ x_{nd})^T \in {\cal X} \mid n\geq 1\}$ be a sequence of design points, such that, $\lim_{n\rightarrow\infty} {\mathbf x}_n = {\mathbf x}_0$~. If $k<d$, then we must have $(x_{n,k+1}, \ldots, x_{nd}) \equiv (x_{0,k+1}, \ldots,\allowbreak x_{0d})$ for large enough $n$.

For any $s,t \in \{1, \ldots, p\}$, since $F_{st}(\mathbf{x}, \boldsymbol\theta)$ is continuous with respect to $\mathbf{x}_{(1)} = (x_1, \ldots,$ $x_k)^T$, we must have $\lim_{n\rightarrow\infty} F_{st}({\mathbf x}_n, \boldsymbol{\theta}) = F_{st}({\mathbf x}_0, \boldsymbol{\theta})$ for each $\boldsymbol\theta \in \boldsymbol{\Theta}$. Since (A3) is satisfied, according to the Dominated Convergence Theorem (DCT, see, e.g., Theorem~5.3.3 in \cite{resnick2003probability}), we must have 
\begin{eqnarray*}
\lim_{n\rightarrow\infty} E\{F_{st}({\mathbf x}_n, \boldsymbol{\Theta})\} &=& \lim_{n\rightarrow\infty} \int_{\boldsymbol{\Theta}} F_{st}({\mathbf x}_n, \boldsymbol{\theta}) Q(d\boldsymbol{\theta})\\
&=& \int_{\boldsymbol{\Theta}} F_{st}({\mathbf x}_0, \boldsymbol{\theta}) Q(d\boldsymbol{\theta})\>\>\> \mbox{ (by DCT)}\\
&=&  E\{F_{st}({\mathbf x}_0, \boldsymbol{\Theta})\}
\end{eqnarray*}
for each ${\mathbf x}_0 \in {\cal X}$, as long as $\lim_{n\rightarrow\infty} {\mathbf x}_n = {\mathbf x}_0$~. That is, $E\{{\mathbf F}({\mathbf x}, \boldsymbol{\Theta})\}$ is element-wise continuous with respect to ${\mathbf x}_{(1)}$~.
\end{proof}

Since the Fisher information matrices ${\mathbf F}({\mathbf x}, \hat{\boldsymbol{\theta}}_j)$ and ${\mathbf F}({\mathbf x}, \boldsymbol{\theta})$ are symmetric and nonnegative definite for all ${\mathbf x} \in {\cal X}$ and $\hat{\boldsymbol{\theta}}_j, \boldsymbol{\theta}\in \boldsymbol{\Theta}$, 
and both integration and convex combination preserve symmetry and nonnegative definiteness, then ${\mathbf F}(\boldsymbol{\xi}, \boldsymbol{\theta})$, ${\mathbf F}(\boldsymbol{\xi}, \hat{\boldsymbol{\theta}}_j)$, ${\mathbf F}_{\rm SEW}(\boldsymbol{\xi})$, and ${\mathbf F}_{\rm EW}(\boldsymbol{\xi})$ are symmetric and nonnegative definite as well.

For any $\boldsymbol{\theta} \in \boldsymbol{\Theta}$, $\mathbf F(\boldsymbol{\xi}, \boldsymbol{\theta})$ is linear in $\boldsymbol{\xi} \in \boldsymbol{\Xi}({\cal X})$. Actually, for any two designs $\boldsymbol{\xi}_1, \boldsymbol{\xi}_2 \in \boldsymbol{\Xi}({\cal X})$ and any $\lambda \in[0,1]$, it can be verified that
\[
\mathbf F\left(\lambda \boldsymbol{\xi}_1+(1-\lambda) \boldsymbol{\xi}_2, \boldsymbol{\theta} \right)=\lambda \mathbf F\left(\boldsymbol{\xi}_1, \boldsymbol{\theta}\right)+(1-\lambda) \mathbf F\left(\boldsymbol{\xi}_2, \boldsymbol{\theta} \right)\ ,
\]
which further implies 
\begin{eqnarray*}
{\mathbf F}_{\rm SEW}\left(\lambda \boldsymbol{\xi}_1+(1-\lambda) \boldsymbol{\xi}_2 \right) &=& \lambda {\mathbf F}_{\rm SEW}\left(\boldsymbol{\xi}_1\right)+(1-\lambda) {\mathbf F}_{\rm SEW}\left(\boldsymbol{\xi}_2\right)\ ,\\
{\mathbf F}_{\rm EW}\left(\lambda \boldsymbol{\xi}_1+(1-\lambda) \boldsymbol{\xi}_2 \right) &=& \lambda {\mathbf F}_{\rm EW}\left(\boldsymbol{\xi}_1\right)+(1-\lambda) {\mathbf F}_{\rm EW}\left(\boldsymbol{\xi}_2\right)\ .
\end{eqnarray*}
That is, both ${\cal F}_{\rm SEW}({\cal X})$ and ${\cal F}_{\rm EW}({\cal X})$ are convex.

Since ${\cal X}$ is compact under Assumption~(A1), it can be verified that $\boldsymbol{\Xi}({\cal X})$ is also compact under the topology of weak convergence. That is, $\boldsymbol{\xi}_n$ converges weakly to $\boldsymbol{\xi}_0$~, as $n$ goes to $\infty$, if and only if $\lim_{n\rightarrow \infty} \int_{\cal X} f({\mathbf x}) \boldsymbol{\xi}_n(d{\mathbf x}) = \int_{\cal X} f({\mathbf x}) \boldsymbol{\xi}_0(d{\mathbf x})$ for all bounded continuous function $f$ on ${\cal X}$ \citep{billingsley1999convergence, fedorov2014}. Further with Assumption~(A2), $\hat{E}\{{\mathbf F}({\mathbf x}, $ $\boldsymbol{\Theta})\}$ is element-wise continuous with respect to all continuous factors of ${\mathbf x}\in {\cal X}$ due to Lemma~\ref{lem:A2_SEW}, and  must be bounded due to the compactness of ${\cal X}$, then ${\mathbf F}_{\rm SEW}$ as a map from $\boldsymbol{\Xi}({\cal X})$ to $\mathbb{R}^{p\times p}$ is also bounded continuous, and thus its image ${\cal F}_{\rm SEW}({\cal X})$ is also compact. Similarly, with Assumption~(A3) and Lemma~\ref{lem:A3_EW}, ${\mathbf F}_{\rm EW}$ is a bounded continuous map from $\boldsymbol{\Xi}({\cal X})$ to $\mathbb{R}^{p\times p}$, and ${\cal F}_{\rm EW}({\cal X})$ is compact as well.

As a summary of the above arguments, we obtain Lemma~\ref{lem:compactness}.

Recall that we denote ${\mathbf F} (\boldsymbol{\xi}) = \hat{E}\{{\mathbf F}(\boldsymbol{\xi}, \boldsymbol{\Theta})\}$ for sample-based EW D-optimality, and $E\{{\mathbf F}(\boldsymbol{\xi}, \boldsymbol{\Theta})\}$ for integral-based EW D-optimality; ${\mathbf F}_{\mathbf x} = \hat{E}\allowbreak\{{\mathbf F}({\mathbf x}, \boldsymbol{\Theta})\}$ for sample-based EW D-optimality, and $E\{{\mathbf F}({\mathbf x}, \boldsymbol{\Theta})\}$ for integral-based EW D-optimality. Then the same set of assumptions listed in Section~2.4.2 of \cite{fedorov2014} are provided in our notations as below:

\begin{itemize}
    \item[(B1)] $\Psi({\mathbf F})$ is a convex function for ${\mathbf F} \in {\cal F}_{\rm SEW}({\cal X})$ or ${\cal F}_{\rm EW}({\cal X})$.
    \item[(B2)]  $\Psi({\mathbf F})$ is a monotonically non-increasing function for ${\mathbf F} \in {\cal F}_{\rm SEW}({\cal X})$ or ${\cal F}_{\rm EW}({\cal X})$.
    \item[(B3)] There exists a $q>0$, such that $\boldsymbol{\Xi}(q)$ is non-empty.
    \item[(B4)] For any $\boldsymbol{\xi} \in \boldsymbol{\Xi}(q)$, $\bar{\boldsymbol{\xi}} \in \boldsymbol{\Xi}$, and $\alpha \in (0,1)$,
\[
\Psi[(1-\alpha) {\mathbf F}(\boldsymbol{\xi}) + \alpha {\mathbf F}(\bar{\boldsymbol{\xi}})] = \Psi[{\mathbf F}(\boldsymbol{\xi})] + \alpha \int_{\cal X} \psi(\mathbf{x}, \boldsymbol{\xi}) \bar{\boldsymbol{\xi}}(d \mathbf{x}) + o(\alpha)\ ,
\]
as $\alpha \rightarrow 0$, where $\psi({\mathbf x}, \boldsymbol{\xi}) = p - {\rm tr}[{\mathbf F}(\boldsymbol{\xi})^{-1} {\mathbf F}_{\mathbf x}]$. 
\end{itemize}

According to Section~2.4.2 in \cite{fedorov2014}, for D-optimality, Assumptions~(B1), (B2), and (B4) are always satisfied.
Further discussions have been provided in Section~\ref{sec:theoretical_justification}.

\section{Proofs of Main Theorems}\label{sec:Proofs for the theorems} 

\begin{proof}[{\bf Proof of Theorem~\ref{thm:thm_2_1_n0}}]\quad
Under Assumptions~(A1) and (A2), following the proof of the lemma in Section~1.26 of \cite{pukelsheim1993}, it can be verified that $\{{\mathbf F}_{\rm SEW} (\boldsymbol{\xi}) \mid \boldsymbol{\xi} \in \boldsymbol{\Xi}\} \subset \mathbb{R}^{p\times p}$, as the convex hull of the set $\{\hat{E}\{{\mathbf F}(\mathbf{x}, \boldsymbol{\Theta})\} \mid \mathbf{x}
 \in {\cal X}\}$ with dimension $p(p+1)/2$ due to their symmetry, is convex and compact.
Due to Lemma~\ref{lem:compactness} and Carath\'{e}odory's Theorem (see, e.g., Theorem~2.1.1 in \cite{fedorov1972}), it can be further verified that for any matrix ${\mathbf F} \in {\cal F}_{\rm SEW}({\cal X})$ with $\boldsymbol{\xi} \in \boldsymbol{\Xi}({\cal X})$, there exists a $\boldsymbol{\xi}_0 \in \boldsymbol{\Xi}$ with at most $p(p+1)/2 + 1$ design points, such that, ${\mathbf F}_{\rm SEW}(\boldsymbol{\xi}_0) = {\mathbf F}$. 
When ${\mathbf F} \in {\cal F}_{\rm SEW}({\cal X})$ is a boundary point (that is, any open set of ${\cal F}_{\rm SEW}({\cal X})$ containing ${\mathbf F}$ contains both matrices inside and outside ${\cal F}_{\rm SEW}({\cal X})$), the number of design points in $\boldsymbol{\xi}_0$ can be at most $p(p+1)/2$.

Similarly, we can obtain the same conclusions for $\{{\mathbf F}_{\rm EW} (\boldsymbol{\xi}) \mid \boldsymbol{\xi} \in \boldsymbol{\Xi}\}$, $\{E\{{\mathbf F}(\mathbf{x}, \boldsymbol{\Theta})\} \mid \mathbf{x}
 \in {\cal X}\}$, and ${\cal F}_{\rm EW}({\cal X})$ under Assumptions~(A1) and (A3).
\end{proof}

\begin{proof}[{\bf Proof of Theorem~\ref{thm:thm 2.2 page 64}}]\quad 
For the sample-based EW D-optimality, under Assumptions (A1), (A2), and (B3), due to Lemma~\ref{lem:compactness}, ${\cal F}_{\rm SEW}({\cal X})$ is convex and compact. Due to Assumption~(B3), there exists a sample-based EW D-optimal design $\boldsymbol{\xi}_* \in \boldsymbol{\Xi}({\cal X})$. Due to the monotonicity of $\Psi$ (see Assumption~(B2) in Section~\ref{sec:Assumptions_and_proof}), $\hat{E}\{{\mathbf F}(\boldsymbol{\xi}_*, \boldsymbol{\Theta})\}$ must be a boundary point of ${\cal F}_{\rm SEW}({\cal X})$. According to Theorem~\ref{thm:thm_2_1_n0}, there exists a sample-based EW D-optimal design with no more than $p(p+1)/2$ support points.

Similarly, for the integral-based EW D-optimality, under Assumptions (A1), (A3), and (B3), there must exist an EW D-optimal design with no more than $p(p+1)/2$ support points.

According to Section~2.4.2 in \cite{fedorov2014}, Assumptions (B1), (B2), and (B4) are always satisfied for D-optimality. The rest parts of Theorem~\ref{thm:thm 2.2 page 64} are direct conclusions of Theorem~2.2 in \cite{fedorov2014}.
\end{proof}

\begin{proof}[{\bf Proof of Theorem~\ref{thm:MLM design points}}]\quad Suppose all the predictor functions ${\mathbf h}_1, \ldots, {\mathbf h}_{J-1}$ and ${\mathbf h}_c$ defined in \eqref{eq:X_x} are continuous with respect to all continuous factors of $\mathbf{x}\in {\cal X}$. Then there exists an $M_x>0$ due to the compactness of ${\cal X}$, such that, $\|{\mathbf h}_j({\mathbf x})\|\leq M_x$ and $\|{\mathbf h}_c({\mathbf x})\| \leq M_x$~, for all $j=1, \ldots, J-1$ and ${\mathbf x}\in {\cal X}$. 

According to Appendix~A of \cite{huang2024forlion}, if we further have a bounded $\boldsymbol{\Theta}$, then there exists an $M_\eta > 0$, such that, $\|\boldsymbol{\eta}_{\mathbf x}\| = \|{\mathbf X}_{\mathbf x}\boldsymbol{\theta}\| \leq M_\eta$ for all $\boldsymbol{\theta} \in \boldsymbol{\Theta}$ and ${\mathbf x} \in {\cal X}$. Then there exist $0 < \delta_x < \Delta_x <1$, such that, $\delta_x \leq \pi_j^{\mathbf x}(\boldsymbol{\theta}) \leq \Delta_x$ for all $j=1, \ldots, J$, $\boldsymbol{\theta} \in \boldsymbol{\Theta}$, and ${\mathbf x} \in {\cal X}$. Then there exists an $M_u>0$, such that, $|u_{st}^{\mathbf x}(\boldsymbol{\theta})| \leq M_u$ for all $s,t=1, \ldots, J$, $\boldsymbol{\theta} \in \boldsymbol{\Theta}$, and ${\mathbf x} \in {\cal X}$. According to Theorem~2 in \cite{huang2024forlion}, there exists an $M_F>0$, such that, $\|{\mathbf F}_{st}^{\mathbf x}(\boldsymbol{\theta})\| = \|{\mathbf F}_{st}({\mathbf x}, \boldsymbol{\theta})\|\leq M_F$ for all $s,t=1, \ldots, J$, $\boldsymbol{\theta} \in \boldsymbol{\Theta}$, and ${\mathbf x} \in {\cal X}$. As a direct conclusion, the Fisher information of an MLM satisfies Assumption~(A3). The rest statements are direct conclusions of Theorem~\ref{thm:thm 2.2 page 64} and Corollary~\ref{cor:EW_D_optiality}.
\end{proof}

\begin{proof}[{\bf Proof of Theorem~\ref{thm:GLM design points}}]\quad Suppose all the predictor functions ${\mathbf h} = (h_1, \ldots, $ $h_p)^T$ are  continuous with respect to all continuous factors of $\mathbf{x}\in {\cal X}$, and $\boldsymbol{\Theta}$ is bounded. Since the number of level combinations of discrete factors is finite, there exist $M_x>0$ and $M_\eta>0$, such that, $\|{\mathbf h}({\mathbf x}){\mathbf h}({\mathbf x})^T\|\leq M_x$ for all ${\mathbf x}\in {\cal X}$, and $|{\mathbf h}({\mathbf x})^T\boldsymbol{\theta}|\leq M_\eta$ for all ${\mathbf x}\in {\cal X}$ and $\boldsymbol{\theta}\in \boldsymbol{\Theta}$. Since $\nu$ is continuous for commonly used GLMs (see, e.g., Table~5 in the Supplementary Material of \cite{huang2025constrained}), there is an $M_F>0$, such that, $|{\mathbf F}_{st}({\mathbf x}, \boldsymbol{\theta})|\leq M_F$ for all $s,t=1, \ldots, p$, $\boldsymbol{\theta} \in \boldsymbol{\Theta}$, and ${\mathbf x} \in {\cal X}$. As a direct conclusion, the Fisher information of a commonly used GLM satisfies Assumption~(A3). The rest statements are direct conclusions of Theorem~\ref{thm:thm 2.2 page 64} and Corollary~\ref{cor:EW_D_optiality}.
\end{proof}

\begin{proof}[{\bf Proof of Theorem~\ref{thm:sensitivity function eq thm for mlm}}]\quad
According to Theorem~\ref{thm:thm 2.2 page 64}, design $\boldsymbol{\xi}$ is EW D-optimal if and only if 
$
\max _{\mathbf{x} \in \mathcal{X}} d(\mathbf{x}, \boldsymbol{\xi}) \leq p
$. By applying Theorem~2 and its proof (see their Lemma~1) of \cite{huang2024forlion}, we obtain
\[
{\rm tr}\left(\mathbf{E}_{s t}  u_{s t}^{\mathbf x} \mathbf{h}_s^{\mathbf{x}}(\mathbf{h}_t^{\mathbf{x}}\right)^T)  = u_{s t}^{\mathbf x} \operatorname{tr}\left(\left(\mathbf{h}_t^{\mathbf{x}}\right)^T \mathbf{E}_{s t} \mathbf{h}_s^{\mathbf{x}}\right) 
 = u_{s t}^{\mathbf x} \left(\mathbf{h}_t^{\mathbf{x}}\right)^T \mathbf{E}_{s t} \mathbf{h}_s^{\mathbf{x}}\ .
\]
Note that $u_{s t}^{\mathbf x}$ here is either $E\{u_{st}^{\mathbf x} (\boldsymbol{\Theta})\}$ or $\hat{E}\{u_{st}^{\mathbf x} (\boldsymbol{\Theta})\}$, which is different from \cite{huang2024forlion}.
\end{proof}

\begin{proof}[{\bf Proof of Theorem~\ref{thm:sensitivity function eq thm for GLM}}]\quad 
According to Theorem~\ref{thm:thm 2.2 page 64}, a necessary and sufficient condition for a design $\boldsymbol{\xi}$ to be EW D-optimal is $\max _{\mathbf{x} \in \mathcal{X}} d(\mathbf{x}, \boldsymbol{\xi}) \leq p$, which is equivalent to $\max _{\mathbf{x} \in \mathcal{X}} 
 {\rm tr}({\mathbf F}(\boldsymbol{\xi})^{-1} {\mathbf F}_{\mathbf x}) \leq p$.
Then the conclusions are obtained due to ${\mathbf F}(\boldsymbol{\xi}) = \mathbf{X}_{\boldsymbol\xi}^T {\mathbf W}_{{\boldsymbol\xi}} \mathbf{X}_{\boldsymbol\xi}$~.
\end{proof}

\section{Model Selection and Design Comparison for Paper Feeder Experiment}\label{sec:Model_Selection_Paper_Feeder_Experiment}

In this section, we provide more details about the model selection process and the designs obtained for the paper feeder experiment, which was carried out at Fuji-Xerox by Y.~Norio and O.~Akira \citep{joseph2004}. 
In this experiment, there are eight discrete control factors, namely {\tt feed belt material} ($x_1$, Type A or Type B), {\tt speed} ($x_2$, 288 mm/s, 240 mm/s, or 192 mm/s), {\tt drop height} ($x_3$, 3 mm, 2 mm, or 1 mm), {\tt center roll} ($x_4$, Absent or Present), {\tt belt width} ($x_5$, 10 mm, 20 mm, or 30 mm), {\tt tray guidance angle} ($x_6$, 0, 14, or 28), {\tt tip angle} ($x_7$, 0, 3.5, or 7), and {\tt turf} ($x_8$, None, 1 sheet, or 2 sheets), one noise factor ({\tt stack quantity}, High or Low) (see also Table~2 in \cite{joseph2004}), and one continuous control factor, {\tt stack force}, taking values in $[0, 160]$ according to the original experiment. The noise factor is skipped in this analysis since it is not in control of users and is not significant either \citep{joseph2004}.
Following \cite{joseph2004}, we adopt the 18 level settings of eight discrete control factors obtained by coding level 3 as level 1 in column $x_4$ of Table~5 in \cite{joseph2004}.

\subsection{Model selection for  paper feeder experiment}\label{sec:Model_Selection_subsection}

For the paper feeder experiment, the summarized response variables include the number of misfeeds $y_{i1}$~, the number of precise feeding of the paper $y_{i2}$~, and the number of multifeeds $y_{i3}$ at the experimental setting ${\mathbf x}_i$~. It is important to notice that these two types of failures, misfeed or multifeed, cannot occur simultaneously. Therefore, a multinomial logistic model is more appropriate than two separate generalized linear models for analyzing this experiment. For those three-level factors, namely  $x_2$, $x_3$, $x_5$, $x_6$, $x_7$, and $x_8$~, which are all quantitative factors, we follow \cite{wu2009} and \cite{joseph2004} to convert the two degrees of freedom for each factor into linear and quadratic components with contrasts $(-1,0,1)$ and $(1,-2,1)$, respectively. Following \cite{joseph2004}, we code each of the two-level factors $x_1$ and $x_4$~, which are qualitative factors, as $(-1,1)$. As for the continuous factor stack force $M$, we transform it to $\log(M+1)$, which is slightly different from $\log M$ in \cite{joseph2004}, to deal with a few cases with $M=0$.

\begin{table}[htbp]
    \centering
    \caption{Model comparison for paper feeder experiment}
    {
    \renewcommand{\arraystretch}{0.5}
        \resizebox{\textwidth}{!}{
    \begin{tabular}{cccccccccccc}
    \toprule
    \multicolumn{1}{c}{ } 
      & \multicolumn{2}{c}{\textbf{Cumulative}}&
      & \multicolumn{2}{c}{\textbf{Continuation}}&
      & \multicolumn{2}{c}{\textbf{Adjacent}}&
      & \multicolumn{2}{c}{\textbf{Baseline}} \\
    \cmidrule(lr){2-3} \cmidrule(lr){5-6} \cmidrule(lr){8-9} \cmidrule(lr){11-12} 
    & po & npo &
    & po & npo&
    & po & npo &
    & po & npo \\
    \midrule
 AIC& 1011.06 & \textbf{937.87}&  &  1037.60 &  970.59&  &1027.72&  962.99&  &1661.38 & 962.99\\
 BIC &  1104.34& \textbf{1113.46}&  & 1130.88  &  1146.18&  & 1121.00 & 1138.58 &  & 1754.66 &  1138.58 \\
    \bottomrule
    \end{tabular}}}
    \label{tab:Model comparison for the Paper Feeder Experiment}
\end{table}

We first use AIC and BIC criteria  to choose the most appropriate multinomial logistic model from eight different candidates \citep{bu2020}. According to Table~\ref{tab:Model comparison for the Paper Feeder Experiment}, we adopt the cumulative logit model with non-proportional odds (npo), which achieves the smallest AIC and BIC values (highlighted in bold font), and is described as follows:

\begin{eqnarray*}
    \log\frac{\pi_{i1}}{\pi_{i2}+\pi_{i3}} &=& \beta_{11}+\beta_{12}\log(M_i+1)+\beta_{13}x_{i1}+\beta_{14}x_{i2l}+\beta_{15}x_{i2q}\\
    &+& \beta_{16}x_{i3l} + \beta_{17}x_{i3q} + \beta_{18}x_{i4} + \beta_{19}x_{i5l}+\beta_{110}x_{i5q}+\beta_{111}x_{i6l}\\
    &+& \beta_{112}x_{i6q} + \beta_{113}x_{i7l}+\beta_{114}x_{i7q} + \beta_{115}x_{i8l}+\beta_{116}x_{i8q}\ ,\\
   \log\frac{\pi_{i1}+\pi_{i2}}{\pi_{i3}} &=& \beta_{21}+\beta_{22}\log(M_i+1)+\beta_{23}x_{i1}+\beta_{24}x_{i2l}+\beta_{25}x_{i2q}\\
   &+& \beta_{26}x_{i3l}+\beta_{27}x_{i3q} + \beta_{28}x_{i4} + \beta_{29}x_{i5l}+\beta_{210}x_{i5q}+\beta_{211}x_{i6l}\\
   &+& \beta_{212}x_{i6q}+\beta_{213}x_{i7l}+\beta_{214}x_{i7q} + \beta_{215}x_{i8l}+\beta_{216}x_{i8q}\ ,
\end{eqnarray*} 
where $i=1, \ldots, m$ with $m=183$ for the original experimental design.

\subsection{Locally D-optimal designs for paper feeder experiment}\label{sec:locally_Dopt_Paper_Feeder_Experiment}

Using the data listed in Table~3 of \cite{joseph2004}, we fit the chosen model, a cumulative logit model with non-proportional odds and $p=32$ parameters. The fitted parameter values are
\begin{eqnarray*}
\hat{\boldsymbol{\theta}} &=& (\hat{\beta}_{11},\hat{\beta}_{12},\hat{\beta}_{13},\hat{\beta}_{14},\hat{\beta}_{15},\hat{\beta}_{16},\hat{\beta}_{17},\hat{\beta}_{18},\\
& & \hat{\beta}_{19},\hat{\beta}_{110},\hat{\beta}_{111},\hat{\beta}_{112},\hat{\beta}_{113},\hat{\beta}_{114},\hat{\beta}_{115},\hat{\beta}_{116},\\
   & & \hat{\beta}_{21},\hat{\beta}_{22},\hat{\beta}_{23},\hat{\beta}_{24},\hat{\beta}_{25},\hat{\beta}_{26},\hat{\beta}_{27},\hat{\beta}_{28},\\
   & & \hat{\beta}_{29},\hat{\beta}_{210},\hat{\beta}_{211},\hat{\beta}_{212},\hat{\beta}_{213},\hat{\beta}_{214},\hat{\beta}_{215},\hat{\beta}_{216})^T\\
&=&(7.995,-3.268,-1.275,1.531,0.044,-0.156,-0.141,0.534,\\
& & -0.261, 0.418, -1.749,-0.084,-0.207,0.759,0.782,0.356,\\
& & 10.928,-2.461, -0.409, 0.711, 0.080,-0.144, -0.120,0.196,\\
& & 0.019,-0.023,-0.931, -0.012, -0.133,0.153, 0.128,0.125)^T\ .
\end{eqnarray*}

Assuming $\hat{\boldsymbol{\theta}}$ to be the true values of the model parameters, we first apply the ForLion algorithm proposed by \cite{huang2024forlion} and obtain a locally D-optimal approximate design, which consists of $41$ design points and is listed as ``ForLion'' in both Table~\ref{tab:Locally D-optimal Designs for the Paper Feeder Experiment efficiencies} and Table~\ref{tab:EW Forlion optimal designs}. 
To convert this approximate design with a continuous factor to an exact design, we apply our rounding algorithm (Algorithm~\ref{alg:appro to exact}) to it with merging threshold $\delta_2=3.2$, grid level $L=2.5$, and the number of experimental units $n=1,785$ (the same sample size as in the original experiment). The obtained exact design is listed as ``ForLion exact grid2.5'' in Tables~\ref{tab:Locally D-optimal Designs for the Paper Feeder Experiment efficiencies} and \ref{tab:EW Forlion optimal designs}. For comparison purpose, we also apply the lift-one and exchange algorithms of \cite{bu2020} to two different sets of design points: one set is the same as the original experiment, and the other is based on the grid points of stack force with pace length 2.5 (that is, $\{0, 2.5, 5, \ldots, 157.5, 160\}$) for each of the 18 experimental settings (runs) of the eight discrete factors. The designs obtained correspondingly are presented as ``Bu-appro'' (lift-one algorithm on original design points) and ``Bu-exact'' (exchange algorithm on original design points) in Tables~\ref{tab:Locally D-optimal Designs for the Paper Feeder Experiment efficiencies} and \ref{tab:D-Optimal designs for the paper feeder experiment}, and ``Bu-grid2.5'' (lift-one algorithm on grid-2.5 points) and ``Bu-grid2.5 exact'' (exchange algorithm on grid-2.5 points) in Tables~\ref{tab:Locally D-optimal Designs for the Paper Feeder Experiment efficiencies} and \ref{tab:EW Forlion optimal designs}. 
According to the summarized information of those designs listed in Table~\ref{tab:Locally D-optimal Designs for the Paper Feeder Experiment efficiencies}, the ForLion design and its exact designs obtained by Algorithm~\ref{alg:appro to exact} with grid level $L=0.1, 0.5, 1, 2.5$, respectively, achieve the highest relative efficiencies with respect to the ForLion design itself, as well as the fewest design points. On the contrary, the lift-one and exchange algorithms of \cite{bu2020} are much more time-consuming even with grid level 2.5, and the original design and Bu's designs (Bu-appro and Bu-exact) on the original set of design points are not satisfactory in terms of relative efficiency.

\begin{table}[htbp]
    \centering
    \caption{Locally D-optimal designs for paper feeder experiment}
     {
    \renewcommand{\arraystretch}{0.5}
        \resizebox{\textwidth}{!}{
    \begin{threeparttable}
          \begin{tabular}{cccccccccccc}
    \toprule
    Designs &  & $m$ &  &  & Time (s) &  &  & $|{\mathbf F}(\boldsymbol{\xi}, \boldsymbol \theta)|$  &  &  & Relative Efficiency\\
    \midrule
    Original allocation&  &183&  &  & - &  &  &1.164e+28 &  &  &62.792\% \\
     Bu-appro&  &41&  &  & 62.58s&  &  & 5.930e+32 &  &  &88.106\%  \\
      
    Bu-exact&  & 41 &  &  & 1598s  &  &  &5.928e+32 &  &  &88.105\% \\
    Bu-grid2.5 appro &  &50 &  &  &  81393s &  &  &  2.397e+34 &  &  &98.903\% \\
    Bu-grid2.5 exact &  & 48 &   &  &  85443s &  &  &  2.396e+34 &  &  &98.902\% \\
     ForLion&  &41 &  &  &  22099s &  &  &  3.411e+34 &  &  &100.000\%  \\
    ForLion exact grid-0.1&  &36 &  &  &  0.01s &  &  &  3.412e+34 &  &  &100.001\%  \\
   ForLion exact grid-0.5&  &36 &  &  &  0.01s &  &  &  3.374e+34&  &  &99.965\%  \\
   ForLion exact grid-1&  &36 &  &  &  0.01s &  &  &  2.556e+34 &  &  &99.102\%  \\
   ForLion exact grid-2.5&  &36 &  &  &  0.01s &  &  &  2.245e+34 &  &  & 98.701\%  \\
    \bottomrule
    \end{tabular}
     \begin{tablenotes}
         \footnotesize
          \setlength{\baselineskip}{8pt} %
         \item Note: $m$ is the number of design points; Relative Efficiency is $(|{\mathbf F}(\boldsymbol{\xi}, \hat{\boldsymbol\theta})|/|{\mathbf F}(\boldsymbol{\xi}_{\rm ForLion}, \hat{\boldsymbol \theta})|)^{1/p}$ with  $p=32$.
     \end{tablenotes}
    \end{threeparttable}
      }
      }
    \label{tab:Locally D-optimal Designs for the Paper Feeder Experiment efficiencies}
\end{table}

The ForLion exact designs (with grid-0.1, grid-0.5, grid-1, and grid-2.5) in Table~\ref{tab:Locally D-optimal Designs for the Paper Feeder Experiment efficiencies} also show the convenience by adopting our rounding algorithm (Algorithm~\ref{alg:appro to exact}). Once an optimal approximate design is obtained, the exact design with a user-specified grid level can be obtained instantly with high relative efficiency. 
To show how we choose the merging threshold $\delta_2=3.2$ for this experiment, in Figure~\ref{fig:mergedesign} we show how the relative efficiency (left panel) and the number of design points (right panel) change along with the relative distance (i.e., the merging threshold $\delta_2$), where the grid levels 0.1, 0.5, 1 and 2.5, and the relative distances between 2 and 6 with increment of 0.1 are used for illustration. 
Notably, the next change on the number of design points after $\delta_2=3.2$ will not occur until $\delta_2=31.29$, which is too large. 
As a result, we recommend the merging threshold $\delta_2=3.2$ for this experiment. The choice of grid level typically depends control accuracy of the factors and the options that the experimenter may have. As illustrated in Table~\ref{tab:Locally D-optimal Designs for the Paper Feeder Experiment efficiencies}, the experimenter may choose the smallest possible grid level when applicable.

\begin{figure}[ht]
    \centering
\includegraphics[width=\textwidth,height=9cm]{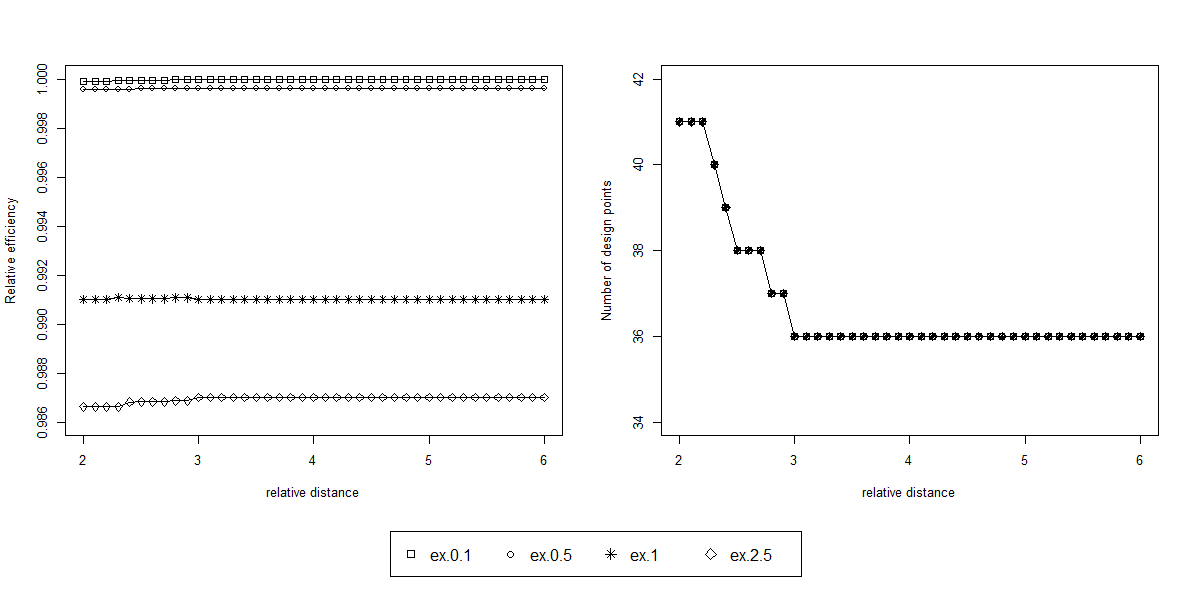} \\
\caption{Relative efficiency and number of design points of exact designs against relative distance (or merging threshold) for paper feeder experiment}
    \label{fig:mergedesign}
\end{figure}

\subsection{More graphs and tables for paper feeder experiment}\label{sec:EW_Dopt_Paper_Feeder_Experiment}

In this section, we provide {\it (i)} Figure~\ref{fig:PFE_simulation_results} for boxplots of relative efficiencies of different robust designs with respect to locally D-optimal designs, which is a graphical display of Table~\ref{tab:Efficiencies of designs for paper feeder experiment}; {\it (ii)} Table~\ref{tab:D-Optimal designs for the paper feeder experiment} for listing designs constructed on the set of design points of the original experiment, including ``Original'' (design), ``Bu-exact'' by \cite{bu2020}'s exchange algorithm, ``Bu-appro'' by \cite{bu2020}'s lift-one algorithm, ``EW Bu-exact'' by \cite{bu2020}'s EW exchange algorithm, and ``EW Bu-appro'' by \cite{bu2020}'s EW lift-one algorithm; and {\it (iii)} Table~\ref{tab:EW Forlion optimal designs} for listing designs constructed on grid points of $[0,160]$ for stack force.

\begin{figure}[ht]
    \centering
    \includegraphics[width=0.8\textwidth,height=10cm]{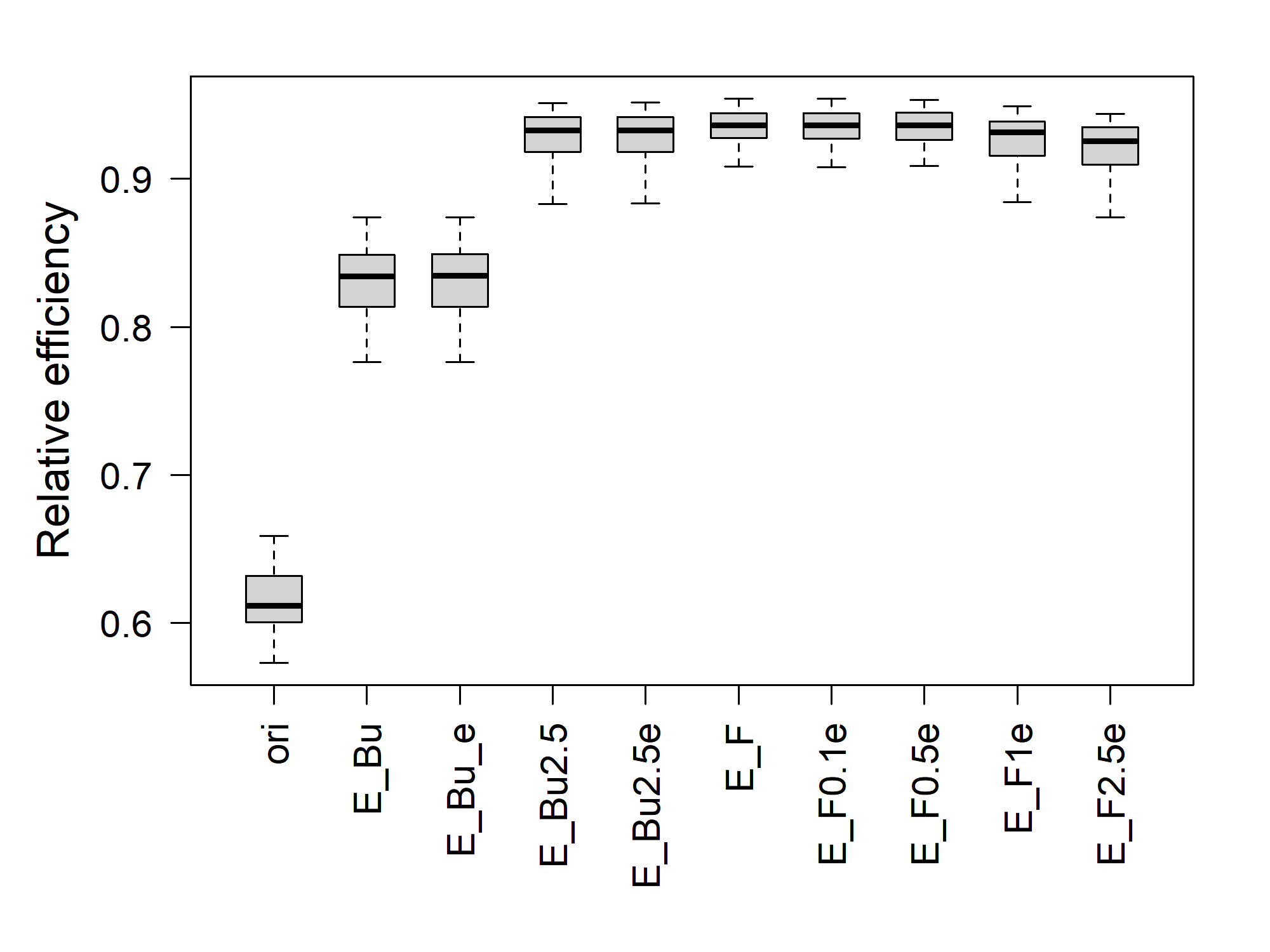}\\[-4ex]
    \caption{Boxplots of relative efficiencies of designs in Table~\ref{tab:Efficiencies of designs for paper feeder experiment} with respect to locally D-optimal designs under 100 different sets of parameter values for paper feeder experiment}
    \label{fig:PFE_simulation_results}
\end{figure}

\begin{landscape}
\footnotesize  
\setlength{\tabcolsep}{0pt}
\setlength\LTleft{0pt}            
\setlength\LTright{0pt}           
\tiny
{
  \renewcommand{\arraystretch}{0.4}%
\begin{longtable}[ht]{@{\extracolsep{\fill}}cccccccccccccccccccccccccccccc@{}}
 \captionsetup{skip=2pt} 
\caption{Designs constructed on the original set of design points for  paper feeder experiment}\label{tab:D-Optimal designs for the paper feeder experiment}\\

\toprule
\multirow{2}*{\textbf{Run}} & \multirow{2}*{\textbf{Design}} &\multicolumn{28}{c}{\textbf{Stack Force}}\\
    \cline{3-30}
    & & 0 & 5 & 10 & 15 & 20 & 25 & 30 & 35 & 40 & 42.5 & 45 & 50 & 55 & 60 & 62.5 & 65 & 70 & 75 & 80 & 82.5 & 85 & 90 & 95 & 100 & 105 & 110 & 120 & 160\\
\midrule
\endfirsthead

\multicolumn{30}{r}{Continued}\\
\toprule
\multirow{2}*{\textbf{Run}} & \multirow{2}*{\textbf{Design}} &\multicolumn{28}{c}{\textbf{Stack Force}}\\
    \cline{3-30}
    & & 0 & 5 & 10 & 15 & 20 & 25 & 30 & 35 & 40 & 42.5 & 45 & 50 & 55 & 60 & 62.5 & 65 & 70 & 75 & 80 & 82.5 & 85 & 90 & 95 & 100 & 105 & 110 & 120 & 160\\
\midrule
\endhead
%
\multicolumn{30}{c}{Continued on next page}\\
\endfoot
\bottomrule
\endlastfoot
    1 & Original & & & & & 10 &  & 5 &  &  10 & 5 & 5 & 10 &  & 15 & 5 & 5 & 10 & 
 & 10 & 5 & 5 & 10 & & & & & 5 & 5\\
  & Bu-exact & & & & & 59 &  & 0 &  &  0 & 0 & 0 & 0 &  & 0 & 0 & 0 & 0 & 
 & 0 & 0 & 0 & 0 & & & & & 50 & 0\\
  & Bu-appro & & & & & 0.033 &  & 0 &  &  0 & 0 & 0 & 0 &  & 0 & 0 & 0 & 0 & 
 & 0 & 0 & 0 & 0 & & & & & 0.028 & 0\\
& EW Bu-exact & & & & & 57 &  & 0 &  &  0 & 0 & 0 & 0 &  & 0 & 0 & 0 & 0 & 
 & 0 & 0 & 0 & 0 & & & & & 48 & 0\\
& EW Bu-appro & & & & & 0.032 &  & 0 &  &  0 & 0 & 0 & 0 &  & 0 & 0 & 0 & 0 & 
 & 0 & 0 & 0 & 0 & & & & & 0.027 & 0\\

 \midrule
 2 & Original & 10 &  & 10 & 10 & 10 &  & 15 & 5 & 20 &  &  &5 &  & 15 &  &  & 5 & 5 & 5 &  &  &  & &  &  & & &  \\
  & Bu-exact & 0 &  & 54 & 0 & 0 &  & 0 & 0 & 0 &  &  &0 &  & 0 &  &  & 0 & 0 & 50 &  &  &  & &  &  & & &  \\
  & Bu-appro & 0 &  &  0.030 & 0 & 0 &  & 0 & 0 & 0 &  &  &0 &  & 0 &  &  & 0 & 0 & 0.028 &  &  &  & &  &  & & &  \\
  & EW Bu-exact & 0 &  &  53 & 0 & 0 &  & 0 & 0 & 0 &  &  &0 &  & 0 &  &  & 0 & 0 & 50 &  &  &  & &  &  & & &  \\
  & EW Bu-appro & 0 &  &  0.029 & 0 & 0 &  & 0 & 0 & 0 &  &  &0 &  & 0 &  &  & 0 & 0 & 0.028 &  &  &  & &  &  & & &  \\
 \midrule

 3 & Original & 15 &  & 10 & 10 & 20 & 10 & 15 & 5 & 15& & & & & & & & & & & & & & & & & & &  \\
 & Bu-exact & 0 &  & 57 & 0 & 0 & 0 & 0 & 0 & 44& & & & & & & & & & & & & & & & & & &  \\
 & Bu-appro & 0 &  & 0.032 & 0 & 0 & 0 & 0 & 0 & 0.025& & & & & & & & & & & & & & & & & & & \\ 
 & EW Bu-exact & 0 &  & 56 & 0 & 0 & 0 & 0 & 0 & 43& & & & & & & & & & & & & & & & & & &  \\
  & EW Bu-appro & 0 &  & 0.031 & 0 & 0 & 0 & 0 & 0 & 0.024& & & & & & & & & & & & & & & & & & &  \\
 \midrule

 4 & Original & 5 &  &  &  & 10 & 10 & 10 & &  15& & & 10& 5&15 & &5 &5 & &5 & & & & & & & & &  \\
 & Bu-exact & 0 &  &  &  & 53 & 0 & 0 & &  0& & & 0& 0&0 & &0 &0 & &48 & & & & & & & & &  \\
 & Bu-appro & 0 &  &  &  & 0.030 & 0 & 0 & &  0& & & 0& 0&0 & &0 &0 & &0.027 & & & & & & & & &  \\
 & EW Bu-exact & 42 &  &  &  & 49 & 0 & 0 & &  0& & & 0& 0&0 & &0 &0 & &47 & & & & & & & & &  \\
  & EW Bu-appro & 0.024 &  &  &  & 0.028 & 0 & 0 & &  0& & & 0& 0&0 & &0 &0 & &0.026 & & & & & & & & &  \\
 \midrule
 
 5 & Original & & & & & 10&10 &15 & &20 & &5 &20 &5 & 10& & & & & & & & & & & & & &  \\
 & Bu-exact & & & & & 58&0 &0 & &0 & &0 &0 &0 & 48& & & & & & & & & & & & & &  \\
 & Bu-appro & & & & & 0.033&0 &0 & &0 & &0 &0 &0 & 0.027& & & & & & & & & & & & & &  \\
 & EW Bu-exact & & & & & 58&0 &0 & &0 & &0 &0 &0 & 47& & & & & & & & & & & & & &  \\
  & EW Bu-appro & & & & & 0.032&0 &0 & &0 & &0 &0 &0 & 0.026& & & & & & & & & & & & & &  \\
 \midrule
 
6 & Original & & & &10 &10 & &15 & &20 & & 5& 10&5 & 5& & & & & & & & & & & & & &  \\
 & Bu-exact & & & &55 &0 & &0 & &0 & & 0& 0&0 & 44& & & & & & & & & & & & & &  \\
 & Bu-appro & & & &0.031 &0 & &0 & &0 & & 0& 0&0& 0.025& & & & & & & & & & & & & &  \\
 & EW Bu-exact & & & &53 &0 & &0 & &0 & & 0& 0&0& 43& & & & & & & & & & & & & &  \\
  & EW Bu-appro & & & &0.030 &0 & &0 & &0 & & 0& 0&0& 0.024& & & & & & & & & & & & & &  \\
 \midrule
 
7 & Original & & & 10& &20 &5 &20 &10 &20 & & & 5& & 5& & & 5& & 5& & & & & & & & &  \\
& Bu-exact & & & 0& &59 &0 &0 &0 &0 & & & 0& & 0& & & 0& & 52& & & & & & & & &  \\
& Bu-appro & & & 0& &0.033 &0 &0 &0 &0 & & & 0& & 0& & & 0& & 0.029& & & & & & & & &  \\
& EW Bu-exact & & & 0& &57 &0 &0 &0 &0 & & & 0& & 0& & & 0& & 50& & & & & & & & &  \\
& EW Bu-appro & & & 0& &0.032 &0 &0 &0 &0 & & & 0& & 0& & & 0& & 0.028& & & & & & & & &  \\
 \midrule
 
8 & Original & & & & 5& 10& & 10& 10& 10& & & & & 10& & & 15&5 &10 & & & & & 10& &5 &5 &  \\
& Bu-exact & & & & 0& 0& & 14& 39& 0& & & & & 0& & & 0&0 &0 & & & & & 0& &0 &46 &  \\
& Bu-appro & & & & 0& 0& & 0.008& 0.022& 0& & & & & 0& & & 0&0 &0 & & & & & 0& &0 &0.026 &  \\
& EW Bu-exact & & & & 0& 0& & 0& 52& 0& & & & & 0& & & 0&0 &0 & & & & & 0& &0 &43 &  \\
& EW Bu-appro & & & & 0& 0& & 0& 0.029& 0& & & & & 0& & & 0&0 &0 & & & & & 0& &0 &0.024 &  \\
 \midrule

9 & Original & & &10 &10 &10 &5 &10 & & 20& & & 5& 5& 10& &5 &10 & &5 & & & & & & & & &  \\
& Bu-exact & & &0 &0 &2 &53 &0 & & 0& & & 0& 0& 0& &0 &0 & &48 & & & & & & & & &  \\
& Bu-appro & & &0 &0 &0.001 &0.029 &0 & & 0& & & 0& 0& 0& &0 &0 & &0.027 & & & & & & & & &  \\ 
& EW Bu-exact & & &0 &0 &0 &53 &0 & & 0& & & 0& 0& 0& &0 &0 & &46 & & & & & & & & &  \\ 
& EW Bu-appro & & &0 &0 &0 &0.030 &0 & & 0& & & 0& 0& 0& &0 &0 & &0.026 & & & & & & & & &  \\ 
 \midrule
 
10 & Original &15 &20 &15 &20 &20 & &10 & & & & & & & & & & & & & & & & & & & & &  \\
& Bu-exact &0 &21 &0 &0 &0 & &44 & & & & & & & & & & & & & & & & & & & & &  \\
& Bu-appro &0 &0.012 &0 &0 &0 & &0.025 & & & & & & & & & & & & & & & & & & & & &  \\
&EW Bu-exact &0 &22 &0 &0 &0 & &44 & & & & & & & & & & & & & & & & & & & & &  \\
&EW Bu-appro &0 &0.013 &0 &0 &0 & &0.025 & & & & & & & & & & & & & & & & & & & & &  \\
 \midrule
 
11 & Original & 15 &20 &20 &20 &20 & &10 & & & & & & & & & & & & & & & & & & & & &  \\
& Bu-exact & 0 &32 &0 &0 &0 & &44 & & & & & & & & & & & & & & & & & & & & &  \\
& Bu-appro & 0 &0.018 &0 &0 &0 & &0.024 & & & & & & & & & & & & & & & & & & & & &  \\
& EW Bu-exact & 0 &30 &0 &0 &0 & &44 & & & & & & & & & & & & & & & & & & & & &  \\
& EW Bu-appro & 0 &0.017 &0 &0 &0 & &0.024 & & & & & & & & & & & & & & & & & & & & &  \\
 \midrule

12 & Original &10 & & 10& 10&10 & &5 & &10 & &5 &10 &5 &5 & & & & & & & & & & & & & &  \\
& Bu-exact &0 & & 26& 33&0 & &0 & &0 & &0 &0 &0 &51 & & & & & & & & & & & & & &  \\
& Bu-appro &0 & & 0.015& 0.018&0 & &0 & &0 & &0 &0 &0 &0.029 & & & & & & & & & & & & & &  \\
& EW Bu-exact &0 & & 27& 30&0 & &0 & &0 & &0 &0 &0 &50 & & & & & & & & & & & & & &  \\
& EW Bu-appro &0 & & 0.015& 0.017&0 & &0 & &0 & &0 &0 &0 &0.028 & & & & & & & & & & & & & &  \\
 \midrule
 
13 & Original &10 & & 10& 10&10 & &5 & &5 & & & &5 &5 & & & & &5 & &5 &5 & & 5& & & &  \\
& Bu-exact &0 & & 0& 58&0 & &0 & &0 & & & &0 &0 & & & & &0 & &0 &0 & & 50& & & &  \\
& Bu-appro &0 & & 0& 0.032 &0 & &0 & &0 & & & &0 &0 & & & & &0 & &0 &0 & & 0.028& & & &  \\
& EW Bu-exact &0 & & 0& 56&0 & &0 & &0 & & & &0 &0 & & & & &0 & &0 &0 & & 50& & & &  \\
& EW Bu-appro &0 & & 0& 0.031 &0 & &0 & &0 & & & &0 &0 & & & & &0 & &0 &0 & & 0.028 & & & &  \\
 \midrule
 
14 & Original & & &10 & &20 &15 &20 &20 &20 & &5 &5 & & & & & & & & & & & & & & & &  \\
 & Bu-exact & & &3 & &52 &0 &0 &0 &0 & &0 &43 & & & & & & & & & & & & & & & &  \\
  & Bu-appro & & &0.002 & &0.029 &0 &0 &0 &0 & &0 &0.024 & & & & & & & & & & & & & & & &  \\
& EW Bu-exact & & &0 & &53 &0 &0 &0 &0 & &0 &42 & & & & & & & & & & & & & & & &  \\
& EW Bu-appro & & &0 & &0.029 &0 &0 &0 &0 & &0 &0.024 & & & & & & & & & & & & & & & &  \\
 \midrule
 
15 & Original &10 &10 &15 &15 &15 & &10 &5 &5 & & &5 & & & & & & & & & & & & & & & &  \\
& Bu-exact &47 &0 &0 &0 &0 & &0 &46 &0 & & &0 & & & & & & & & & & & & & & & &  \\
& Bu-appro &0.026 &0 &0 &0 &0 & &0  & 0.026 &0 & & &0 & & & & & & & & & & & & & & & &  \\
& EW Bu-exact &46 &0 &0 &0 &0 & &0  & 46 &0 & & &0 & & & & & & & & & & & & & & & &  \\
& EW Bu-appro &0.026 &0 &0 &0 &0 & &0  & 0.026 &0 & & &0 & & & & & & & & & & & & & & & &  \\
 \midrule
 
16 & Original & & 10& 10& &15 & & 20& 10& 15& & & 10& 5& 10& & & & & & & & & & & & & &  \\
 & Bu-exact & & 49& 0& &0 & & 0& 0& 0& & & 0& 0& 45& & & & & & & & & & & & & &  \\
 & Bu-appro & & 0.028 & 0& &0 & & 0& 0& 0& & & 0& 0& 0.025& & & & & & & & & & & & & &  \\
  & EW Bu-exact & & 49 & 0& &0 & & 0& 0& 0& & & 0& 0& 45& & & & & & & & & & & & & &  \\
   & EW Bu-appro & & 0.027 & 0& &0 & & 0& 0& 0& & & 0& 0& 0.025& & & & & & & & & & & & & &  \\
 \midrule

17 & Original & & &10 & &10 &5 &10 & &10 & &5 &5 & & & & & & & 5& & & 5&5 &10 &5 &5 & &  \\
& Bu-exact & & &0 & &51 &0 &0 & &0 & &0 &0 & & & & & & & 0& & & 0&0 &0 &0 &47 & &  \\
& Bu-appro & & &0 & &0.029 &0 &0 & &0 & &0 &0 & & & & & & & 0& & & 0&0 &0 &0 &0.027 &  \\
& EW Bu-exact & & &0 & &51 &0 &0 & &0 & &0 &0 & & & & & & & 0& & & 0&0 &0 &0 &46 &  \\
& EW Bu-appro & & &0 & &0.028 &0 &0 & &0 & &0 &0 & & & & & & & 0& & & 0&0 &0 &0 &0.026 &  \\
 \midrule
 
18 & Original & & &10 &5 &10 & &10 &5 &5 & & & & &10 & &5 &10 &5 &10 & & & 10& & & & &5 &  \\
& Bu-exact & & &0 &6 &53 & &0 &0 &0 & & & & &0 & &0 &0 &0 &0 & & & 51& & & & &0 &  \\
& Bu-appro & & &0 &0.003 &0.030 & &0 &0 &0 & & & & &0 & &0 &0 &0 &0 & & & 0.029& & & & &0 &  \\
& EW Bu-exact & & &0 &0 &57 & &0 &0 &0 & & & & &0 & &0 &0 &0 &0 & & & 50& & & & &0 &  \\
& EW Bu-appro & & &0 &0 &0.032 & &0 &0 &0 & & & & &0 & &0 &0 &0 &0 & & & 0.028& & & & &0 &  \\
\end{longtable}
}
\normalsize
\end{landscape}

\tiny  
\setlength{\tabcolsep}{3pt}
\setlength\LTleft{0pt}            
\setlength\LTright{0pt}           
{
  \renewcommand{\arraystretch}{0.4}%
\begin{longtable}[ht]{@{\extracolsep{\fill}}cc|cccccccc|ccccccc@{}}
\caption{Designs on $[0,160]$ of stack force for paper feeder experiment}\label{tab:EW Forlion optimal designs}\\

\toprule
      \textbf{Run} &Design &\multicolumn{8}{c|}{Stack Force}&\multicolumn{7}{c}{$w_i$ or $n_i$}\\ 
\midrule
\endfirsthead

\multicolumn{16}{r}{Continued}\\ \\
\toprule
      \textbf{Run} &Design &\multicolumn{8}{c|}{Stack Force} &\multicolumn{7}{c}{$w_i$ or $n_i$}\\ 
\midrule
\endhead
 
\multicolumn{16}{l}{Continued on next page}\\ 
\endfoot
\bottomrule
\endlastfoot
 
   1 & Bu-grid2.5 &20.0& &112.5&  & - &  & -  &  & 0.0305&  &0.0264& &- &  &-\\
    & Bu-grid2.5 exact &20.0& &112.5&  & - &  & -  & & 55&  &47& &- &  &- \\
   &ForLion &21.839 & &110.554&  & -  &  & - &   & 0.0292 &  &0.0270&  & - &  &-  \\
    &ForLion exact grid2.5 & 22.5& &110.0&  & -  &  & -  &  & 52 &  &48&  &- &  &-\\

     &EW Bu-grid2.5 &20.0& &22.5&  & 110  &  &112.5 & &  0.0173 &  &0.0124&  &0.0065 &  &0.0189\\
    &EW Bu-grid2.5 exact &20.0& &22.5&  & 112.5  &  &-  & &30&  &23&  &45 &  &-\\
    &EW ForLion & 22.127& &111.339&  & -  &  & -  &  & 0.0288&  &0.0255&  &- &  &- \\
    &EW ForLion exact grid2.5&22.5& &112.5&  & -  &  & - &  & 51 &  &45&  &- &  &-\\

    \midrule
    
    2 & Bu-grid2.5 &5.0& &75.0&  & - &  & - & & 0.0291&  &0.0279& &- &  &- \\
    & Bu-grid2.5 exact &5.0& &75.0&  & - &  & - & & 52&  &50& &- &  &- \\
    &ForLion  & 4.749& &74.807&  & -  &  &-& & 0.0286&  &0.0280&  &- &  &- \\
     &ForLion exact  grid2.5&5.0  & &75.0&  & -  &  & -& &51 &  &50&  &- &  &-\\

       &EW Bu-grid2.5 &5.0& &75.0&  & 77.5 &  &- & & 0.0284 &  &0.0169&  &0.0110 &  &-\\
    &EW Bu-grid2.5 exact&5.0& &75.0&  & 77.5  &  &- & &  51&  &37&  &13 &  &- \\ 
    
     &EW ForLion &4.505& &75.677&  & -  &  &  - & & 0.0281 &  &0.0280&  &- &  &- \\
    &EW ForLion exact grid2.5&5.0& &75.0&  & -  &  & -  & & 50 &  &50&  &- &  &-\\
   
    \midrule
    
    3 & Bu-grid2.5 &7.5& &47.5&  & - &  & - & & 0.0284&  &0.0257& &- &  &- \\
    & Bu-grid2.5 exact &7.5& &47.5&  & - &  & - & & 51&  &46& &- &  &- \\
    
    &ForLion  & 7.669& &46.322&  & -  &  &- &  &0.0284 &  &0.0261&  &- &  &- \\
    &ForLion exact grid2.5 &7.5 & &47.5&  & -  &  & - & &51&  &47&  &- &  &- \\

     &EW Bu-grid2.5 &7.5& &45.0&  & 47.5 &  &- & & 0.0279 &  &0.0055&  &0.0197 &  &- \\
    &EW Bu-grid2.5 exact &7.5& &45.0&  & 47.5&  &-  & & 50&  &4&  & 41&  &- \\
    
    &EW ForLion &7.735& &46.387&  & -  &  & - & & 0.0279 &  &0.0252&  &- &  &- \\
    &EW ForLion exact grid2.5&7.5& &47.5&  & -  &  &- & & 50 &  &45&  &- &  &-\\

    \midrule

    4   & Bu-grid2.5&17.5& &20.0&  & 97.5 &  & 100  & & 0.0055&  &0.0237& &0.0175 &  &0.0084 \\
  &  Bu-grid2.5 exact &17.5& &20.0&  & 97.5 &  & 100 & & 10&  &42& &34 &  &12\\
    &ForLion  &18.476& &97.700&  & -&  &- &  & 0.0288 &  &0.0268 &  &- &  & - \\
   &ForLion exact grid2.5 &17.5 & &97.5&  & -  &  &  - & & 51&  &48&  &- &  &-\\

    &EW Bu-grid2.5 &0.0& &20.0&  & 100.0 &  &102.5  & &0.0253 &  &0.0270&  &0.0114 &  &0.0136 \\
   &EW Bu-grid2.5 exact&0.0& &20.0&  & 100.0 &  &102.5 & &  45 &  &48&  &5&  &40\\
   &EW ForLion &0.000& &19.984&  & 100.379  &  & - & & 0.0190 &  &0.0265&  &0.0242 &  &- \\
   &EW ForLion exact grid2.5 &0.0& &20.0&  & 100.0  &  & - & & 34&  &47&  &43 &  &-\\

    \midrule

    5 & Bu-grid2.5 &15.0& &82.5&  & - &  & - & & 0.0321&  &0.0287& &- &  &-\\
    &  Bu-grid2.5 exact &15.0& &82.5&  & - &  & - & & 57&  &51& &- &  &-\\
    &ForLion  &13.925 & &83.114&  & - &  & - &  & 0.0315 &  & 0.0294 &  & - &  & -\\
    &ForLion exact grid2.5 &15.0 & &82.5&  & -  &  & - & & 56 &  &53&  &- &  &-\\
      &EW Bu-grid2.5 &15.0& &82.5&  & 85.0 &  &-& &0.0314 &  &0.0245&  &0.0033 &  & - \\
    &EW Bu-grid2.5 exact&15.0& &82.5&  & -  &  &-  & & 56&  &50&  &-&  &-\\
    &EW ForLion  &14.527& &83.063&  & - &  & - & & 0.0311 &  &0.0280&  &- &  &-\\
    &EW ForLion exact grid2.5 &15.0& &82.5&  & - &  & - & & 55&  &50&  &- &  &-\\

    \midrule

    6  & Bu-grid2.5 &12.5& &15.0&  & 87.5 &  & - & & 0.0075&  &0.0229& &0.0251 &  &-\\
    &  Bu-grid2.5 exact &12.5& &15.0&  & 87.5 &  & - & &14&  &40& &45&  &-\\
    &ForLion   & 12.212 & &86.679&  & -  &  & -&  & 0.0277 &  &0.0259 &  &-  &  & -\\
    &ForLion exact  grid2.5 &12.5 & &87.5&  & -  &  & - & & 49 &  &46&  &- &  &-\\

    &EW Bu-grid2.5 &12.5& &15.0&  & 85.0  &  &87.5  & &0.0239 &  &0.0047&  &0.0046 &  &0.0203\\
    &EW Bu-grid2.5 exact&12.5& &15.0&  & 87.5  &  &- & & 43 &  & 8 &  & 45 &  &-\\
    &EW ForLion  &12.682& &87.709&  & -  &  & - & & 0.0273 &  &0.0250&  &- &  &-\\
    &EW ForLion exact grid2.5&12.5& &87.5&  & -  &  & - & & 49 &  &45&  &- &  &-\\

    \midrule

    7 & Bu-grid2.5 &20.0& &92.5&  & - &  & - & & 0.0329&  &0.0290& &- &  &-\\
   &  Bu-grid2.5 exact &20.0& &92.5&  & - &  & -  & &59&  &52& &- &  &-\\
    &ForLion   & 18.005  & &20.767&  & 90.517  &  & - &  &0.0048 &  & 0.0282 &  & 0.0299 &  &-\\
    &ForLion exact grid2.5  &20.0 & &90.0&  & -  &  & - & & 59 &  &53&  &- &  &-\\
      &EW Bu-grid2.5 &20.0& &90.0&  & 92.5  &  & 95.0 & &0.0322 &  &0.0001&  &0.0254 &  &0.0023\\
    &EW Bu-grid2.5 exact&20.0& & 92.5&  & -  &  & -& &57 &  &50&  &-&  &-\\
    &EW ForLion  &20.206& &92.451&  & -  &  & - & & 0.0323 &  &0.0277&  &- &  &-\\
    &EW ForLion exact grid2.5 &20.0& &92.5&  & -  &  & - & & 58 &  &49&  &- &  &-\\

    \midrule

    8   & Bu-grid2.5 &35.0& &160&  & - &  & -  & & 0.0275&  &0.0261& &- &  &-\\
     & Bu-grid2.5 exact&35.0& &160&  & - &  & - & & 49&  &47& &- &  &-\\ 
    &ForLion  & 33.541 & &35.854&  & 160  &  & - &  & 0.0224 &  &0.0048 &  & 0.0268&  & -\\
     &ForLion exact grid2.5  &35.0 & &160&  & -  &  &  -& & 49 &  &48&  &- &  &-\\
      &EW Bu-grid2.5 &35.0& &160.0&  & -  &  &-& &0.0272&  &0.0251&  &- &  &-\\
     &EW Bu-grid2.5 exact &35.0& &160.0&  & -  &  &- & &  49&  &45&  &-&  &-\\
     
     &EW ForLion  &33.863& &160&  & -  &  & - & & 0.0269 &  &0.0252&  &- &  &-\\
     &EW ForLion exact grid2.5 &35.0& &160&  & -  &  &- & & 48 &  &45&  &- &  &-\\

    \midrule

    9  & Bu-grid2.5 &22.5& &25.0&  & 110.0&  & - & & 0.0111&  &0.0199& &0.0265 &  &-\\
      & Bu-grid2.5 exact &22.5& &25.0&  & 110.0&  & - & & 19&  &36& &47 &  &-\\
    &ForLion   & 24.964 & &27.906&  &106.844&  & - &  & 0.0227&  & 0.0075 &  &0.0275 &  &- \\
     &ForLion exact grid2.5 &25.0 & &107.5&  & -  &  & - & & 54 &  &49&  &- &  &-\\
      &EW Bu-grid2.5 &25.0& &112.5&  & 115.0  &  &- & &0.0302 &  &0.0160&  &0.0090 &  &-\\
     &EW Bu-grid2.5 exact&25.0& &112.5&  & -  &  &-  & & 54&  &45&  &-&  &-\\
     &EW ForLion &1.398& &26.346&  & 115.877 &  & - & & 0.0127&  &0.0275&  &0.0235 &  &-\\
     &EW ForLion exact grid2.5 &2.5& &27.5&  & 115.0  &  & - & & 23 &  &49&  &42 &  &-\\

    \midrule

    10  & Bu-grid2.5 &2.5& &45.0&  & -&  & - & & 0.0253&  &0.0247& &- &  &-\\
     &  Bu-grid2.5 exact&2.5& &45.0&  & -&  & - & & 45&  &44& &- &  &-\\
    &ForLion   & 1.573 & &44.645&  &  - &  & - &  &0.0249 &  & 0.0247 &  & - &  & -\\
      &ForLion exact grid2.5  &2.5 & &45.0&  & -  &  &-  & & 45 &  &44&  &- &  &-\\
       &EW Bu-grid2.5 &2.5& &45.0&  & -  &  &- & &0.0245 &  &0.0247&  &- &  &-\\
     &EW Bu-grid2.5 exact &2.5& &45.0&  & -  &  &-  & & 44&  & 44&  &-&  &-\\
     &EW ForLion  &1.588& &44.899&  & -  &  & - & & 0.0247&  &0.0248&  &- &  &-\\
     &EW ForLion exact grid2.5 &2.5& &45.0&  & -  &  & - & & 44 &  &44&  &- &  &-\\

    \midrule

    11 & Bu-grid2.5 &2.5& &32.5&  & 35.0&  & -  & & 0.0254&  &0.0208& &0.0038&  &-\\
     &  Bu-grid2.5 exact&2.5& &32.5&  & 35.0&  & - & &45&  &38& &6&  &-\\
    &ForLion & 2.105 & &33.392 &  & - &  & -&  & 0.0254&  & 0.0247 &  & - &  & -\\
     &ForLion exact grid2.5  &2.5 & &32.5&  & -  &  & -& & 46 &  &44&  &- &  &-\\
      &EW Bu-grid2.5 &2.5& &32.5&  & -  &  &-  & &0.0248 &  &0.0245&  &- &  &-\\
     &EW Bu-grid2.5 exact&2.5& &32.5&  & -  &  &- & & 44&  &44&  &-&  &-\\
     &EW ForLion  &1.913& &33.162&  & -  &  & - & &0.0250&  &0.0247&  &- &  &-\\
     &EW ForLion exact grid2.5 &2.5& &32.5&  & -  &  & - & & 45 &  &44&  &- &  &-\\

    \midrule

    12   & Bu-grid2.5 &12.5& &15.0&  & 95.0&  & 97.5 & & 0.0302&  &0.0004& &0.0042 &  &0.0241\\
       & Bu-grid2.5 exact&12.5& &15.0&  & 95.0&  & 97.5 & & 54&  &1& & 5 &  &46\\
    &ForLion  & 12.553 & &96.478&  & - &  & - &  &0.0304 &  &0.0284 &  & - &  &-\\
      &ForLion exact grid2.5 &12.5 & &97.5&  & -  &  & - & & 54 &  &51&  &- &  &-\\
       &EW Bu-grid2.5 &12.5& &92.5&  & 95.0  &  &-  & &0.0303 &  &0.0120&  &0.0158 &  &-\\
      &EW Bu-grid2.5 exact&12.5& &92.5&  & 95.0 &  &-  & &54&  &25&  &25&  &-\\
      &EW ForLion  &12.194& &94.016&  & -  &  & - & &0.0301&  &0.0279&  &- &  &-\\
      &EW ForLion exact  grid2.5 &12.5& &95.0&  & -  &  & - & & 54 &  &50&  &- &  &-\\

    \midrule

    13   & Bu-grid2.5 &12.5& &15.0&  & 92.5&  & 95.0& & 0.0187&  &0.0130& &0.0224 &  &0.0061\\
       &  Bu-grid2.5 exact&12.5& &15.0&  & 92.5&  & 95.0 & & 33&  &24& &41 &  &10\\
    &ForLion   & 14.693& &92.175&  & -  &  & - &  & 0.0312 &  & 0.0287 &  & - &  & -\\
       &ForLion exact grid2.5 &15.0 & &92.5&  & -  &  & - & & 56 &  &51&  &- &  &-\\
         &EW Bu-grid2.5 &12.5& &15.0&  &  92.5 &  &95.0 & &0.0034 &  &0.0276&  &0.0098&  &0.0179\\
      &EW Bu-grid2.5 exact &12.5& &15.0&  &  95 &  &- & & 6&  & 49 &  & 49 &  &-\\
      &EW ForLion  &14.488& &93.508&  & -  &  & - & &0.0306&  &0.0279&  &- &  &-\\
       &EW ForLion exact grid2.5 &15.0& &92.5&  & -  &  & -& & 55 &  &50&  &- &  &-\\

    \midrule

    14  & Bu-grid2.5 &15.0& &17.5&  & 77.5&  &80.0 & & 0.0082&  &0.0217& &0.0025 &  &0.0235\\
     &  Bu-grid2.5 exact &15.0& &17.5&  & 80&  &-& & 14&  &39& &46 &  &-\\
    &ForLion   &16.439& &18.660&  & 76.924&  & - &  &0.0207 &  &  0.0090 &  & 0.0269 &  & -\\
    &ForLion exact grid2.5 &17.5 & &77.5&  & -  &  & - & & 53 &  &48&  &- &  &-\\
      &EW Bu-grid2.5 &17.5& &80.0&  & 82.5  &  &-  & &0.0291 &  &0.0192&  &0.0056&  &-\\
     &EW Bu-grid2.5 exact &17.5& &80.0&  & -  &  &-  & & 52 &  &44&  & - &  &-\\
    &EW ForLion &17.449& &80.152&  & -  &  & - & &0.0291&  &0.0249&  &- &  &-\\
    &EW ForLion exact grid2.5 &17.5& &80.0&  & -  &  & - & & 52 &  &44&  &- &  &-\\

    \midrule

    15 & Bu-grid2.5 &0.0& &32.5&  & -&  & - & & 0.0242&  &0.0247& &-&  &-\\
    &  Bu-grid2.5 exact &0.0& &32.5&  & -&  & -& & 43&  &44& &-&  &-\\
    &ForLion  & 0.462  & &32.783 &  & -  &  & -&  &  0.0247&  &0.0246 &  & - &  & -\\
      &ForLion exact grid2.5  &0.0 & &32.5&  & -  &  & - & & 44 &  &44&  &- &  &-\\
      &EW Bu-grid2.5 &0.0& &32.5&  & -  &  &- & &0.0226 &  &0.0248&  &-&  &-\\
    &EW Bu-grid2.5 exact&0.0& &32.5&  & -  &  &-& & 40 &  &44&  &-&  &-\\
      &EW ForLion  &0.483& &32.917&  & -  &  & - & &0.0245&  &0.0248&  &- &  &-\\
      &EW ForLion exact grid2.5 &0.0& &32.5&  & -  &  & - & & 44 &  &44&  &- &  &-\\

    \midrule

    16  & Bu-grid2.5 &5.0& &65.0&  & -&  & -  & & 0.0259&  &0.0246& &-&  &-\\
      &  Bu-grid2.5 exact&5.0& &65.0&  & -&  & -  & & 46&  &44& &-&  &-\\
    &ForLion   &4.119& &64.968&  & -  &  & - &  &0.0253 &  &0.0247 &  & - &  &-\\
     &ForLion exact  grid2.5 &5.0 & &65.0&  & -  &  & - & & 45 &  &44&  &- &  &-\\
    &EW Bu-grid2.5 &5.0& &65.0&  & -  &  &- & &0.0255 &  &0.0248&  &-&  &- \\
    &EW Bu-grid2.5 exact&5.0& &65.0&  & -  &  &-  & & 45&  &44&  &-&  &-\\
     &EW ForLion  &4.185& &64.902&  & -  &  & - & &0.0251&  &0.0249&  &- &  &-\\
    &EW ForLion exact grid2.5 &5.0& &65.0&  & -  &  & - & & 45 &  &44&  &- &  &-\\

    \midrule

    17    & Bu-grid2.5 &17.5& &142.5&  & 145.0&  & -  & & 0.0268&  &0.0015& &0.0237&  &-\\
     &  Bu-grid2.5 exact&17.5& &145&  &-&  & - & & 48&  &45& &-&  &-\\
    &ForLion  & 17.181 & &144.491&  & -  &  & -&  &0.0265&  & 0.0254&  & - &  & -\\
       &ForLion exact grid2.5 &17.5 & &145&  & -  &  &  -& &47 &  &45&  &- &  &-\\
       &EW Bu-grid2.5 &17.5& &137.5&  & 140.0 &  &142.5  & &0.0266 &  &0.0005&  &0.0193&  &0.0052\\
      &EW Bu-grid2.5 exact&17.5& &140&  & - &  &- & & 47&  &45&  &-&  &-\\
      &EW ForLion  &17.437& &140.331&  & -  &  & - & &0.0262&  &0.0251&  &- &  &-\\
      &EW ForLion exact grid2.5 &17.5& &140.0&  & -  &  &-  & & 47 &  &45&  &- &  &-\\

    \midrule

    18   & Bu-grid2.5 &17.5& &20.0&  & 92.5&  & 95.0 & & 0.0280&  &0.0042& &0.0115&  &0.0176\\
    & Bu-grid2.5 exact&17.5& &20.0&  & 92.5&  & 95.0  & & 50&  &7& &23&  &29\\
    &ForLion   & 17.157 & & 19.653&  &93.331 &  & - &  & 0.0301&  & 0.0019 &  & 0.0298 &  & - \\
      &ForLion exact grid2.5  &17.5 & &92.5&  & -  &  & - & & 57 &  &53&  &- &  &-\\
         &EW Bu-grid2.5 &17.5& &92.5&  & 95.0  &  &- & &0.0312  &  &0.0123&  &0.0159&  &-\\
      &EW Bu-grid2.5 exact&17.5& &95&  & -  &  &- & & 56 &  &50&  &-&  &-\\
      &EW ForLion  &17.933& &94.525&  & -  &  & - & &0.0314&  & 0.0282&  &- &  &-\\
      &EW ForLion exact grid2.5 &17.5& &95.0&  & -  &  & - & & 56 &  &50&  &- &  &-\\  
\end{longtable}
}

\normalsize

\section{Robustness of Sample-based EW Designs}\label{sec:robustness_sampled_based}

In this section, we investigate the robustness of the sample-based EW D-optimal designs against different sets of parameter vectors through two  case studies. In the first case study on the minimizing surface defects experiment (see Section~\ref{sec:Msd_experiment}), we compare optimal designs based on different sets of parameter vectors sampled from a prior distribution. In the second case study on an emergence of house flies experiment \citep{itepan1995}, we consider optimal designs based on different set of parameters estimated from bootstrapped datasets. Both case studies show that the sample-based EW D-optimal designs are fairly robust in terms of relative efficiencies against different sets of parameter vectors.

\subsection{Case study on the minimizing surface defects experiment}
\label{sec:case_study_minimizing_surface}

We simulate five random sets of parameter vectors from the prior distributions listed in Table~S1 of \cite{huang2024forlion}, denoted by $\boldsymbol{\Theta}^{(r)} = \{\boldsymbol{\theta}_1^{(r)}, \ldots, $ $ \boldsymbol{\theta}_B^{(r)}\}$, with $B=1,000$, and $r\in \{1,2,3,4,5\}$. For each $r = 1,\ldots, 5$, we obtain the corresponding sample-based EW D-optimal design $\boldsymbol{\xi}^{(r)}$ for the minimizing surface defects experiment, which maximizes $f_{\rm SEW}(\boldsymbol{\xi}) = |\hat{E}\{{\mathbf F}(\boldsymbol{\xi}, \boldsymbol{\Theta}^{(r)})\}| = |B^{-1} \sum_{j=1}^B {\mathbf F}(\boldsymbol{\xi}, \boldsymbol{\theta}_j^{(r)})|$. The numbers of design points are 15, 13, 21, 19, and 17, respectively.  

For $l, r\in \{1,2,3,4,5\}$, we calculate the relative efficiencies of $\boldsymbol{\xi}^{(l)}$ with respect to $\boldsymbol{\xi}^{(r)}$ under $\boldsymbol{\Theta}^{(r)}$, that is,
\[
\left(\frac{|\hat{E}\{{\mathbf F}(\boldsymbol{\xi}^{(l)}, \boldsymbol{\Theta}^{(r)})\}|}{|\hat{E}\{{\mathbf F}(\boldsymbol{\xi}^{(r)}, \boldsymbol{\Theta}^{(r)})\}|}\right)^{1/p}
\]
with $p=10$. The relative efficiencies are shown in the following matrix
\[
\bordermatrix{
  & $l=1$ & $l=2$ & $l=3$ & $l=4$ & $l=5$ \cr
$r=1$ & 1.00000 & 0.99179& 0.99586& 0.99284& 0.98470 \cr
$r=2$ & 0.98258& 1.00000& 0.98508& 0.98134& 0.98117 \cr
$r=3$ & 0.99078 &0.98748 &1.00000 &0.98165 &0.98632\cr
$r=4$ & 0.97629 & 0.98376 &0.97910& 1.00000& 0.98501\cr
$r=5$ & 0.97947& 0.98850& 0.98804& 0.98842& 1.00000\cr
}
\]
with the minimum relative efficiency $0.97629$.

By applying the rounding algorithm (Algorithm~\ref{alg:appro to exact}) with merging threshold $L=1$ and the number of experimental units $n=1,000$, we obtain the corresponding exact designs, whose numbers of design points are still 15, 13, 21, 19, and 17, respectively. Their relative efficiencies are
\[
\bordermatrix{
  & $l=1$ & $l=2$ & $l=3$ & $l=4$ & $l=5$ \cr
$r=1$ & 1.00000 &0.99144& 0.99593& 0.99299& 0.98448 \cr
$r=2$ &0.98287 &1.00000 &0.98535& 0.98174& 0.98128\cr
$r=3$ & 0.99080& 0.98716& 1.00000& 0.98184& 0.98614\cr
$r=4$ &0.97634& 0.98359& 0.97916& 1.00000& 0.98488\cr
$r=5$ & 0.97959& 0.98836& 0.98797& 0.98846 &1.00000\cr
}
\]
with the minimum relative efficiency $0.97634$.

In terms of relative efficiencies, the sample-based EW D-optimal designs for this experiment are fairly robust against the different sets of parameter vectors.

\subsection{Case study on an emergence of house flies experiment}
\label{sec:case_study_emergence_house_flies}

In this section, we consider an emergence of house flies experiment, originally described by \cite{itepan1995}. In the original design, $n_i=500$ pupae out of $n=3,500$ were assigned to each of $m=7$ evenly distributed gamma radiation levels $x_i=80, 100, \ldots, 200$ (in units Gy). Having exposed to the gamma radiation of a device for a certain period of time, each pupa had one of three possible outcomes, namely unopened, opened but died, and completed emergence. The outcomes of the original experiment had been recorded (see, e.g., Table~1 in \cite{atkinson1999}). A continuation-ratio logistic model with non-proportional odds (npo) has been adopted in the literature \citep{atkinson1999, bu2020, ai2023locally, huang2024forlion} with parameters $\boldsymbol{\theta} \in \mathbb{R}^5$.
By bootstrapping the original observations for $B=1,000$ times, an EW D-optimal design $\boldsymbol{\xi}_{\rm Bu} = \{(80, 0.3120), (120, 0.2911), (140, 0.1087), (160, 0.2882)\}$ restricted to the original seven grid points was obtained by \cite{bu2020}, which is slightly more robust than the corresponding Bayesian D-optimal design.

In this paper, we use this experiment with a continuous design range ${\cal X} = [0, 200]$ to investigate the robustness of sample-based EW D-optimal designs against 
the randomness of the bootstrapping samples. For illustration purposes, we repeat five times the following simulation study independently. In each simulation, we first generate $B=100$ or $B=1,000$ bootstrapped parameter vectors by bootstrapping the original data and fitting the corresponding continuation-ratio npo model.
Then we apply Algorithm~\ref{alg:EW ForLion} with this set of bootstrapped parameter vectors, merging threshold $\delta = 0.8$, and converging threshold $\epsilon = 10^{-6}$ to obtain the corresponding sample-based EW D-optimal design $\boldsymbol{\xi}_i^B$, $i=1, \ldots, 5$. 
For $B=100$, the numbers of design points of $\boldsymbol{\xi}_i^B$ are $4, 4, 4, 5, 6$, respectively; while for $B=1,000$, the corresponding numbers $4, 5, 5, 5, 4$ seem more stable.

Similarly to Section~\ref{sec:case_study_minimizing_surface}, we evaluate the robustness of $\boldsymbol{\xi}_i^B, i=1, \ldots, 5$ by calculating their pairwise relative efficiencies. For $B=100$, the relative efficiencies are shown in the following matrix
\[
\bordermatrix{
  & $l=1$ & $l=2$ & $l=3$ & $l=4$ & $l=5$ \cr
$r=1$ & 1.00000   &  0.99989 & 1.00002   & 0.99972    & 0.99985   \cr
$r=2$ & 1.00000   & 1.00000  & 1.00002   &  0.99989   & 0.99986   \cr
$r=3$ & 0.99998   & 0.99990  &  1.00000  & 0.99976    &  0.99984  \cr
$r=4$ & 1.00009   & 1.00007  & 1.00012   &  1.00000   & 1.00000   \cr
$r=5$ &  1.00008  & 1.00003  & 1.00011   &  0.99995   &  1.00000  \cr
}\ ,
\]
ranging in $[0.99972, 1.00012]$ and indicating their performances are nearly identical.  As for $B=1,000$, the corresponding matrix of relative efficiencies is
\[
\bordermatrix{
  & $l=1$ & $l=2$ & $l=3$ & $l=4$ & $l=5$ \cr
$r=1$ & 1.00000   & 0.99971  & 0.99983   &  0.99986   & 0.99997   \cr
$r=2$ & 1.00026   & 1.00000  &  1.00010  & 1.00011    &  1.00021  \cr
$r=3$ & 1.00016   & 0.99988  &  1.00000  & 1.00002    &  1.00012  \cr
$r=4$ & 1.00014   & 0.99987  &  0.99997  & 1.00000    & 1.00011  \cr
$r=5$ & 1.00003   &  0.99975 &  0.99987  & 0.99989    &  1.00000  \cr
}\ ,
\]
ranging between $0.99971$ and $1.00026$, which are fairly close to $1$ as well.

We conclude that in this case the sample-based EW D-optimal designs are fairly stable against the choice of set of bootstrapped parameter vectors in terms of relative efficiencies.

\section{More Examples}\label{sec:Additional Examples: Generalized Linear Models}

In this section, we provide two more examples as applications of the proposed EW D-optimal designs, including an electrostatic discharge experiment example and a three-continuous-factor artificial example, both under generalized linear models.

\subsection{Electrostatic discharge experiment}\label{sec:Electrostatic discharge (ESD) experiment}

An electrostatic discharge (ESD) experiment was considered by \cite{lukemire2018} and \cite{huang2024forlion}, which involves a binary response, four discrete factors, namely  LotA, LotB, ESD, and Pulse, all taking values in $\{-1,1\}$, and one continuous factor Voltage in $[25,45]$. The list of factors, corresponding parameters, the nominal values adopted by \cite{lukemire2018}, and the prior distribution adopted by \cite{huang2024forlion} are summarized in Table~\ref{tab:ESD experiment variable}.  \cite{lukemire2018} employed a $d$-QPSO algorithm and derived a locally D-optimal approximate design with 13 support points for a logistic model $Y \sim \operatorname{Bernoulli}(\mu)$ with $ \text{logit}(\mu)=\beta_0+\beta_1\texttt{LotA}+\beta_2\texttt{LotB}+\beta_3\texttt{ESD}+\beta_4\texttt{Pulse}+\beta_5\texttt{Voltage}+\beta_{34}(\texttt{ESD}\times\texttt{Pulse})$. \cite{huang2024forlion} used their ForLion algorithm and found a slightly better locally D-optimal design in terms of relative efficiency, which consists of 14 design points. Both designs were constructed under the assumption that the nominal parameter values listed in Table~\ref{tab:ESD experiment variable} are the true values.

In practice, however, an experimenter may not possess precise parameter values before the experiment. As an illustration, we adopt the prior distribution used in \cite{huang2024forlion} for model parameters, i.e., independent uniform distributions (see Table~\ref{tab:ESD experiment variable}). 
By applying our EW ForLion algorithm on the same model with the specified prior distribution, we obtain the integral-based EW D-optimal approximate design, and present it on the left side of Table~\ref{tab:Electrostatic discharge (ESD) EW ForLion design}, which consists of 20 design points. Subsequently, we utilize our rounding algorithm with merging threshold $\delta_2=0.5$, the rounding level $L=0.1$, and the number of experimental units $n=100$ (as an illustration) to obtain an exact design, and present it on the right side of Table~\ref{tab:Electrostatic discharge (ESD) EW ForLion design}. Since two pairs of close points are merged,
and one point with zero allocation is removed, the exact design consists of only 17 design points.

\begin{table}[ht]
    \centering
    \captionsetup{skip=2pt} %
    \caption{Predictors and parameters for electrostatic discharge experiment}
    \footnotesize %
    {
    \renewcommand{\arraystretch}{0.5}
        \resizebox{\textwidth}{!}{
    \begin{tabular}{@{}lllll@{}} %
    \toprule
    & \textbf{Factor/Predictor} & \textbf{Factor} & \textbf{Parameter} & \textbf{Parameter} \\
    \textbf{Type} & \multicolumn{1}{l}{\textbf{(Parameter)}}  & \textbf{Levels/Range}  & \textbf{Prior Distribution} & \textbf{Nominal Value} \\
    \midrule
    & Intercept($\beta_0$)  &   & $U(-8,-7)$ & -7.5 \\
    Discrete & Lot A ($\beta_1$) & \{-1,1\} & $U(1,2)$ & 1.5 \\
    Discrete & Lot B ($\beta_2$) & \{-1,1\} & $U(-0.3,-0.1)$ & -0.2 \\
    Discrete & ESD ($\beta_3$) & \{-1,1\} & $U(-0.3,0)$ & -0.15 \\
    Discrete & Pulse ($\beta_4$) & \{-1,1\} & $U(0.1,0.4)$ & 0.25 \\
    Continuous & Voltage ($\beta_5$) & [25,45] & $U(0.25,0.45)$ & 0.35 \\
    Interaction & ESD $\times$ Pulse ($\beta_{34}$) &  & $U(0.35,0.45)$ & 0.4 \\
    \bottomrule
    \end{tabular}
    }
    }
    \label{tab:ESD experiment variable}
\end{table}

\begin{table}[ht]
    \centering
\captionsetup{skip=2pt} 
    \caption{EW D-optimal approximate design (left) and exact design (right, $n=100$) for electrostatic discharge experiment}
    {
    \renewcommand{\arraystretch}{0.5}
        \resizebox{\textwidth}{!}{
    \begin{tabular}{c|rrrrrr|c|rrrrrr}
\hline Support &   &   &  &   &   &   & Support  &  \multicolumn{6}{c}{$n=100, L=0.1, \delta_2=0.5$}  \\ 
point & Lot A & Lot B & ESD & Pulse & Voltage & $p_i$ (\%) & point & Lot A & Lot B & ESD & Pulse & Voltage  & $n_i$\\
\hline
 1  & -1  & -1  & -1  & 1   & 25.0000  & 8.75  &
 1  & -1  & -1  & -1  & 1   & 25.0  & 9 \\
 
2  & -1 & 1 & 1 & 1 & 25.0000 & 8.45  & 
2  & -1 & 1 & 1 & 1 & 25.0 & 8 \\

3  & -1 & -1 & -1 & -1 & 25.0000 & 8.48  &
3  & -1 & -1 & -1 & -1 & 25.0 & 8 \\

4  & 1 & 1 & -1 & 1 & 25.0000 & 6.21  &
4  & 1 & 1 & -1 & 1 & 25.0  & 6 \\

5  & -1 & 1 & 1 & -1 & 38.9047 & 2.14  &
5  & -1 & 1 & 1 & -1 & 38.9  & 8 \\

6  & 1 & 1 & -1 & -1 & 25.0000 & 3.56  & 
6  & 1 & 1 & -1 & -1 & 25.0 & 4\\

7  & -1 & -1 & 1 & 1 & 25.0000 & 8.56  & 
7  & -1 & -1 & 1 & 1 & 25.0 & 9 \\

8  & -1 & 1 & -1 & 1 & 25.0000 & 5.15 &
8  & -1 & 1 & -1 & 1 & 25.0 & 5 \\

9  & -1 & 1 & -1 & -1 & 25.0000 & 6.90  &
9  & -1 & 1 & -1 & -1 & 25.0  & 7 \\

10  & -1 & 1 & 1 & 1 & 33.1161 & 0.22 &
10  & -  & - & - & - & -       & - \\

11  & -1 & 1 & -1 & 1 & 35.4140 & 0.28  & 
11  & -1 & 1 & -1 & 1 & 35.4 & 4 \\

12  & 1 & 1 & 1 & -1 & 25.0000 & 4.43  &
12  & 1 & 1 & 1 & -1 & 25.0 & 4 \\

13  & 1 & 1 & 1 & 1 & 25.0000 & 0.90  &
13  & 1 & 1 & 1 & 1 & 25.0 & 1 \\

14  & -1 & 1 & -1 & 1 & 35.3993 & 3.52  &
14  & - & - & - & - & -  & - \\

15  & -1 & 1 & 1 & -1 & 25.0000 & 9.01  &
15  & -1 & 1 & 1 & -1 & 25.0  & 9 \\

16  & 1 & -1 & 1 & -1 & 25.0000 & 7.43  & 
16  & 1 & -1 & 1 & -1 & 25.0 & 7 \\

17  & -1 & 1 & -1 & -1 & 34.0238 & 1.57  & 
17  & -1 & 1 & -1 & -1 & 34.0  & 2 \\

18  & -1 & -1 & 1 & -1 & 37.1975 & 4.55  &
18  & -1 & -1 & 1 & -1 & 37.2  & 5 \\

19  & -1 & -1 & 1 & -1 & 25.0000 & 4.10  &
19  & -1 & -1 & 1 & -1 & 25.0 & 4 \\

20  & -1 & 1 & 1 & -1 & 38.9522 & 5.80  &
20  & - & - & - & - & - & - \\
  
\hline
\end{tabular}
    }
    }
    \label{tab:Electrostatic discharge (ESD) EW ForLion design}
\end{table}

\normalsize
To make comparisons, we randomly generate $N=10,000$ parameter vectors $\{\boldsymbol{\theta}_1, \ldots, $ $\boldsymbol{\theta}_N\}$ from the prior distributions listed in Table~\ref{tab:ESD experiment variable}. For each $j \in \{1, \ldots, N\}$ and each $\boldsymbol{\xi}$ under comparison, we compute  $|{\mathbf F}(\boldsymbol{\xi}, \boldsymbol{\theta}_j)|^{1/p} = |{\mathbf X}_{\boldsymbol{\xi}}^T {\mathbf W}_{\boldsymbol{\xi}}(\boldsymbol{\theta}_j) {\mathbf X}_{\boldsymbol{\xi}}|^{1/p}$ with $p=7$ in this case. In Figure~\ref{fig:side_by_side}, we present the resulting values as frequency polygons for our EW ForLion design, the locally D-optimal design (``ForLion'') obtained by \cite{huang2024forlion}, and the $d$-QPSO design (``PSO'') obtained by \cite{lukemire2018}. Compared with the ForLion or PSO design, the frequency polygon for EW ForLion design shows a slightly higher density with low objective functions values, but also a noticeably increased frequency in the higher-value range. Actually, in Table~\ref{tab:ESD comparison}, it clearly shows that the overall mean and median objective function values for EW ForLion designs (approximate and exact designs listed in Table~\ref{tab:Electrostatic discharge (ESD) EW ForLion design}) are higher than ForLion and PSO designs.

\begin{figure}[ht]
  \centering
  \begin{subfigure}{0.45\textwidth}
    \centering
    \includegraphics[width=\textwidth]{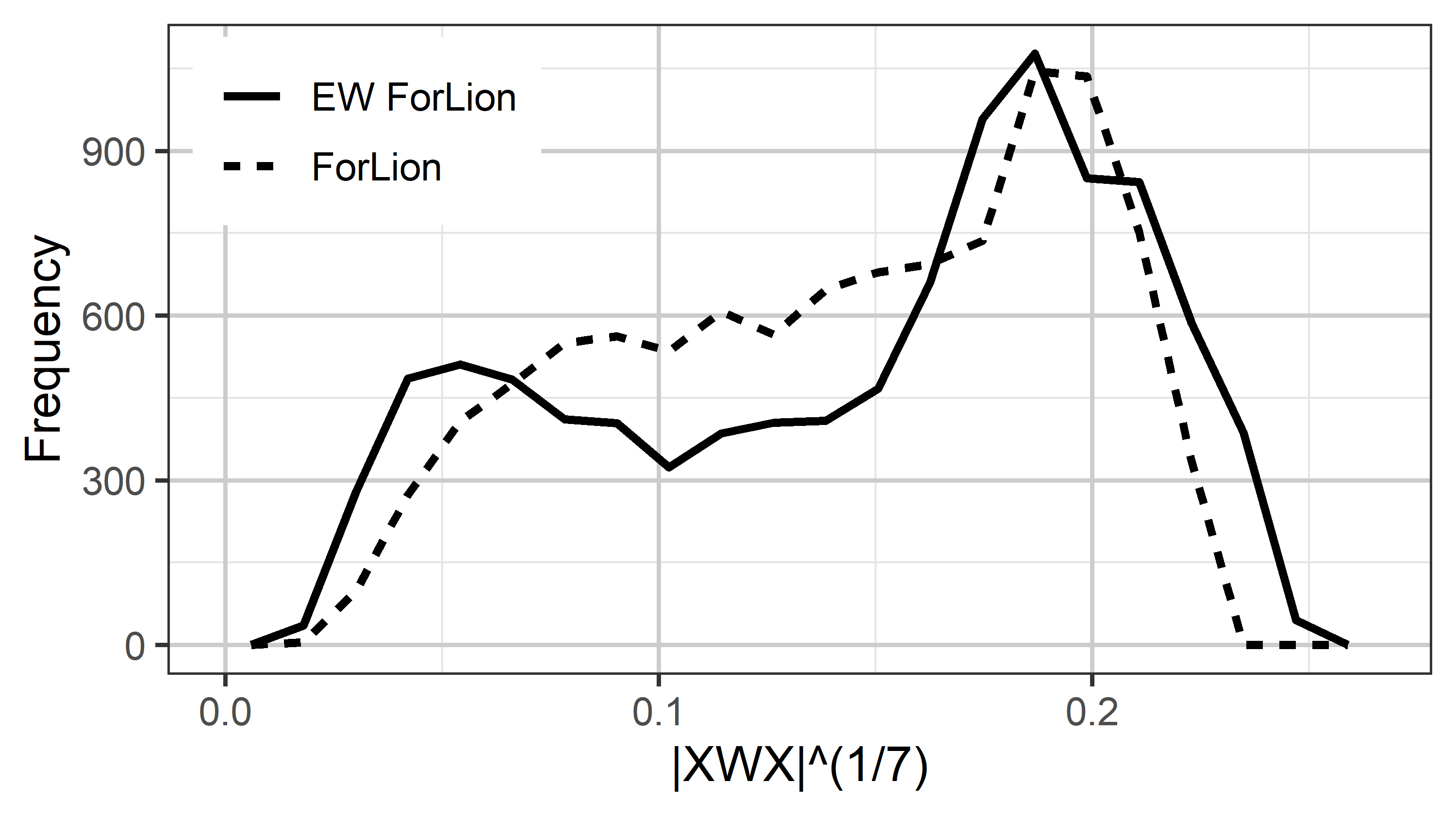} %
    \label{fig:sub1}
  \end{subfigure}
  \hfill
  \begin{subfigure}{0.45\textwidth}
    \centering
    \includegraphics[width=\textwidth]{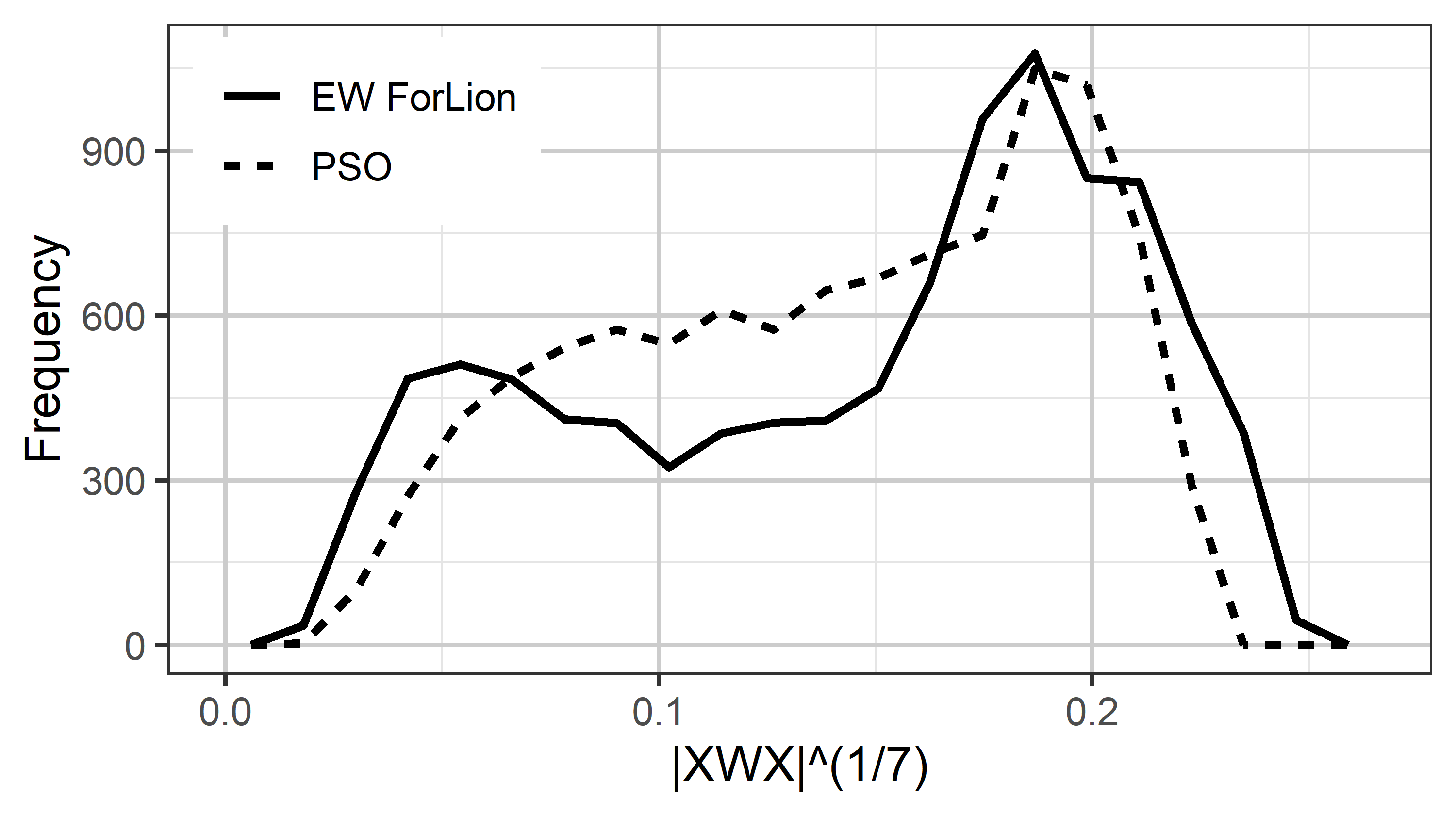} %
    \label{fig:sub2}
  \end{subfigure}\\ [-3ex]
  \caption{Frequency polygons of objective function values based on 10,000 simulated parameter vectors for electrostatic discharge experiment}
  \label{fig:side_by_side}
\end{figure}

\normalsize
\begin{table}[ht]
    \centering
\captionsetup{skip=2pt} 
    \caption{Mean and median objective function values based on 10,000 simulated parameter vectors for electrostatic discharge experiment}
    {
   \renewcommand{\arraystretch}{0.5}   
    \begin{tabular}{ccc}
\hline Design & Mean $|{\mathbf X}^T {\mathbf W} {\mathbf X}|^{1/7}$& Median $|{\mathbf X}^T {\mathbf W} {\mathbf X}|^{1/7}$ \\
\hline EW ForLion & 0.1467647 & 0.1642085\\
 EW ForLion exact & 0.1467639 &  0.1640990\\
 ForLion &  0.1425194 & 0.1498971 \\
PSO & 0.1418699 & 0.1487820\\
\hline
\end{tabular}
    }
    \label{tab:ESD comparison}
\end{table}

\subsection{A three-continuous-factor example under a GLM}\label{sec:Three-continuous-factor}

This example refers to Example~4.7 of \cite{stufken2012} on a logistic regression model with three continuous factors, represented as ${\rm logit} (\mu_i)=\beta_0+\beta_1 x_{i 1}+\beta_2 x_{i 2}+\beta_3 x_{i 3}$~, where the factors were originally defined as $x_{i 1} \in [-2,2]$, $x_{i 2} \in [-1,1]$, and $x_{i 3} \in (-\infty, \infty)$, and a locally D-optimal design was obtained with respect to nominal parameter values $(\beta_0, \beta_1, \beta_2, \beta_3) = (1, -0.5, 0.5, 1)$. 

In this section, for illustration purposes,  we reconsider this example by letting $x_{i 3} \in [-3, 3]$ (see also Example~2 in \cite{huang2024forlion}) and assuming independent priors
$\beta_0 \sim N(1, \sigma^2)$, $\beta_1 \sim N(-0.5, \sigma^2)$, 
$\beta_2 \sim N(0.5, \sigma^2)$, 
and $\beta_3 \sim N(1, \sigma^2)$,  with $\sigma=1$. Note that the normal priors actually imply unbounded ranges for $\beta_i$'s, which seems violating the boundedness condition in Theorem~\ref{thm:GLM design points}. Nevertheless, from a practical point of view, a normal $N(\mu, \sigma^2)$ prior is essentially the same as its truncated version on $[\mu - s\sigma,\ \mu + s\sigma]$ with, for example, $s=5$. By using our EW ForLion algorithm anyway, we obtain an EW D-optimal approximate design, which costs 783 seconds. We further apply our rounding algorithm to the approximate design with $L=0.0001$ and $n=100$, and obtain an exact design in just 0.45 second.  Remarkably, both designs consist of 7 design points, as detailed in Table~\ref{tab:Three-continuous-factor EW ForLion design}, which are different from the 8-point design recommended by \cite{stufken2012} with unconstrained $x_{i3}$~.

\begin{table}[ht]
    \centering
\captionsetup{skip=2pt} 
    \caption{EW D-optimal approximate design (left) and exact design (right, $n=100$) for the three-continuous-factor example}
    {
    \renewcommand{\arraystretch}{0.5}
        \resizebox{0.8\textwidth}{!}{
    \begin{tabular}{c|cccr|c|cccr}
\hline Support &   &   &   &   &Support  &   &   &   &     \\
point & $x_{1}$ & $x_{2}$ & $x_{3}$ & $p_i$ (\%) & point & $x_{1}$ & $x_{2}$ & $x_{3}$ & $n_i$\\
\hline
 1  & -2  & -1  & -3  & 7.231  &
 1  & -2  & -1  & -3  & 7 \\
 
2  & 2  & -1  & -3  & 20.785  &
2  & 2  & -1  & -3  & 21\\

3  & -2  & 1  & -1.8  & 19.491  & 
3  & -2  & 1  & -1.8  & 19 \\

4  & 2  & 1  & 3  & 2.718  & 
4  & 2  & 1  & 3  & 3 \\

5  & 2  & 1  & -0.3321  & 18.870  & 
5  & 2  & 1  & -0.3321  & 19 \\

6  & -2  & -1  & -0.0867  & 10.951  &
6  & -2  & -1  & -0.0867  & 11 \\

7  & 0.9467  & -0.9969  & 2.9932  & 19.954  & 
7 &0.9467  & -0.9969  & 2.9932  & 20\\
\hline
\end{tabular}
    }}
    \label{tab:Three-continuous-factor EW ForLion design}
\end{table}

We first compare our designs (``EW ForLion'' and ``EW ForLion exact-0.0001'') with EW D-optimal designs obtained by applying \cite{bu2020}'s EW lift-one and exchange algorithms to grid points of continuous factors by discretizing them into 2, 4, 6, 8, or 10 evenly distributed grid points within the corresponding design regions. When the continuous variables are discretized into 10 evenly distributed grid points, \cite{bu2020}'s EW lift-one and exchange algorithms no longer work due to computational intensity. The summary information of these designs are presented in Table~\ref{tab:Robust Designs for the Three-continuous-factor}, from which we can see that our EW ForLion algorithm produces a design that is more robust against parameter misspecifications and also requires fewer design points.

\begin{table}[ht]
    \centering
    \caption{Robust designs for three-continuous-factor example}
    {
    \renewcommand{\arraystretch}{0.5}
        \resizebox{0.8\textwidth}{!}{
    \begin{threeparttable}
          \begin{tabular}{cccccrccccr}
    \toprule
    Design &  &$m$& & & Time (s) & & &$|E\{{\mathbf F}(\boldsymbol{\xi}, \boldsymbol\Theta)\}|$ &   &Relative Efficiency \\
    \midrule
   EW Grid-2 & & 7 &  &  & 0.06s &  &  & 0.0009563&   &90.711\%\\
    EW Grid-2 exact &  & 7 &  &  & 0.11s &  &  &0.0009559&   &90.700\%\\

   EW Grid-4&   &8 &  &  &  4.33s &  &  &0.0012320&   &96.641\%\\
     EW Grid-4 exact&  &10 &  &  &  5.03s &  &  &0.0012314&  &96.629\%\\

 EW Grid-6&   &8 &  &  &  13.87s &  &  &0.0013278&   &98.469\%\\
     EW Grid-6 exact&  &8 &  &  &  19.20s &  &  &0.0013271&  &98.455\%\\
     
     EW Grid-8&   &8 &  &  &  37.58s &  &  & 0.0013562&   &98.989\%\\
     EW Grid-8 exact&  &9 &  &  &  46.42s &  &  &0.0013553&  &98.973\%\\

      EW Grid-10&   &- &  &  &  - &  &  & -&   &-\\
     EW Grid-10 exact&  &- &  &  &  - &  &  &-&  &-\\
   EW ForLion&  & 7 &  &  &  782.05s &  &  & 0.0014124&   &100.000\%\\
      EW ForLion exact-0.0001&   & 7 &  &  &   0.45s &  &  & 0.0014119 &  &99.991\%\\
    \bottomrule
    \end{tabular}
     \begin{tablenotes}
         \footnotesize
          \setlength{\baselineskip}{8pt} %
         \item Note: Relative Efficiency = $(|E\{{\mathbf F}(\boldsymbol{\xi}, \boldsymbol\Theta)\}|/|E\{{\mathbf F}(\boldsymbol{\xi}_{\rm EW\ ForLion}, \boldsymbol\Theta)\}|)^{1/p}$ with $p=4$; ``-'' indicates ``unavailable'' due to computational intensity.
     \end{tablenotes}
    \end{threeparttable}
      }}
    \label{tab:Robust Designs for the Three-continuous-factor}
\end{table}

For illustration purposes, we also compare our EW ForLion design with Bayesian and minimax D-optimal designs obtained via the R package {\tt ICAOD} (version 1.0.1, \cite{masoudi2022icaod}), which was developed for nonlinear or generalized linear models (but not for multinomial logistic models). As this package requires a bounded parameter space, for each parameter $\beta_i$ we adopt the corresponding finite interval $[\mu_i - 3\sigma, \mu_i + 3\sigma]$, which covers about $99.73\%$ of the range in a probability sense. As the package also requires a prespecified number of design points, we consider two scenarios: {\it (i)} the number of design points $k=7$, namely ``minimax k7'' and ``Bayesian k7'', which matches our EW ForLion design; and {\it (ii)} $k=216$, namely ``minimax k216'' and ``Bayesian k216'', which matches the initial number of design points used by the EW ForLion algorithm. The performance of each design is evaluated under three criteria (in our notations): EW D-efficiency $|\int_{\boldsymbol{\theta}} {\mathbf F}(\boldsymbol{\xi}, \boldsymbol{\theta}) Q(d\boldsymbol{\theta})|$, minimax D-efficiency $\min_{\boldsymbol{\theta}} \log|{\mathbf F}(\boldsymbol{\xi}, \boldsymbol{\theta})|$, and Bayesian D-efficiency $\int_{\boldsymbol{\theta}} \log|{\mathbf F}(\boldsymbol{\xi}, \boldsymbol{\theta})| Q(d\boldsymbol{\theta})$. The results are summarized in Table~\ref{tab:Comparison_efficiencies_three_continuous_factors}. Under EW D-efficiency, the EW ForLion design is no doubt the best, the Bayesian k7 design achieves 88.85\%, and minimax designs are not satisfactory. Under minimax D-efficiency with respect to the minimax k216 design, not only the EW ForLion, Bayesian k7 designs perform poorly, but the minimax k7 design is not satisfactory. In terms of Bayesian D-efficiency, both the EW ForLion and minimax k216 designs attain an efficiency higher than 80\%, while the minimax k7 design is not satisfactory.

Overall in this case the EW ForLion design is a good surrogate for Bayesian D-optimal designs, but not for minimax D-optimal designs. Compared with Bayesian D-optimal designs, it is computationally much more convenient. 
Actually, the Bayesian k7 design costs about 6 hours, while the Bayesian k216 design is not available in Table~\ref{tab:Comparison_efficiencies_three_continuous_factors} because it would cost more than 20 days.
As for minimax D-optimal designs, the minimax k7 design is not satisfactory, while the minimax k216 design contains so many more distinct design points. Actually, the three smallest weights out of 216 in the minimax k216 design are 0.0011\%, 0.0051\%, and 0.0084\%, respectively, which are not zeros. 
On the contrary, the EW ForLion design only needs 7 distinct design points, which can save lots of time and cost in practice \citep{huang2024forlion}.

\begin{table}[ht]
    \centering
    \captionsetup{skip=2pt}
    \caption{Comparison between different robust designs for the three-continuous-factor example}
    \label{tab:Comparison_efficiencies_three_continuous_factors} 

    \begin{threeparttable}
        \begin{tabular*}{\textwidth}{@{\extracolsep{\fill}} l r r r r r} 
        \toprule
        Designs & $m$ & Time (s)&eff.EW  & eff.minimax & eff.Bayesian \\
        \midrule
         minimax k7    & 7 & 441.43& 48.43$\%$  &64.12$\%$ & 63.48$\%$ \\
        Bayesian k7   & 7  & 21394.99 & 88.85$\%$   & 7.93$\%$ & 100$\%$ \\
        minimax k216    & 216  & 7243.43& 65.57$\%$   & 100$\%$ & 83.78$\%$ \\
        Bayesian k216   & -  & -    & - & - & -\\
         EW ForLion & 7 & 782.05 & 100$\%$      & 0.57$\%$& 82.31$\%$ \\
        \bottomrule
        \end{tabular*}
        \begin{tablenotes}[flushleft]
          \footnotesize
          \setlength{\baselineskip}{8pt}
          \item Note: $m$ stands for the number of distinct design points; eff.EW = $(|\int_{\boldsymbol{\theta}} {\mathbf F}(\boldsymbol{\xi}, \boldsymbol{\theta}) Q(d\boldsymbol{\theta})|/|\int_{\boldsymbol{\theta}} {\mathbf F}(\boldsymbol{\xi}_{\rm EW\ ForLion}, \boldsymbol{\theta}) Q(d\boldsymbol{\theta})|)^{1/p}$ with $p=4$; eff.minimax = $\exp((\min_{\boldsymbol{\theta}} \log|{\mathbf F}(\boldsymbol{\xi}, \boldsymbol{\theta})|-\min_{\boldsymbol{\theta}} \log|{\mathbf F}(\boldsymbol{\xi}_{\rm minimax\ k216}, \boldsymbol{\theta})|)/p)$; eff.Bayesian = $\exp((\int_{\boldsymbol{\theta}} \log|{\mathbf F}(\boldsymbol{\xi}, \boldsymbol{\theta})| Q(d\boldsymbol{\theta})-\int_{\boldsymbol{\theta}} \log|{\mathbf F}(\boldsymbol{\xi}_{\rm Bayesian\ k7}, \boldsymbol{\theta})| Q(d\boldsymbol{\theta}))/p)$; and ``-'' indicates ``unavailable'' due to computational intensity. 
        \end{tablenotes}

    \end{threeparttable}
\end{table}

\end{document}